\numberwithin{equation}{section}
\newcommand\blfootnote[1]{%
  \begingroup
  \renewcommand\thefootnote{}\footnote{#1}%
  \addtocounter{footnote}{-1}%
  \endgroup
}
\renewcommand{\epsilon}{\varepsilon}
\DeclareMathOperator*{\argmin}{argmin}
\DeclareMathOperator*{\Var}{Var}
\newcommand{\beq}{\begin{equation}}
\newcommand{\eeq}{\end{equation}}
\newcommand{\beqa}{\begin{equation} \begin{aligned}}
\newcommand{\eeqa}{\end{aligned} \end{equation}}
\newcommand{\beqas}{\begin{equation*} \begin{aligned}}
\newcommand{\eeqas}{\end{aligned} \end{equation*}}
\newcommand{\bit}{\begin{itemize}}
	\newcommand{\eit}{\end{itemize}}
\newcommand{\bmat}{\begin{bmatrix}}
	\newcommand{\emat}{\end{bmatrix}}
\theoremstyle{definition}\newtheorem{problem}{Problem}[section]
\theoremstyle{definition}
\theoremstyle{remark}\newtheorem{remark}{Remark}[section]
\theoremstyle{definition}
\theoremstyle{plain}\newtheorem{theorem}[problem]{Theorem}
\theoremstyle{plain}\newtheorem{lemma}[problem]{Lemma}
\theoremstyle{plain}\newtheorem{proposition}[problem]{Proposition}
\theoremstyle{plain}
\theoremstyle{plain}
\newtheorem*{assumption*}{\assumptionnumber}
\providecommand{\assumptionnumber}{}
\newenvironment{assumption}[1]
{%
  \renewcommand{\assumptionnumber}{Assumption #1}%
  \begin{assumption*}%
    \protected@edef\@currentlabel{#1}%
  }
  {%
  \end{assumption*}
}
\newenvironment{assumption2}[2]
{%
  \renewcommand{\assumptionnumber}{Assumption #1$_{#2}$}%
  \begin{assumption*}%
    \protected@edef\@currentlabel{#1}
  }
  {%
  \end{assumption*}
}
\newcommand{\GG}{{\mathbb G}}
\newcommand{\PP}{{\mathbb P}}
\newcommand{\QQ}{{\mathbb Q}}
\newcommand{\RR}{{\mathbb R}}
\newcommand{\EE}{\mathbb{E}}
\newcommand{\bs}[1]{\boldsymbol{#1}}
\newcommand{\mc}[1]{\mathcal{#1}}
\newcommand{\wh}{\widehat}
\NewDocumentCommand\DDnTwo{O{}}{
  \ifstrempty{#1}{
    \mathbb{D}_{n, t_1, t_2}
  }{
    \mathbb{D}_{n, {#1}}
  }
}
\NewDocumentCommand\twoArgs{O{}}{
  \ifstrempty{#1}{
    t_1, t_2
  }{
    {#1}
  }
}
\NewDocumentCommand\oneArg{O{}}{
  \ifstrempty{#1}{
    t
  }{
    {#1}
  }
}
\global\long\def\inv#1{\frac{1}{#1}}
\newcommand{\lp}{\left(} 
  \newcommand{\rp}{\right)}
\newcommand{\lb}{\left\{} 
  \newcommand{\rb}{\right\}}
\newcommand{\ls}{\left[} 
  \newcommand{\rs}{\right]}
\newcommand{\KR}[1]{{\noindent  \color{red}#1}}
\newcommand{\bb}[1]{\mathbb {#1}}
\newcommand\KXI{\mathcal{K}\kern-.09em\Xi_n} 
\newcommand\KKXI{\mathcal{K}\kern-.09em\mathcal{K}\kern-.09em\Xi_n} 
\newcommand\KF{\mathcal{K}\kern-.09em \mathcal{F}} 
\newcommand{\wt}{\widetilde}
\newcommand{\one}{\mathbbm{1}}
\newcommand{\projf}{\Pi f}
\newenvironment{longform}{%
  \par \vspace{.2cm} \color{green} { \bf  Notes to self:}%
}{%
  $\blacktriangle$  \vspace{.1cm}
}
\title{A nonparametric doubly robust test for a continuous treatment
  effect}
\author{Charles R.\ Doss$\dag$, Guangwei Weng$\dag$,  Lan Wang, \\
  Ira Moscovice, and Tongtan Chantarat}
\begin{document}

\maketitle

  \begin{abstract}

The vast majority of literature on evaluating the significance of a treatment effect based on observational data has been confined to discrete treatments.  These methods are not applicable to drawing inference for a continuous treatment, which arises in many important applications.  To adjust for confounders when evaluating a continuous treatment, existing inference methods often rely on discretizing the treatment or using (possibly misspecified) parametric models for the effect curve. Recently, Kennedy et al. (2017) proposed nonparametric doubly robust estimation for a continuous treatment effect in observational studies.  However, inference for the continuous treatment effect is a harder problem. To the best of our knowledge, a completely nonparametric doubly robust approach for inference in this setting is not yet available.  We develop such a nonparametric doubly robust procedure in this paper for making inference on the continuous treatment effect curve.  Using empirical process techniques for local U- and V-processes, we establish the test statistic's asymptotic distribution. Furthermore, we propose a wild bootstrap procedure for implementing the test in practice.  We illustrate the new method via simulations and a study of a constructed dataset relating the effect of nurse staffing hours on hospital performance.  We implement our doubly robust dose response test in the R package DRDRtest on CRAN.

  \end{abstract}

\section{Introduction}
\label{sec:intro}



\blfootnote{$\dag$Guangwei Weng and Charles R.\ Doss are jointly first authors on the paper.} We are interested in
hypothesis testing for a continuous (causal) treatment effect based on observational data.
The fundamental challenge of causal inference
with observational data
is 
to account for confounding variables,  which are
variables related to both the outcome and the treatment.
In the presence of confounding variables, it is well known that 
naive regression modeling does not lead to an unbiased 
estimate for the causal effect curve.
While continuous treatments are common in many important applications,
much of the existing literature on  
inference for a treatment effect from observational data 
has been focused on discrete treatments.
Relatively few methods are available for 
testing hypotheses about a continuous treatment effect curve.

Under the popular `no unmeasured
confounders' assumption,  
there are two broad directions to adjust for the confounding variables. 
A procedure can start by estimating
the outcome regression function, a function that relates the outcome to the treatment and the confounders, and then 
this can be weighted
appropriately to yield an estimate of the causal estimand. 
Alternatively, a procedure can start by
estimating the propensity score function, a function that relates the treatment
to confounders,
and then allows a variety of methods
to be implemented  to 
estimate the causal estimand.   
For instance, \cite{Imbens:2004cz} and \citet{Hill:2011bn} 
model only the outcome regression function; while
\citet{Hirano:2004in, Imai:2004gd, Galvao:2015ju}
model only the propensity score function.
Following the terminology of semiparametric
statistics, the outcome regression function and the propensity score function
are often referred to as nuisance parameters
(possibly infinite-dimensional).
In the aforementioned approaches, an incorrectly specified model 
for either nuisance parameter would lead to inconsistent estimates
of the treatment effect curve;
and, regardless, the conclusions are  susceptible to the curse of dimensionality: the rate of convergence of the estimator of
the treatment effect curve 
is the same as that of the estimator of the nuisance parameter, which may be high dimensional.

In the so-called doubly robust approach, widely used for estimating the 
average treatment effect in the
discrete treatment setting, one
estimates both nuisance parameters and then combines them.  
The term ``doubly
robust'' means that only one of the two nuisance parameters
needs to be estimated consistently to achieve consistent estimation 
of the causal treatment effect. Thus even if the model for one of the two nuisance parameters is misspecified, the causal estimand can still be estimated consistently if the other nuisance parameter model is correctly specified;
alternatively/similarly, 
the rate of the leading error term in estimating the causal estimand is determined by the
product of the error terms for estimating the two nuisance parameters,
allowing for efficient
estimation.

If one wishes to apply a doubly robust test in the continuous treatment setting, the simplest and likely the standard approach
would be to discretize the treatment and use the methodology for discrete
treatments \citep{robins2001comment,VanDerLaan:2003uu}.  Unfortunately, this
could result in misleading estimates, and can lead to possibly massive loss of power.
Also, in many applications maintaining the treatment as a continuous variable is important for post-analysis interpretation.
(As mentioned above,
\cite{Galvao:2015ju} develop inference procedures for the dose response curve but require good, possibly parametric, estimators for the propensity score (see their assumptions N.1, G.IV).)
If one
wishes to use doubly robust methods without discretization, then
\citet{Robins:2000gv} and \citet{Neugebauer:2007fg}
allow this, but requires specifying a  parametric model for the unknown causal treatment
effect curve. If the parametric model is not plausible, then the results can be unreliable. 
Recently, a nonparametric doubly robust estimation method has been proposed
(\citet{Kennedy:2017cq}),
allowing for greater flexibility in modeling the nuisance parameters.
Although the rates of nonparametrically estimating 
each nuisance function may be slower than $\sqrt{n}$,
the rate of
estimating the causal estimand 
may be much faster than that for estimating
either of the individual nuisance parameters (by virtue of the product rate discussed earlier),
while alleviating the difficulty of model specification
 for nuisance parameters.
 
 To the best of our knowledge, a completely nonparametric
doubly robust approach for inference for a continuous treatment effect 
is not yet available. 
It is worth noting that ``double robustness'' for  estimation does not automatically warrant ``double robustness'' for inference.
See, for instance, related discussions in \citet{van2014targeted} and \citet{Benkeser_Carone_Laan_Gilbert_2017}.





In this paper, we develop a doubly robust procedure for testing
the null hypothesis
that the treatment effect curve is constant. 
To do so, we introduce a test statistic based on comparing the
integrated squared distance from an estimate under the alternative to the null estimate.
We derive the limit distribution of the proposed test statistic.
In order to  implement the hypothesis test, the unknown parameters in the limit distribution must be estimated.
A natural approach is the bootstrap \citep{Efron:1993dc}.
Unfortunately, the naive bootstrap turns out to be inconsistent.
We propose a wild bootstrap procedure, which
provides provable guarantees for estimating
the limit distribution, and thus allows the test to be implemented.
Code that implements our doubly robust dose response test is available
in the R package `DRDRtest' on CRAN.

 Our main contribution is thus
a new doubly robust test procedure
which is consistent (in level and against fixed alternatives)
as long as at least one of the two nuisance parameters is specified correctly.
It requires only nonparametric assumptions on the nuisance parameters
unlike
\citet{Robins:2000gv} and \citet{Neugebauer:2007fg}.
The proposed test is doubly robust in the sense that 
the $p$-values we generate are reliable (uniformly distributed under the null hypothesis) even if one of the nuisance parameter models is misspecified. 
Some may
  argue that one may use machine learning to estimate the nuisance parameters to alleviate model misspecification.  Although this is true to some degree, popular machine learning
  methods such as random forests and neural networks
  are not immune from model misspecification;
  without structural assumptions (e.g., sparsity, additive structure) on the underlying model,
  they may have poor estimation accuracy (very slow rates of convergence).
  Their practical implementations also often require multiple tuning parameters.

Our statistic is inspired by the test of 
\cite{Hardle:1993ih} (also \citet{dette2001nonparametric}) which were developed in the non-causal setting.
Comparing with the non-causal setting,
our theory is significantly more complicated due to the
two infinite-dimensional nuisance parameters that are present in the causal inference setting. We introduce new empirical processes techniques for local U- and V-processes
to handle this complexity, which may be of independent interest for
nonparametric causal inference.




The rest of the paper is organized as follows. %
In Subsections~\ref{sec:nonp-hypoth-test} and \ref{sec:disc-recent-liter}, immediately following this one, we provide further discussions on related literature 
on nonparametric hypothesis testing and on
the continuous treatment effect setting, respectively. %
Section~\ref{sec:setup-method} introduces the setup, notation, the new testing procedure, and underlying assumptions. %
In Section~\ref{sec:main-results}, we present the main results. 
Section~\ref{sec:simulation} presents simulation studies and in Section~\ref{sec:real-analysis}
we present analysis of a dataset relating nurse staffing to hospital effectiveness.  

\subsection{Literature on nonparametric hypothesis testing}
\label{sec:nonp-hypoth-test}
The simpler, non-causal, problem of
hypothesis testing about a (non-causal) regression function when the
alternative is a large nonparametric class has a very large literature already.
There are many different approaches to this general problem;
to start, one must decide on the definition of the nonparametric alternative class.
The full, unrestricted, nonparametric alternative class is generally tested against by using a test based on a primitive of the function of interest: if $m(\cdot)$ is the regression function then $M(a):= \int_{-\infty}^a m(x)dx$ is the primitive.
This approach is possibly more familiar to readers in the density/distribution testing setting where $m$ and $M$ would be replaced by a density and cumulative distribution function, respectively.
Such tests based on $M$ are ``omnibus'' in the sense that in theory they have power approaching one against any fixed alternative.  However, for that theory to be relevant with certain fixed alternatives, extremely large sample sizes may be needed; or put another way, there are many alternatives that such tests for practical sample sizes  are not well powered.

Another set of procedures is based on taking the alternative class to be some sort of smoothness class (e.g., a H\"older, Sobolev, or Besov class \citep{Nickl_2016}).
Confusion may arise because some tests, e.g.\ those based on primitive functions, may have power against local alternatives converging at rate $n^{-1/2}$ whereas tests based on smoothness assumptions often require local alternatives to converge at a slower rate.  However, this is a case where the (local) rates of convergence can be misleading.
Rather than local rates, one can use global minimax rates of convergence over
a given class or classes to compare procedures.
\cite{Ingster:1993tv,Ingster:1993ub,Ingster:1993uo} studies minimax rates in nonparametric hypothesis testing problems (in a white noise model and in density estimation, both with simple null models).
We do not recount all the results here,
but note that
in general
tests based on primitives will not attain minimax optimal rates against smoothness-based alternatives 
(\citep[Section 2.5]{Ingster:1993tv}, \cite{Pouet_2001}). 
The minimax results are not just theoretical:
\cite{Eubank_LaRiccia_1992}  provide both theory and simulation results demonstrating that (in the context of density estimation) 
for any fixed sample size there are alternative sequences such that smoothness-based methods are more powerful than primitive-based methods (the Cram\'er-von Mises statistic in this case) even though the latter has a local $n^{-1/2}$ rate.

We develop our test statistic based on that of
\cite{Hardle:1993ih}
which is a smoothness-based test in the non-causal setting;
this allows us to develop a test that has power in all directions,
and to develop a test that is doubly robust.

One of the  difficulties in smoothness-based testing is
the issue of bias.
Nonparametric smoothness-based estimates generally have nontrivial bias which must be accounted for and the estimation of which entails complications.
In our particular testing setting, actually there is no bias under the null (since our null hyptohesis is the class of constant functions), so bias is not a major issue in the usual way. The bias will affect the estimator under the alternative and so will affect the power.
A related issue is that in nonparametric testing, the asymptotic distributions of many test statistics
have the property that their bias (mean) is of a larger order of magnitude than their variance.
One consequence of this for us is that it makes the asymptotically negligible error terms in the analysis of our
test statistic (in the causal setting of the current paper)  much more
complicated than they would otherwise be; it turns out that the large bias of
the main term gets multiplied by other error terms (after expanding a square)
and  this requires extra mathematical analysis. 


\subsection{Literature on continuous treatment effects}
\label{sec:disc-recent-liter}
In the last few years there has been significant and increasing interest in causal inference with continuous treatment effects;
this includes interest in the setting of optimal treatment regimes
\citep{Kallus:2018up},
and in specific scientific areas
(e.g., \cite{Kreif:2015cp} in the health sciences).

We briefly discuss here several statistics papers that have built theory and/or
methods related to causal effect estimation and/or inference in the presence
of a continuous treatment (based on observational data).  We start with
\citet{Kennedy:2017cq}, on which other works, including the present paper,
build.  \citet{Kennedy:2017cq} have developed a method for efficient doubly
robust estimation of the treatment effect curve.  
Denote the outcome
regression function by $\mu$ or $\mu_0$, and denote the propensity score function
by $\pi$ or $\pi_0$.
Their method is
  based on a pseudo-outcome $\xi \equiv \xi(\bs Z; \pi, \mu)$,
  which depends on the sample point $\bs Z$, and on the nuisance functions
  $\pi$, $\mu$.  The pseudo-outcome $\xi$ has the key double robustness property that if {\it either} $\pi = \pi_0$ or $\mu = \mu_0$, then $\EE( \xi( \bs Z; \pi, \mu) | A = a)$ is equal to the treatment effect curve (at the treatment value $a$).
  The estimation procedure of   \citet{Kennedy:2017cq} is then a natural two-step procedure: (1) estimate the nuisance functions $(\pi_0, \mu_0)$ by some estimators $(\wh \pi, \wh \mu)$ which the user can choose as they wish and
construct (observable) pseudo-outcomes $\wh \xi_i$
(which approximate $\xi_i$ and depend on $\wh \pi,$ $\wh \mu$), and (2)
regress the pseudo-outcomes on $A$ using some nonparametric method (e.g., local linear regression).  As we described above, the error term from the nuisance parameter estimation is given by the product of the error term for estimating $\pi_0$ and for estimating $\mu_0$, so is smaller than either, partially alleviating the curse of dimensionality.



Several works have now made use of the pseudo-outcome approach of
\citet{Kennedy:2017cq}, or similar approaches.
\citet{Westling_Gilbert_Carone_2020, Semenova:2017uc} use the pseudo-outcomes
of \citet{Kennedy:2017cq} with alternative estimation techniques, and \citet{Colangelo:2020tt, su2019non} use similar
pseudo-outcomes (and study particular nuisance estimators).
Like
\citet{Kennedy:2017cq},
\citet{Westling_Gilbert_Carone_2020} also develop a doubly robust estimator of a continuous treatment effect curve; they develop a different procedure,
based on the assumption that the true effect curve satisfies the shape constraint of monotonicity.
\citet{Colangelo:2020tt} provide an alternative motivation for a related pseudo-outcome 
\citet{Kennedy:2017cq}, study a sample-splitting variation of the
estimation methodology of
\citet{Kennedy:2017cq}, and also consider estimating the gradient of
the treatment curve.
These works do not consider teh global inference problem that we address here. 
\begin{mylongform}
  \begin{longform}
    The assumptions of \citet{Colangelo:2020tt} exclude double robustness (see their Assumption 3).
They find pointwise limit distributions, 
like \citet{Kennedy:2017cq}, but do not  address the global inference problem that we address here.

  \end{longform}
\end{mylongform}

Many works since \cite{Kennedy:2017cq} have considered doubly robust estimation of structural/causal functions based on nonparametric models.
Some include or focus on continuous treatment effects, while others
focus more on the problem of conditional average treatment effect (CATE) (or ``partially conditional average treatment effect'' (PCATE)
\citep{Wang_Wong_Yang_Chan_2021}) based on a binary treatment variable, or other related quantities.  The (P)CATE setting is of course different than the continuous treatment setting we consider, but may share some features with our setting when the covariates on which the treatment is  conditioned are continuous, so we discuss some of the recent literature briefly.
\cite{Chernozhukov_Chetverikov_Demirer_Duflo_Hansen_Newey_Robins_2018},
\cite{Semenova:2017uc},
\cite{Chernozhukov_Newey_Singh_2022}
develop general ``double/debiased'' machine learning approaches to estimating causal estimands that are continuous functions.
\cite{Chernozhukov_Newey_Singh_2022} 
develop a Dantzig-type estimator based on estimating equations for the nuisance parameters.
They consider four running example stimands, as well as 
 ``local'' and ``perfectly localized'' functionals, the latter including the continuous treatment effect at a fixed point.
They develop Gaussian approximations for the distributions of their 
estimators
at a fixed point.
They do not further develop inference methods, so their focus is distinct from our focus on a global testing problem.\footnote{In discussing works that focus on continuous treatment effect estimation,
they say ``These works develop inference on perfectly localized average potential outcomes with continuous treatment effects, using a different approach than what we develop here.  Our development is complementary as it covers a much broader collection of functionals.''}
\cite{Semenova:2017uc} develop a general theory for debiased machine learning and uniform confidence bands for inference for  different causal or missing data estimands, such as conditional average treatment effects, regression functions with partially missing outcomes, and conditional average partial derivatives.
They
also consider the causal effect curve with continuous treatments.
However, in the latter setting,
their assumptions are slightly too strong to allow  double robustness
for (pointwise) consistency,\footnote{``The only requirement we impose on the estimation of [the nuisance parameters] is that [they converge] to the true nuisance parameter $\eta_0$ at a fast enough rate [$o_p(n^{-1/4-\delta})$] for some $\delta \ge 0$'' \citep[page 271]{Semenova:2017uc};  see their Assumption~4.9.}
and their confidence band is centered at an approximation of the true function, rather than at the true function itself (i.e., there is an error term that is ignored; see their Theorem 4.7).
\begin{mylongform}
  \begin{longform}
    Note: I find their assumptions very hard to parse.  In assumption 4.9 If you take theri $d = N^{1/5}$ then i guess they have $r_n s_n = o_p(N^{-7/10})$?

    and I don't know what the right value for $\xi_d$ is on which the first rate statement in assumption 4.9 depends.

    (Separately, we also believe their Assumption 3.1 may rule out truly
    applying their results to the full continuous treatment causal effect
    curve rather than to a discrete approximation thereof.)

  \end{longform}
\end{mylongform}
In summary, the theory and methodology of
\cite{Semenova:2017uc,Chernozhukov_Newey_Singh_2022}
is built for a variety of settings and does not focus exclusively on the setting of continuous treatments, which we do focus on, and derive a powerful procedure for, here.

\cite{Wang_Wong_Yang_Chan_2021,luedtke2016super, lee2017doubly, zimmert2019nonparametric}
and \cite{foster2019orthogonal},
consider various doubly robust types of estimation procedures for (P)CATE estimation, generally based on pseudo-outcomes.
Here ``PCATE'' means the effect of a binary/discrete treatment conditional on some covariates  which may be a strict subset of the set of all confounding variables.
\cite{Nie_Wager_2021} and \cite{Kennedy_2020}
consider a different type of estimator
(dubbed the ``R-learner'' and ``lp-R-Learner'', respectively, after \cite{Robinson_1988}); they provide double-robust-type conditions
under which these two-step  estimators  can attain the oracle 
rate of convergence,
with \cite{Kennedy_2020} able to weaken the conditions on the nuisances given in 
\cite{Nie_Wager_2021} by a new cross-validation technique (inspired by \cite{Newey_Robins_2018})
and undersmoothing.
Very recently \cite{Kennedy_Balakrishnan_Wasserman_2022} (on the arxiv) find minimax lower and upper bounds for CATE estimation
(showing that the estimator/rates of \cite{Kennedy_2020} were optimal in some smoothness regimes but not others).
None of the above works consider hypothesis tests or questions of inference
except for \cite{lee2017doubly}. 
We do not know of an analog of the (lp-)R-Learner that has been directly studied for the case of estimation of continuous treatments.
An open question is whether the approach we develop here can be applied also in the setting of inference for a (P)CATE.  
In a different setup \citet{Luedtke_Carone_Laan_2019} develop a hypothesis test in a causal inference setting which could allow for continuous treatments.   However, their null hypothesis is different than ours and their conditions (Condition 3) rule out our setting. 

Very recently, \citet{westling2020nonparametric} considered a similar problem to the one we consider, but using a different test statistic, based on a ``primitive'' (or ``anti-derivative'') of the treatment effect curve.
The present paper and \citet{westling2020nonparametric} were developed entirely separately.
A strength of the primitive-based method of
\citet{westling2020nonparametric}
is that it naturally handles mixed discrete-continuous exposures, whereas our method would require further modifications (e.g., locally chosen bandwidths) to do so.  
\citet{westling2020nonparametric}'s method is not quite doubly robust
(requiring an $o_p(n^{-1/2})$ rather than $O_p(n^{-1/2})$ product-nuisance-estimation-rate), whereas our method
(which allows an $o_p( (n\sqrt{h})^{-1/2})$ rate)
is.  (See  \citet{westling2020nonparametric}'s Assumption A4 and Theorems 3 and 4, as well as Figure 1.)  Also, the two methods have noticeably different power in different scenarios.
\citet{westling2020nonparametric}'s test has a local $n^{-1/2}$ convergence rate, whereas our test statistic has a local convergence rate of $(n \sqrt{h})^{-1/2}$.  The implications are as discussed above: \citet{westling2020nonparametric}'s test will have power focused in ``one direction'' and decaying in directions away from that one direction, and our power will be  more uniformly spread over the alternatives.  For instance,
we will have noticeably higher power 
when the true (alternative) effect curve has significant peaks or valleys.




\section{Setup and method}
\label{sec:setup-method}

\subsection{Notation, data and setup}
We use the following notation throughout. Let $(\bs{Z}_1,\ldots,\bs{Z}_n)$ be
the observed sample where each observation is an independent copy of the tuple
$\bs{Z}=(\bs{L},A,Y)$ with support $\mc{Z}=\mc{L}\times\mc{A}\times\mc{Y}$. Here
$\mc{L}\subseteq\RR^d$ and $\bs{L}$ is the vector containing the  $d$ potential confounder variables (covariates);
$\mc{A}\subseteq \RR$ and  $A$ is the continuous treatment dosage
received; $\mc{Y}\subseteq\RR$ and  $Y$ is the observable outcome of interest. We let $P$ denote the distribution
of $\bs{Z}$ and $p_0(\bs{z})=p_0(y|\bs{l},a)p_0(a|\bs{l})p_0(\bs{l})$ denote the corresponding density function with respect
to some dominating  measure $\nu$.
Let $\mu_0(\bs{l},a):=\EE(Y|\bs{L}=\bs{l},A=a)$ denote
the outcome regression function. 
Similarly, let $\pi_0(a|\bs{l}):=\frac{\partial}{\partial
  a}P(A\le a|\bs{L}=\bs{l})$ denote the conditional density or propensity score
function of $A$ given $\bs{L}$ and $\varpi_0(a):=\frac{\partial}{\partial a}P(A\le
a)$ denote the marginal density function of $A$
(both of which densities are assumed to exist).
For a function $f$ on $\RR$, 
we let $\PP\{f(\bs{Z})\}:=\int_{\mc{Z}}f(\bs{z})\,dP(\bs{z})$.
And for $p\ge 1$,
we use $\|f\|_p:=\{\int f(\bs{z})^p\,d P(\bs{z})\}^{1/p}$ to denote the $L_p(P)$
norm and use $\|f\|_{\mc{X}}:=\sup_{x\in \mc{X}}|f(x)|$ to denote the uniform
norm over the range $\mc{X}$. We use $\PP_n$ to denote the empirical
distribution defined on the observed data so that $\PP_n\{f(\bs{Z})\}:=\int
f(\bs{z})\,d\PP_n(\bs{z})=n^{-1}\sum^n_{i=1}f(\bs{Z}_i)$. 

To characterize the problem, let $Y^a$ be the potential outcome \citep{Psychology:uk} when treatment
level $a$ is applied. Then the causal estimand that we are interested in learning about (developing a hypothesis test for) is
$\theta_0(a):=\EE(Y^a)$,
and we wish to test if this function is constant.
Specifically, we want to test
\begin{equation}
  \label{eq:11}
  H_0\colon
  \theta_0 \equiv c \in \RR
  \, \text{ versus } \,
  H_1\colon \theta_0 \text{ is nonconstant},
\end{equation}
where  we will assume 
that
$\theta_0(\cdot)$
satisfies some smoothness assumptions
if it is nonconstant. 

\subsection{Proposed method}
\label{sec:proposed-method}
In \citet{Kennedy:2017cq}, the authors derived a doubly robust mapping for
estimating the continuous treatment effect curve. Like doubly robust estimators
in binary treatment cases, the doubly robust mapping depends on both the outcome
regression function and the propensity score function, and can be written as
\begin{align}
  \label{eq:dr-mapping}
  \xi(\bs{Z};\pi,\mu) = \frac{Y-\mu(\bs{L},A)}{\pi(A|\bs{L})}
  \int_{\mc{L}}\pi(A|\bs{l})\,dP(\bs{l})+\int_{\mc{L}}\mu(\bs{l},A)\,dP(\bs{l}).
\end{align}
where $\pi(a| \bs l)$ and $\mu(\bs l, a)$ are some propensity score and outcome regression functions, respectively.
The above mapping has the desired property of double robustness in that
\begin{align}
  \label{eq:dr-property}
  \EE\lb\xi(\bs{Z};\pi,\mu) | A = a\rb
  =\theta_0(a),
\end{align}
provided either $\mu=\mu_0$ or $\pi=\pi_0$,
under Assumptions~\ref{assm:IA} below \citep{Kennedy:2017cq}.
Thus $\theta_0(\cdot)$ could be
estimated using standard nonparametric smoothing techniques if
either $\mu_0$ or $\pi_0$ were known.  Since we do not actually know $\mu_0,\pi_0$,  we plug in estimators
$\widehat{\mu}$, $\widehat{\pi}$ for $\mu_0$, $\pi_0$.
To compute $\xi$ we also need to know $dP(\bs l)$ in two places; since we do not, we plug
in
$\PP_n(\bs{l})$ for $P(\bs{l})$, and we denote this by $\wh \xi$.
Thus, our estimate of the pseudo-outcome  $\xi( \bs Z; \pi_0, \mu_0)$ is 
\begin{align}
  \label{eq:pseudo-outcome}  \widehat{\xi}(\bs{Z};\widehat{\pi},\widehat{\mu})=\frac{Y-\widehat{\mu}(\bs{L},A)}{\widehat{\pi}(A|\bs{L})}\int_{\mc{L}}\widehat{\pi}(A|\bs{l})\,d\PP_n(\bs{l})+\int_{\mc{L}}\widehat{\mu}(\bs{l},A)\,d\PP_n(\bs{l}).
\end{align}
We can then apply a nonparametric estimation procedure to the (observed) tuples
$\{(\widehat{\xi}(\bs{Z}_i;\widehat{\pi},\widehat{\mu}),A_i)\}_{i=1}^n$.
\citet{Kennedy:2017cq} show that when at least one of the estimators  is consistent then, 
under some assumptions about  complexity and boundedness conditions of  $\wh
\mu$ and $\wh \pi$
and the product of their convergence rates,  the  convergence rate of the nonparametric estimator is
the same as if we know the true $\mu_0$ or $\pi_0$.
\citet{Kennedy:2017cq} apply  a local linear estimator to the pseudo-outcome
to estimate $\theta_0(\cdot)$   and show  the above-stated property on the pointwise convergence
rate of the nonparametric estimator.

In this paper, we are
interested in a different problem: testing if $\theta_0(\cdot)$ is constant or not. 
As in the estimation problem, in the setting where we (unrealistically) know one of $\mu_0$ or $\pi_0$,
testing whether $\theta_0(\cdot)$ is constant becomes a standard
regression problem, and we can consider many possible nonparametric tests to the tuples $\{(\widehat{\xi}(\bs{Z}_i;\widehat{\pi},\widehat{\mu}),A_i)\}_{i=1}^n$.
As described in Section~\ref{sec:intro}, not all tests will be doubly robust for testing, though.

In
\citet{Hardle:1993ih}, the authors consider the problem of testing parametric
null linear models (not in a causal setting) and construct test statistics based on the integrated difference
between the nonparametric model estimated using the Nadaraya-Watson estimator and the
parametric null model. \citet{alcala1999goodness} extended the test to allow
using  a
local
polynomial estimator \citep{Fan:1996tk} for the nonparametric model.
We propose to test our hypothesis
\eqref{eq:11} of a constant treatment effect curve (i.e., no treatment effect)
using the following statistic
\begin{align}
  \label{eq:test-stat}
  T_n=n\sqrt{h}\int_{\mc{A}}\lp\widehat{\theta}_h(a)-\PP_n\widehat{\xi}(\bs Z) \rp^2w(a)\,da,
\end{align}
where $w(\cdot)$ is a user-specified weight function, $\widehat{\theta}_h(a)$ is the local linear estimator applied to
$\{(\widehat{\xi}(\bs{Z}_i;\widehat{\pi},\widehat{\mu}),A_i)\}_{i=1}^n$.  To define the local linear estimator, we let
$  \widehat{\beta}_h(a)=\argmin_{\beta\in \RR^2}\PP_n
\big[ K_{ha}(A)\lb\widehat{\xi}(\bs{Z};\widehat{\pi},\widehat{\mu})-g_{ha}(A)^T\beta\rb^2 \big],$
where
$g_{ha}(t)=(1,\frac{t-a}{h})^T$, $K_{ha}(t)=h^{-1}K\{(t-a)/h\}$, and $K(\cdot)$ is a kernel function, and then we let
$\widehat{\theta}_h(a)=g_{h,0}(0)^T\widehat{\beta}_h(a)$.
Note: we could define the test statistic $T_n$ by a summation over $A_i$ rather than as an integral against (Lebesgue measure) $da$; as mentioned in \citet{Horowitz_Spokoiny_2001}, \citet{dette2001nonparametric}, similar results as ours would hold although some constants would change.
Note that under the null hypothesis of no
treatment effect, $\PP_n\{\xi(\bs{Z};\pi_0,\mu_0)\}$ is an $\sqrt{n}$-consistent estimator of the null
model and so is $\PP_n\widehat{\xi}$ provided $\widehat{\mu}, \widehat{\pi}$ are
not converging too slowly (see the later discussion). We will see under some mild
conditions  on the convergence rates of $\widehat{\pi}$ and
$\widehat{\mu}$, that $T_n$ converges to a  normal distribution similar to the one in
\citet{alcala1999goodness} under the null model (given in \eqref{eq:11}). However, similar to \citet{Hardle:1993ih} and
\citet{alcala1999goodness}, due to the slow order of convergence of the
asymptotically negligible terms that arise in the proof, we do not suggest using the
target distribution  given in Theorem~\ref{thm:null-normality} to directly calculate the
critical values under the null hypothesis. Instead 
we advocate using the bootstrap \citep{Efron:1993dc} to
estimate the distribution of $T_n$ to improve the finite sample performance and,
more specifically, we use the so-called wild bootstrap \citep{davidson2008wild} as used
in  \citet{Hardle:1993ih},
\citet{alcala1999goodness} and  \citet{dette2001nonparametric}. In
\citet{Hardle:1993ih}, the authors show the theoretical properties of  three
different bootstrap methods: (1) the naive resampling method; (2) the adjusted
residual bootstrap; (3) the wild bootstrap and showed only the wild bootstrap
gives consistent estimation of the null distribution. These results are again true in our setting: the wild bootstrap is valid
whereas the other two are not. 
Here we provide a brief outline of our proposed test procedure.


\begin{enumerate}
  \label{item:1}
\item Estimate
  $(\pi_0,\mu_0)$ by
  (black-box estimators)
  $(\widehat{\pi},\widehat{\mu})$.
\item Calculate the pseudo-outcomes
  $\widehat{\xi}(\bs{Z};\widehat{\pi},\widehat{\mu})$  by \eqref{eq:pseudo-outcome}  and construct the local
  linear estimator $\widehat{\theta}_h(a)$ using
  $\{(\widehat{\xi}(\bs{Z}_i;\widehat{\pi},\widehat{\mu}),A_i)\}_{i=1}^n$.
\item To generate wild bootstrap samples to estimate the distribution of
  $T_n$ under the null hypothesis,
  \begin{enumerate}
  \item Calculate
    $\hat{\epsilon}_i=\widehat{\xi}(\bs{Z}_i;\widehat{\pi},\widehat{\mu})-\widehat{\theta}_h(A_i)$ 
    (we can also use $\hat{\epsilon}_i=\widehat{\xi}(\bs{Z}_i;\widehat{\pi},\widehat{\mu})-\sum_{i=1}^n\widehat{\xi}(\bs{Z}_i;\widehat{\pi},\widehat{\mu})/n$),
  \item Do the following $B$ times, where $B$ is the desired number of bootstrap resamplings: for each $i \in \{1,\ldots,n\}$, generate $\epsilon^{\ast}_i\sim \hat{F}_i$
    (defined just below, based on $\{\hat \epsilon_i \}$) 
    and use
    $(\xi_i^{\ast}=\PP_n\wh\xi+\epsilon^{\ast}_i,A_i)$ as
    bootstrap observations,
  \end{enumerate}

\item Use the wild bootstrap  samples to compute $T_{n,j}^*$, $j=1,\ldots, B$ (according to
  \eqref{eq:test-stat} but using the bootstrap samples) 
  and use $\{ T_{n,j}^*\}_{j=1}^B$  to  estimate the distribution of
  $T_n$ under the null hypothesis.
  Let $\widehat{t}^{\ast}_{n,1-\alpha}$
  denote the  $1-\alpha$ quantile of the estimated distribution, where
  $0 < \alpha < 1$ is the predetermined significance level. Reject the
  null hypothesis if $T_n>\widehat{t}^{\ast}_{n,1-\alpha}$.
\end{enumerate}

When generating bootstrap samples, we use $\hat{F}_i$ to estimate the
conditional distribution of $\xi(\bs{Z}_i;\bar{\pi},\bar{\mu})$ based on the
single residual $\hat{\epsilon}_i$. \citet{Hardle:1993ih} use a ``two point
distribution''  which matches the first three moments of $\hat{\epsilon}_i$ and is
defined as
\begin{align}
  \label{align:defn:Fhati}
  \epsilon_i^{\ast}=\lb
  \begin{array}{rr}
    -\hat{\epsilon}_i(\sqrt{5}-1)/2&\text{with probability }
                                     (\sqrt{5}+1)/(2\sqrt{5})\\
    \hat{\epsilon}_i(\sqrt{5}+1)/2&\text{with probability } (\sqrt{5}-1)/(2\sqrt{5}).
  \end{array}\right.
\end{align}
We also consider another common choice, a Rademacher type distribution, where $ \epsilon_i^{\ast}$ equals $ \hat{\epsilon}_i$ or $- \hat{\epsilon}_i$ with probability $1/2$ each.  (\citet{davidson2008wild}).  Unlike the two point distribution, the Rademacher distribution matches the first two and the fourth (and all even) moments of $\hat{\epsilon}_i$, but imposes symmetry on $\hat{F}_i$.


\begin{remark}
  In Section~\ref{sec:extens-test-finite}
  we also present an extension of the doubly robust
  pseudo-outcome to allow for possible (discrete or continuous) effect
  modifiers, and present the natural extension of the test to the case where
  the effect modifier is discrete.
\end{remark}

\subsection{Assumptions}
\label{sec:assumptions}

Here we introduce the assumptions needed for our theoretical results.
Our parameter of interest, $\theta_0(a)$, is defined on the potential outcome $Y^a$ which is not observable. Thus, we need the
following 
identifiability 
conditions on the observed data.
\begin{assumption}{I}
  \label{assm:IA}
  $\phantom{blah}$
  \begin{enumerate}
  \item \label{assm:IA-item1}Consistency: $A=a$ implies $Y=Y^a$.
  \item \label{assm:IA-item2}Positivity: $\pi_0(a|\bs{l})\ge \pi_{\min}>0$ for all $\bs{l}\in
    \mc{L}$  and all $a \in A$.
  \item \label{assm:IA-item3}Ignorability:
    $\EE(Y^a|\bs{L},A)=\EE(Y^a|\bs{L})$.
  \end{enumerate}
\end{assumption}

We need some further assumptions to regulate the distribution of the observed
data and the treatment effect curve $\theta_0(a)$.
\begin{assumption}{D}
  \label{assm:DA}
  $\phantom{blah}$
  \begin{enumerate}
  \item  \label{assm:DA-item1} The support of $A$ (i.e., $\mc{A}$), is a compact subset of $\RR$.
  \item  \label{assm:DA-item2} The treatment effect curve $\theta_0(a)$ and the marginal density function
    $\varpi_0(a)$ are twice continuously differentiable.
  \item \label{assm:DA-item3}The conditional density $\pi_0(a|\bs{l})$ and the outcome regression
    function $\mu_0(\bs{l},a)$ are uniformly bounded.
  \item  \label{assm:DA-item4}Let $\tau(\bs{l},a) := \Var(Y|\bs{L}=\bs{l},A=a)$ be the conditional
    variance of $Y$ given  covariates and treatment level. Assume there exist
    $\tau_{\max}>0$ such that
    $0<\tau(\bs{l},a)\le \tau_{\max}$ for all $\bs{l}\in\mc{L}$ and $a\in\mc{A}$. Moreover, define
    \begin{align*}
      &S_{\tau}
        :=\{\bs{l}\in \mc{L}:\tau(\bs{l},a)\text{ is a continuous function of
        }a\},\\
      &S_{\pi_0}
        :=\{\bs{l}\in \mc{L}:\pi_0(a|\bs{l})\text{ is a continuous function of
        }a\},\\
      &S_{\mu_0} :=\{\bs{l}\in \mc{L}:\mu_0(\bs{l},a)\text{ is a continuous
        function of }a\};
    \end{align*}
    assume     we have $P(S_{\tau} \cup S_{\pi} \cup
    S_{\mu})=1$. 
  \end{enumerate}
\end{assumption}
\noindent
Statement~\ref{assm:DA-item4} %
about the sets $S_\tau$, $S_{\pi_0}$, $S_\mu$ is
just a slight relaxation of the requirement that the given functions all be simultaneously almost surely continuous everywhere.
We also impose some conditions on the estimators of the nuisance parameters and
the local linear estimator of the treatment effect curve. We assume the
estimators for nuisance parameters fall in classes with finite uniform entropy
integrals. For a generic class of functions $\mc{F}$, let $F$ denote an envelope
function for $\mc{F}$, i.e., $\sup_{f\in \mc{F}}|f|\le
\mc{F}$. Let $N(\epsilon,\mc{F},\|\cdot\|)$ denote the covering number, i.e.,
the minimal number of $\epsilon$-balls (with distance defined on $\|\cdot\|$)
needed to cover $\mc{F}$.
Let
\begin{align}
  \label{eq:uniform-entropy-2-integral}
  J_m(\delta,\mc{F}, L_2 )
  :=
  \int^{\delta}_0\sup_Q \lp 1 + \log N(\epsilon\|F\|_{Q,2},\mc{F},L_2(Q)) \rp^{m/2}
  \,d\epsilon,
\end{align}
where the sup is over all probability measures $Q$ and $L_2(Q)\equiv\|\cdot\|_{2,Q}$ is the
$L_2$ semimetric under the distribution $Q$, i.e., $\|f\|_{2,Q}=(\int
f^2\,dQ)^{1/2}$.
If  $J_1(1, \mc{F},L_2) < \infty$ we say $\mc{F}$ has a finite uniform entropy integral.
We will at times require $J_m(1, \mc{F}, L_2) < \infty$ for differing values of $m \in \lb 1,2,3,4 \rb$.  Following standard convention, we sometimes let $J(\cdot,\cdot,\cdot)$ refer to $J_1(\cdot,\cdot,\cdot)$.
For the assumption on our kernel, we also need to define a
\emph{Vapnik-Chervonenkis (VC)} (Dudley, 1999) class.
If a class of functions $\mc{F}$ is a VC class, we have that
\begin{align}
  \label{eq:def-vc}
  \sup_{Q} N\lp\tau\|F\|_{2,Q},\mc{F},L_2(Q) \rp \le \lp\frac{C}{\tau}\rp^v
\end{align}
for some positive $C,v$ and all $\tau > 0$ (and again the sup is over all probability measures $Q$).
The assumptions we make on our estimators are as follows.
\begin{assumption}{E(A)}
  \label{assm:EA}
  $\phantom{blah}$
  \begin{enumerate}
  \item \label{assm:EA-item3} The bandwidth $h\equiv h_n$ fulfills $c^h_1n^{-1/5}\le \liminf
    h_n\le \limsup h_n\le c^h_2n^{-1/5}$ for some constants $0<c^h_1\le c^h_2<\infty$.
  \item  \label{assm:EA-item1}Let $\bar{\pi}$ and $\bar{\mu}$ denote the limits of the estimators
    $\widehat{\pi}$ and $\widehat{\mu}$ such that
    $\|\widehat{\pi}-\bar{\pi}\|_{\mc{Z}}=o_p(\sqrt{h})$ and
    $\|\widehat{\mu}-\bar{\mu}\|_{\mc{Z}}=o_p(\sqrt{h})$, where $h$ is the bandwidth
    used in local linear estimator. And we have either
    $\bar{\pi}=\pi_0$ or $\bar{\mu}=\mu_0$.
  \item  \label{assm:EA-item4}The kernel function $K$ for the local linear estimator is a
    continuous symmetric probability density function with support on
    $[-1,1]$. Moreover, we assume the 
    class of functions       $\{ K((\cdot-a)/h):a\in \RR,h>0
    \}$ satisfies
    condition \eqref{eq:def-vc}.
  \item  \label{assm:EA-item5}

    Let  $r^{\infty}_n$ and $s^{\infty}_n$ be such that
    \begin{align*}
      \sup_{a\in\mc{A}} \|\widehat{\pi}(a|\bs{L})-\pi_0(a|\bs{L})\|_2 &= O_p(r^{\infty}_n)\\
      \sup_{a\in\mc{A}}  \|\widehat{\mu}(\bs{L},a)-\mu_0(\bs{L},a)\|_2&=O_p(s^{\infty}_n).
    \end{align*}
    We  assume  $s^{\infty}_nr^{\infty}_n
    = o\{(n\sqrt{h})^{-1/2}\}$.
  \end{enumerate}
\end{assumption}

Assumption~\ref{assm:EA}.\ref{assm:EA-item3} requires that $h$ is of the order of magnitude for optimal estimation; such $h$ can be achieved a variety of ways (for instance, one can minimize a risk estimate, or perform cross validation \citep{Fan:1996tk}).  Assumption~\ref{assm:EA}.\ref{assm:EA-item1} is not a stringent assumption; by definition $\overline \pi, \overline \mu$ are the limits of the estimators $\wh \pi, \wh \mu$; here we require the rate of convergence (in $\| \cdot \|_{\mc Z}$) to these limits (which are not necessarily the truth) to be order $\sqrt{h} = o(n^{-1/10})$ which is quite slow.

Assumption~\ref{assm:EA}.\ref{assm:EA-item4} is a standard  assumption on the user-chosen kernel.
Assumption~\ref{assm:EA}.\ref{assm:EA-item5} is a somewhat non-standard
assumption, since it combines $L_\infty$ and $L_2$ norms.
The $L_2$ aspect arises in the local asymptotics of
\citet{Kennedy:2017cq}, and the $L_\infty$ aspect arises because we consider a global test.
Note that in
parametric settings, $r_n^\infty$ or $s_n^\infty$ may attain $\sqrt{n}$
rates.  For instance, in a linear regression, if the regression model is
$\mu(\bs l, a) = (\bs l^T, a)\bs{\beta}$, and $\wh{\bs{\beta}}$ is an
estimator converging to the true parameter $\bs{\beta}_0$ at $\sqrt{n}$
rate, then
$\PP\lp (\bs l^T,a)(\wh{\bs{\beta}} - \bs{\beta}_0) \rp^2
\le 2 O_p(n^{-1} ) (a^2 + \PP \| \bs l \|_2^2 ),$
which follows from 
using the inequality $(a+b)^2 \le 2a^2+2b^2$ and the fact that if
$(\wh{\bs{\beta}} - \bs{\beta}_0) = O_p(n^{-1/2})$ then
$(\wh{\bs{\beta}} - \bs{\beta}_0) (\wh{\bs{\beta}} - \bs{\beta}_0)^T$ has
eigenvalues of order $O_p(n^{-1})$.  Taking a square root and a supremum
over the bounded set $a \in \mc{A}$ shows that in this case $r_n^\infty$ is
order $\sqrt{n}$.  Similar results hold in other parametric models for $\mu$ or for $\pi$.   Thus indeed, the test is ``doubly robust'':  if one parametric model is well-specified and attains root-$n$ rates, the other may be misspecified.

Many nonparametric or semiparametric examples will also satisfy
Assumption~\ref{assm:EA}.\ref{assm:EA-item5}; if $r_n^\infty$ and $s_n^\infty$ are both, say, order $n^{-2/5}$ up to polylogarithmic factors, which is the rate one expects from for instance a generalized additive model under twice differentiability, then the assumption is satisfied.

In addition to the above
\ref{assm:EA}
(`Estimator assumption part A')
assumption, we make one more assumption (`Estimator assumption part B'). We subscript this next assumption by $m$, which corresponds to the requirement that $J_m(1, \mc{F}, L_2) < \infty$.  Differing results will require differing values of the integer $m$ in $\{1,2,3,4\}$.    We label the below assumption as ``\ref{assm:EA-item2}$_{m}$'' and when we want to assume that, for example, $J_3(1, \mc{F}, L_2) < \infty$, we  refer to the assumption as ``\ref{assm:EA-item2}$_3$''.
\begin{assumption2}{E(B)}{m}
  \label{assm:EA-item2} 
  The estimators $\widehat{\pi}$, $\widehat{\mu}$ and their limits
  $\bar{\pi}$, $\bar{\mu}$ are contained in uniformly bounded function
  classes
  $\mc{F}_\pi$, $\mc{F}_\mu$,
  which
  satisfy that
  $J_m(1, \mc{F}, L_2) < \infty$ for $\mc{F} = \mc{F}_\pi$ or
  $\mc F = \mc{F}_\mu$,
  with $1/\widehat{\pi}$  also uniformly bounded.
  Moreover,
  we assume    $P(S_{\bar{\pi}}\cup S_{\bar{\mu}})=1$, where we let
  \begin{align*}
    S_{\bar{\pi}}
    &   :=\{\bs{l}\in \mc{L}:\bar{\pi}(a|\bs{l})\text{ is a continuous function of }a\} \\
    S_{\bar{\mu}} & :=\{\bs{l}\in \mc{L}:\bar{\mu}(\bs{l},a)\text{ is a continuous
                    function of }a\}.
  \end{align*}
\end{assumption2}

\begin{remark}

  To control rates of convergence, we make assumptions on the complexity of the classes being considered in  \ref{assm:EA-item2}.  For our first main theorem (limit distribution of the test statistic) we require \ref{assm:EA-item2}$_3$ i.e.\ $m=3$ and for our bootstrap theorem we require \ref{assm:EA-item2}$_4$ i.e.\ $m=4$.   For instance, if $\mc {F}_\mu$ is a class of H{\"o}lder continuous functions with H\"older exponent $\beta > 0$ on $D = d+1$ dimensional Euclidean space,
  and we require     \ref{assm:EA-item2}$_m$ to hold, then the $\epsilon$-entropy is of order $\epsilon^{-D / \beta}$ so we require $m D / 2 \beta < 1$ or $\beta > mD/2$.  When $m=1$ the condition is the standard one and when $m=3$ or $4$ it is more restrictive.
  \end{remark}

  \begin{remark}
    It may be possible to weaken these assumptions.
    The assumption \ref{assm:EA-item2}$_m$ with $m > 1$ arises from certain (degenerate) U- or V-process terms in the analysis.  Analyzing such terms requires these more stringent entropy conditions.  On the other hand, an $m$th order (degenerate) U-process comes with a faster decay to $0$, of order $n^{-m/2}$.  When the class $\mc{F}$ (i.e., $\mc{F}_\mu, \mc{F}_\pi$) does not depend on $n$ this does not help us.  But if we allow a sieve-type approach where the class $\mc{F} \equiv \mc{F}_n$ depends on $n$, then we need $J_1(1, \mc{F}_n)$ to be $O(1)$ but $J_m(1, \mc{F}_n)$ for $m >1$ can be allowed to grow with $n$; if we allow such sieve classes $\mc{F}_n$, then the only entropy required to stay finite/bounded in $n$ is $J_1$, and so in this sense we can recover/require the more classical condition.  At present we have phrased the conditions only in terms of independent-of-$n$ classes.
  \end{remark}

\section{Main results}
\label{sec:main-results}


Now we present the asymptotic distribution of our test statistic $T_n$ under the
null hypothesis.
To metrize weak convergence, we use
the Dudley metric \citep[Chapter 14, Section
2,][]{shorack2000probability} (although any topologically equivalent
metric would work), which is defined as 
\begin{equation}
  \label{eq:dudley}
  d(\mu, \nu) := \sup \Big\{ \int g\,d\mu-\int g\,d\nu:\|g\|_{BL}\le 1 \Big\},  
\end{equation}
where $X$ and $Y$ are random variables with      %
probability distributions/laws $\mu$ and $\nu$, respectively, 
and where
$ \|g\|_{BL}:= \sup_{x\in \RR}|g(x)|+ \sup_{x\ne y}|g(x)-g(y)|/|x-y|$.
For a kernel function $K$, we use $K^{(s)}$ to denote the $s$-times convolution
product of $K$, that is $  K^{(s)}(x) = \int K^{(s-1)}(y)K(x-y)\,dy$,
with $K^{(1)}=K$.
And we let $K^{(s)}_h(x):= K^{(s)}(x/h)$.
Let
\begin{align}
  \label{eq:condition-var}    \sigma^2(a)=\EE\ls\frac{\tau(\bs{L},a)+\{\mu_0(\bs{L},a)-\bar{\mu}(\bs{L},a)\}^2}{\{\bar{\pi}(a|\bs{L})/\bar{\varpi}(a)\}^2/\{\pi_0(a|\bs{L})/\varpi_0(a)\}}\rs
  -\lb\theta_0(a)-\bar{m}(a))\rb^2,
\end{align}
where $\bar{\varpi}(a):=\int\bar{\pi}(a|\bs{l})\,dP(\bs{l})$ and
$\bar{m}(a):=\int \bar{\mu}(\bs{l},a)\,dP(\bs{l})$.
We can now state our main theorem, which gives the limit distribution of our
test statistic under the null hypothesis.
Let $\mc{L}(X)$ denote the probability law of the random variable $X$.

\begin{theorem}
  \label{thm:null-normality}
  Let Assumptions \ref{assm:IA}\ref{assm:IA-item1}--\ref{assm:IA}\ref{assm:IA-item3}, \ref{assm:DA}\ref{assm:DA-item1}--\ref{assm:DA}\ref{assm:DA-item4}, \ref{assm:EA}.\ref{assm:EA-item3}--\ref{assm:EA}.\ref{assm:EA-item5},
  and \ref{assm:EA-item2}$_3$
  hold and let $w(\cdot)$ be a continuously
  differentiable weight function on $\mc A$.
  Let $\sigma^2(\cdot)$ and $T_n$ be as given in 
  \eqref{eq:condition-var}
  and
  \eqref{eq:test-stat}.
  Then  under $H_0$ (from \eqref{eq:11}), we have 
  \begin{align}
    \label{eq:null-limit-dist}
    d\lb \mc{L}( T_n), \mc{L}(N(b_{0h},V))
    \rb\to 0
  \end{align}
  as $n\to \infty$, where
  \begin{align}
    b_{0h}=h^{-1/2}K^{(2)}(0)\int_{\mc{A}}\frac{\sigma^2(a)w(a)}{\varpi_0(a)}\,da,\quad V=2K^{(4)}(0)\int_{\mc{A}}\ls\frac{\sigma^2(a)w(a)}{\varpi_0(a)}\rs^{2} \,da.
  \end{align}
\end{theorem}

\smallskip \noindent The full proof is given in the appendices. 
We provide an outline of the proof here.

\smallskip \noindent {\it Proof outline for Theorem~\ref{thm:null-normality}.} We can decompose the statistic $T_n$  as
\begin{equation}
  \label{eq:6}
  T_n=n\sqrt{h}\int_{\mc{A}} \lp D_1(a)+D_2(a)+D_3 \rp^2 w(a)\,da, 
\end{equation}
where
\begin{equation*}
  D_1(a):=\widehat{\theta}_h(a)-\tilde{\theta}_h(a),
  \quad
  D_2(a):=\tilde{\theta}_h(a)-\PP_n\bar{\xi}(\bs Z), 
  \quad
  D_3:=\PP_n\bar{\xi}(\bs Z)-\PP_n\widehat{\xi}(\bs Z),  
\end{equation*}
and where $\tilde{\theta}_h(a)$ is the local linear estimator regressing the oracle doubly robust mappings $\bar{\xi}_i:=\xi(\bs{Z}_i;\bar{\pi},\bar{\mu})$ on $A_i$.  Expanding the square leads to 6 terms to be analyzed, so we break the proof up into 6 main steps corresponding to each of those terms.  In step 1, we verify that $ n\sqrt{h} \int_{\mc A} D_2(a)^2w(a) \, da$ is distributed approximately as the given $N(b_{0h}, V)$ limit distribution in the theorem.
This follows essentially from \citet{alcala1999goodness},
which extends the results of \citet{Hardle:1993ih} to allow local polynomial estimators.
In steps 2--6, we show that the remainder terms 
(which are %
the terms  %
$n\sqrt{h} \int_{\mc A} D_3^2 w(a)\,da$, $n\sqrt{h}\int_{\mc A} D_1(a)^2w(a) \, da$, $n\sqrt{h} D_3 \int_{\mc A} D_2(a)w(a) \, da$,
$n \sqrt{h} \int_{\mc A} D_1(a) D_3 w(a)\, da$, 
and $n \sqrt{h} \int_{\mc A} D_1(a) D_2(a) w(a)\, da$) are all $o_p(1)$ as $n \to \infty$.  We now will discuss those remainder terms in slightly more detail, basically in parallel with steps 2--6.

Let $\eta := (\pi, \mu)$
(and $\wh \eta := (\wh \pi, \wh \mu)$, $\bar \eta := (\bar \pi, \bar \mu)$).
Both $D_1(a)$ and $D_3$ involve summations over $\wh{\xi}(\bs Z_i; \wh \eta) - \xi(\bs Z_i; \bar \eta)$, and in analyzing the remainder terms, we
break these summations into
sums over $\wh{\xi}(\bs Z_i; \wh \eta) - \xi(\bs Z_i; \wh \eta)$ and $ \xi(\bs Z_i; \wh \eta) - \xi(\bs Z_i ; \bar \eta)$.
(Recall the definition of $\wh \xi$ given in \eqref{eq:pseudo-outcome}, in which $dP$ is replaced by $d\PP_n$.)
The sums over
$\wh{\xi}(\bs Z_i; \wh \eta) - \xi(\bs Z_i; \wh \eta)$
yield terms that can be written as degenerate V-statistics (or, rather, because of the presence of the random $\wh \eta$, terms whose size is governed by degenerate V-processes).  We are are  able to conclude that these are of very small order of magnitude,
but unfortunately the empirical process tools (that we are aware of) for analyzing
them requires the imposition of stronger entropy conditions than the normal Donsker-type condition.   (Because of the presence of up to order $3$ V-processes, we require $J_3(1, \mc{F}, L_2) < \infty$ rather than the weaker, more standard Donsker condition, $J_1(1, \mc{F}, L_2) < \infty$ (for $\mc{F}$ equal to both $\mc{F}_\mu,\mc{F}_\pi$).)

For instance, in Step 2 we write the term $D_3$ as
\begin{equation*}
  \begin{split}  \MoveEqLeft\PP_n\{\xi(\bs{Z};\bar{\pi},\bar{\mu})-\widehat{\xi}(\bs{Z};\widehat{\pi},\widehat{\mu})\}\\&=\PP_n\{\xi(\bs{Z};\bar{\pi},\bar{\mu})-\xi(\bs{Z};\widehat{\pi},\widehat{\mu})\}+\PP_n\{\xi(\bs{Z};\widehat{\pi},\widehat{\mu})-\widehat{\xi}(\bs{Z};\widehat{\pi},\widehat{\mu})\}.
  \end{split}
\end{equation*}
The first summand  (of the right hand side above) is further decomposed into two terms of types that commonly arise in causal inference or semiparametric problems one an empirical process one and one a `second order remainder'; the former is shown to be negligible by an empirical process asymptotic equicontinuity argument and the latter is small by assumptions on $r_n^\infty s_n^\infty$.  The second summand in the display above can be written as a degenerate order 2 V-process.
We provide an introduction to and discussion of U- and V-processes in Section~\ref{sec:v-process-results}
Under Assumption
\ref{assm:EA-item2}$_2$ (implied by Assumption \ref{assm:EA-item2}$_3$) we show  that the V-process is negligible.  Note that a degenerate order $m$ U- or V-statistic is usually of order $n^{-m/2}$ 
\citep[Chapter 12]{vanderVaart:1998dr}.

In Step 3, we write $D_1(a)$
as $  D_1(a) =d_{1,1}(a)+d_{1,2}(a)$,
where
\begin{equation*}
  \begin{split}
    d_{1,1}(a) & := \PP_n \ls W_{ha}(A)
    \lp  \wh \xi( \bs Z; \wh \pi, \wh \mu ) - \xi(\bs Z; \widehat{\pi},
    \widehat{\mu}) \rp \rs, \\
    d_{1,2}(a)&:=  \PP_n \ls W_{ha}(A)
    \lp   \xi( \bs Z; \wh \pi, \wh \mu ) - \xi(\bs Z; \bar{\pi},
    \bar{\mu}) \rp \rs,
  \end{split}
\end{equation*}
and $W_{ha}(\cdot)$ are ``equivalent'' kernels for the local polynomial estimator (see the proof for definitions and references or see \cite{Fan:1996tk}, page 63).  The added difficulty here from Step 2 is that these terms are local i.e.\ they depend on $h$, but thematically the decomposition works the same as in Step 2.  The term $d_{1,2}(a)$ is handled by an asymptotic equicontinuity argument and assumptions on nuisance estimators, and the term $d_{1,1}(a)$ by a (nonasymptotic) V-process maximal inequality (see Proposition~\ref{prop:1}) for an order 2 V-process.

In Step 4, we consider $\int_{\mc A} D_2(a) D_3 w(a) \, da$; here $D_3$ can be factored out and handled by the result of Step 2, and the remaining integral of $D_2(a)$ can be handled in an elementary way by Taylor expansion and the Central Limit Theorem.
In Step 5, we consider $\int_{\mc A} D_1(a) D_3 w(a) \, da$, whose negligibility follows immediately from the analysis in Steps 2 and 3 and the Cauchy-Schwarz inequality.

Finally, in Step 6 an order 3 V-processes arises (and so Assumption \ref{assm:EA-item2}$_3$ is needed) in the analysis of  $\int_{\mc{A}}D_1(a)D_2(a)w(a)\,da$.  This can be essentially simplified into (a sum of) two main terms,
\begin{equation}
  \label{eq:1}
  \begin{split}
    &  \int    \frac{1}{\varpi_0^2(a)} \PP_n \ls K_{ha}(A)
    \lb   \xi( \bs Z; \wh \pi, \wh \mu ) -  \xi(\bs Z;  \bar \pi,
    \bar \mu) \rb\rs \frac{1}{n} \sum_{i=1}^n
    K_h\lp A_i-a\rp
    \bar\epsilon_i w(a)\,da,  \\
    &   \int    \frac{1}{\varpi_0^2(a)} \PP_n \ls K_{ha}(A)
    \lb  \wh \xi( \bs Z; \wh \pi, \wh \mu ) -  \xi(\bs Z;  \wh \pi,
    \wh \mu) \rb\rs\frac{1}{n}\sum_{i=1}^n
    K_h\lp A_i-a\rp
    \bar\epsilon_i w(a)\,da,    
  \end{split}
\end{equation}
where $\overline{\epsilon}_i:=\xi(\bs{Z}_i;\overline{\pi},\overline{\mu})-\PP\xi(\bs{Z};\overline{\pi},\overline{\mu})$.  The first term in \eqref{eq:1} is further decomposed, with $\PP_n$ written as $(\PP_n - \PP) + \PP$;
usually, the $(\PP_n - \PP)$ term yields an empirical process and the $\PP$ term yields a second order remainder.  Here the empirical process piece can be handled by previous arguments; the `$\PP$' term can not be treated as a second order remainder because taking absolute values (as would be commonly done) breaks the mean zero structure of the term and does not yield the correct size (because of the integral over $\mc{A}$, the $ n^{-1} \sum_{i=1}^n
K_h\lp A_i-a\rp
\bar\epsilon_i w(a)$ term and $\PP$ term  cannot be analyzed separately).   Rather, the term is handled by an empirical process (maximal inequality) argument
(in Lemma~\ref{lem:step5-decompose-term2}).  
Finally, the second term in \eqref{eq:1} is the order three V-process, requiring $J_3(1, \cdot, L_2) < \infty$ in order to apply a maximal inequality.\footnote{Again, in the analysis of this [and other] term[s] we cannot analyze the multiplicands' orders of magnitudes separately, because taking absolute values breaks the mean zero structure; i.e., to explain, let $C_i, D_i$ be generic random variables and then although we can bound $|\sum_i C_i D_i| \le ( \max_i | C_i|) \sum_i |D_i|$, unfortunately then $\sum_i |D_i|$ (with absolute values) is not a sum of mean zero variables, even if the $\{ D_i \}$ are mean zero;
  getting the right order of magnitude requires treating the multiplicands simultaneously as a V-process.}
That completes our proof outline; full details of the proof are given in Section~\ref{sec:proof-main-theor-refm}
\hfill $\square$

\smallskip

Next we state the consistency of our test under alternatives as follows.
(The $\delta_n$ sequence can in particular be $(n \sqrt{h})^{1/2}$.) 

\begin{theorem}
  \label{thm:alternative}
  Let Assumptions \ref{assm:IA}\ref{assm:IA-item1}--\ref{assm:IA}\ref{assm:IA-item3}, \ref{assm:DA}\ref{assm:DA-item1}--\ref{assm:DA}\ref{assm:DA-item4}, \ref{assm:EA}\ref{assm:EA-item3}--\ref{assm:EA}\ref{assm:EA-item5} hold and let $w(\cdot)$ be a continuously
  differentiable weight function on $\mc A$.
  Let $\sigma^2(\cdot)$ and $T_n$ be as given in \eqref{eq:condition-var} and \eqref{eq:test-stat}.
  Then under the
  alternative $\theta_0(a)=c_0+\delta_n(n\sqrt{h})^{-1/2}g(a)$, where
  $c_0=\PP\xi(\bs{Z};\pi_0,\mu_0)$, where 
  $g(\cdot)$ is not the constant $0$, and where $\int
  g(a)\varpi(a)w(a)\,da=0$. Moreover, $\delta_n$ is a sequence converging to
  $\infty$ such that $\lim_{n\to \infty}n^{1/40}/\delta_n=0$. Then  we have
  \begin{align*}
    P(T_n > z_{n,1-\alpha})\to 1
  \end{align*}
  as $n\to \infty$, where we use $z_{n,1-\alpha}$ to denote the upper $\alpha$ quantile of the $N(b_{0h}, V)$ distribution in \eqref{eq:null-limit-dist}.
\end{theorem}

\noindent
The proof is given in the appendices. 
The condition that $\int g(a)\varpi(a)w(a)\,da=0$ is not a
substantive restriction, it is just so that $c_0$ is `identifiable' in a sense.

Finally,
we show that the bootstrap distribution of the statistic can be used to
approximate $T_n$'s unknown distribution (under the null).  For our
proof to hold, we require an extra entropy condition to accommodate a fourth
order V-process that appears in the analysis of the wild bootstrap.  Recall
the definition of $J_4$ from \eqref{eq:uniform-entropy-2-integral} and the definition of the Dudley metric for weak convergence in
\eqref{eq:dudley}.
Let $\mc{L}^*(X) := \mc{L}(X | \bs Z_1, \ldots \bs Z_n)$ denote the  conditional law of a random variable $X$.

A variety of bootstrap definitions are possible (and are included in the simulations) and discussed in Subsection~\ref{sec:proposed-method}. 
For the following theorem, we let $\wh \epsilon_i := \widehat{\xi}(\bs{Z}_i;\widehat{\pi},\widehat{\mu})-\sum_{i=1}^n\widehat{\xi}(\bs{Z}_i;\widehat{\pi},\widehat{\mu})/n$ be centered at the null estimate, and we take $\epsilon_i^*$ to be the Rademacher choice so equal to $\pm \wh \epsilon_i$ with probability $1/2$ each.  Then we proceed as discussed in Subsection~\ref{sec:proposed-method}, with $(\xi_i^{\ast}=\PP_n\wh\xi+\epsilon^{\ast}_i,A_i)$ as bootstrap observations and defining $T_n^*$ by \eqref{eq:test-stat} but using the bootstrap observations.

\begin{theorem}
  \label{thm:bootstrap-consistency-null}
  Let the assumptions of
  Theorem~\ref{thm:null-normality} hold.
  Let $T_n^*$ be the bootstrap test statistic  defined in the paragraph preceding this theorem. 
  Then
  $$d( \mc{L}^*(T_n^*), \mc{L}(N(b_h, V)) ) \to_p 0$$
  as $n \to \infty$.
\end{theorem}

\noindent Next we consider the bootstrap where
$\wh \epsilon_i := \widehat{\xi}(\bs{Z}_i;\widehat{\pi},\widehat{\mu}) - \wh \theta_h(A_i)$ (and the rest of the procedure is as described in the paragraph preceding the previous theorem).  We study this case in the next theorem, which requires an extra entropy condition ($J_4 < \infty$).

\begin{mylongform}
  \begin{longform}
    This centering is probably needed to say anything under the alternative hyptohesis.
  \end{longform}
\end{mylongform}

\begin{theorem}
  \label{thm:bootstrap-consistency-null-2}
  Let the assumptions of
  Theorem~\ref{thm:null-normality} hold.  Further, assume that $J_4(1, \mc{F}, L_2) < \infty$ for $\mc{F} = \mc{F}_\mu$ and for $\mc{F} = \mc{F}_\pi$.  Let $T_n^*$ be the bootstrap test statistic  defined in the paragraph preceding this theorem. 
  Then
  $$d( \mc{L}^*(T_n^*), \mc{L}(N(b_h, V)) ) \to_p 0$$
  as $n \to \infty$.
\end{theorem}

The  structure of the proofs is
analogous to that of
Theorem~\ref{thm:null-normality}, although the
calculations are somewhat more intricate,
and lead to a fourth order V-process which requires the finiteness of $J_4$ as stated in the theorem.
The proof details are given in the appendices. 
Both bootstraps are studied in our simulations and seem to perform similarly.

\section{Simulation studies}
\label{sec:simulation}

\subsection{Simulation for testing constant average treatment effect}
We use simulation to assess the performance of our proposed test in terms of
both type I error probability and power. We consider  two data
generating models, one with a binary reponse and which is defined similarly as in
\citet{Kennedy:2017cq}, and another one with a continuous response. In more detail,
they are specified as follows. 

{\bf Model 1:} we simulate the covariates from independent standard normal distributions,
$\bs{L}=(L_1,\ldots,L_4)^T\sim N(0,\bs{I}_4)$, and   simulate the treatment
level from a Beta distribution,
\begin{align*}
  &(A/20)|\bs{L}\sim \text{Beta}(\lambda(\bs{L}),1-\lambda(\bs{L})),\\
  &\text{logit } \lambda(\bs{L})=-0.8+0.1L_1+0.1L_2-0.1L_3+0.2L_4,
\end{align*}
and finally the binary outcome is simulated as $Y|\bs{L},A\sim
\text{Bernoulli}(\mu(\bs{L},A))$, where
$ \text{logit }\mu(\bs{L},A)=1+(0.2,0.2,0.3,-0.1)\bs{L}
+\delta A(0.1-0.1L_1+0.1L_3-0.13^2A^2)$.

{\bf Model 2:} the covariates are simulated the same as in  Model 1. We
simulate the treatment level from a Beta distribution,
\begin{align*}
  &(A/5)|\bs{L}\sim \text{Beta}(\lambda(\bs{L}),1-\lambda(\bs{L})),\\
  &\text{logit } \lambda(\bs{L})=0.1L_1+0.1L_2-0.1L_3+0.2L_4,
\end{align*}
and we simulate the continuous
response from conditional normal distributions as
$Y|\bs{L},A\sim N(\mu(\bs{L},A),0.5^2)$, where
\begin{equation*}
  \mu(\bs{L},A)= 
  (0.2,0.2,0.3,-0.1)\bs{L}+A(-0.1L_1+0.1L_3)
     +\delta\exp\lb-\frac{(A-2.5)^2}{(1/2)^2}\rb. 
\end{equation*}
In both models, we have a parameter $\delta$ that controls the distance between
the true treatment effect and the null hypothesis, i.e., treatment effect
is constant, with $\delta=0$ indicating no treatment effect in both
models. In Model 1,  $\delta=0$ yields a constant average treatment effect
and is a `strong null' meaning all individuals have the same treated and untreated outcomes;
in Model 2 when
$\delta=0$ the `weak null' holds meaning conditional average treatment effects are
nonconstant but the average treatment effect is constant.
Specifically, for Model 1, we let
$\delta\in\{0,0.002,0.004,0.006,0.008,0.01\}$, and we let 
$\delta\in\{0,0.1,0.2,0.3,0,4,0.5\}$ for Model 2. We plot the treatment effect
curve with $\delta=0.01$ for Model 1  and the treatment effect
curve with $\delta=0.5$ for Model 2 in Figure~\ref{fig:simulation-treatment-curve}.








\begin{figure}
  \centering
  \includegraphics[width=0.8\textwidth]{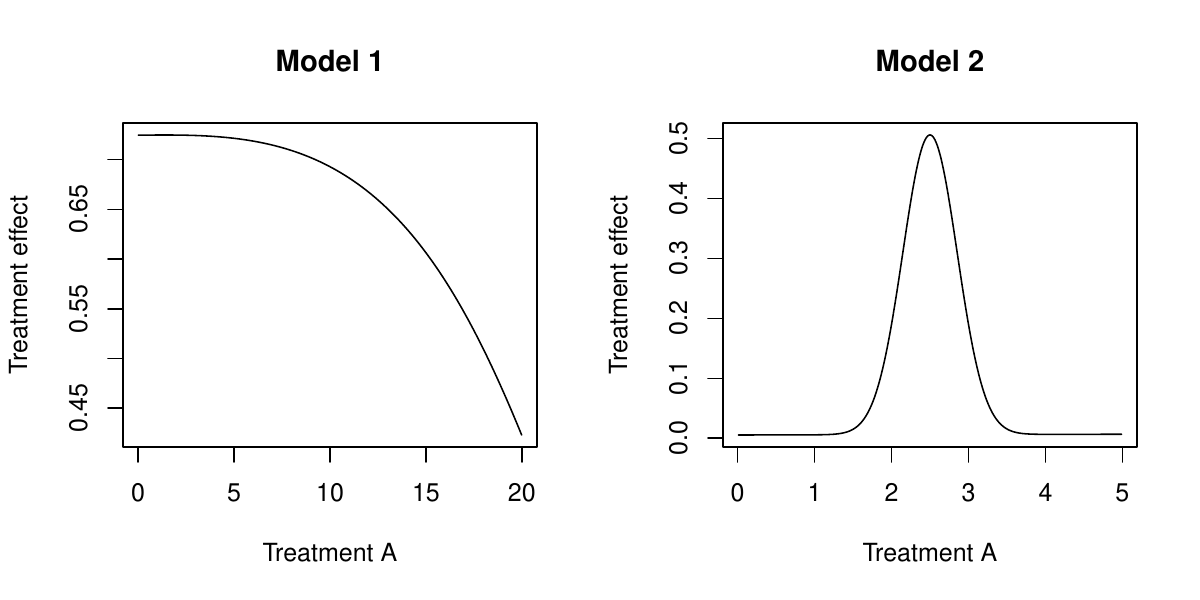}
  \caption{Treatment effect curves in Model 1 and Model 2. Left: Model 1 with
    $\delta=0.01$. Right: Model 2 with $\delta=0.5$.}
  \label{fig:simulation-treatment-curve}
\end{figure}
For each data generating
model, we test the performance of our method under  4 scenarios: (1) $\pi$ is correctly
specified with a parametric model, $\mu$ is incorrectly specified with a parametric model; (2) $\pi$ is incorrectly
specified with a parametric model, $\mu$ is correctly specified with a parametric model; (3)
both $\pi$ and $\mu$ are correctly specified with  a parametric model; (4) both
$\pi$ and $\mu$ are estimated with Super Learners \citep{van2007super}. In the
first two scenarios, the incorrect parametric models are constructed in the same
fashion as in  \citet{kang2007demystifying}. The
first three scenarios are used to test double robustness of our method and last
one to show the empirical performance of  our method when we  use flexible
machine learning models to estimate the nuisance functions. After we calculate the
pseudo-outcomes, we use the rule of thumb for bandwidth selection \citep{Fan:1996tk} for the local
linear estimator.  We compare the performance of
our method with \citet{westling2020nonparametric} in the first three scenarios;
with \citet{westling2020nonparametric} and a discretized version of
TMLE \citep{JSSv051i13} with treatment dichotomized at the middle
point in the last scenario. In our method, we choose the weight function $w(a)\equiv 1$. We
implemented all three versions ($L_1,L_2$ and $L_{\infty}$) of the methods in
\citet{westling2020nonparametric} for comparison.   Rejection probabilities are
estimated with 1000  independent replications of
simulation. Finally,  we consider sample sizes in $\{500,1000,2000\}$.

Figures~\ref{fig:model1-parametric}  and
\ref{fig:model1-nonparametric}
show
the results for Model 1.   We can see when at least one of the nuisance functions is
correctly estimated, our method and Westling's methods performed similarly in
terms of both type I error probability and power. When both nuisance functions
are estimated with Super Learners,  Westling's methods have slightly larger
power than our method 
but also have
slightly larger type I error probabilities. Note that in this case, the
discretized version of TMLE outperformed both our method and Westling's method
in terms of type I error probability and power. The reason may be the shape of the
treatment effect curve in Model 1 is somewhat simple and monotone, and there isn't
much information loss if we dichotomize the continuous treatment to form a
simpler testing problem.
We also note that apparently 
\citet{westling2020nonparametric}'s method is doubly robust
on the specific data generating model we use here.

Figures~\ref{fig:model2-parametric}  and  \ref{fig:model2-nonparametric} show
the results for Model 2 where we have a slightly complicated and non-monotone treatment effect
curve as in  Figure~\ref{fig:simulation-treatment-curve}.   We observe that in the first
three scenarios our methods outperform Westling's methods in terms of power
in all cases. Our method does better even with a small sample size and a weak
deviation from the null model, compared with Westling's method.  Similar
observations hold when we use Super Learner to estimate both nuisance
functions. Another  observation worth noting is that in Model 2 the
discretized TMLE fails to detect any deviation from the null model since the
treatment effect curve is symmetric, which provides an example in which discretizing
a continuous treatment and applying a binary test  may lead to a completely incorrect conclusion.



\begin{figure}
  \centering
  \includegraphics[width=0.75\textwidth]{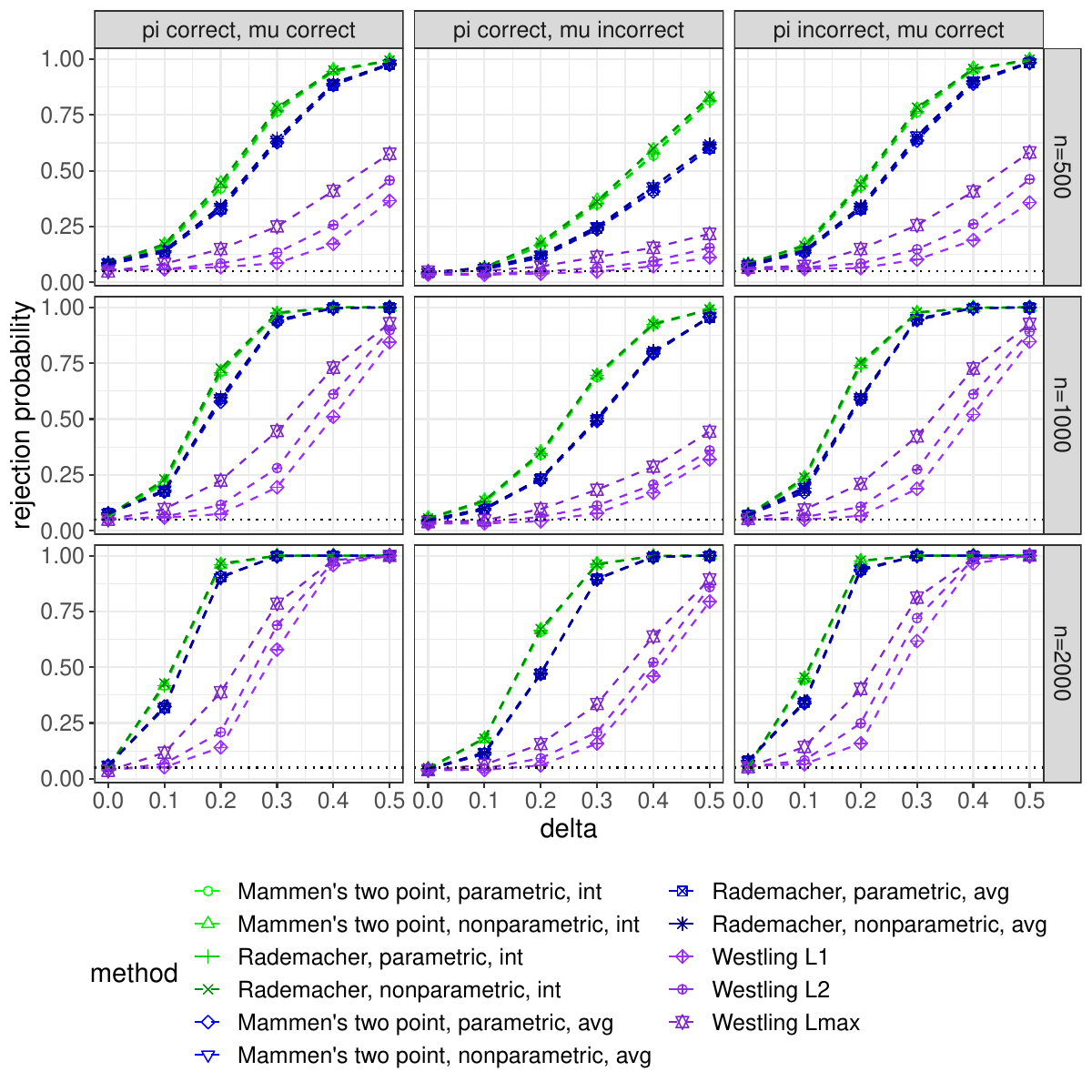}  
  \caption{Simulation result for Model 2 with $\pi$ and $\mu$ estimated from
    parametric models.}
  \label{fig:model2-parametric}
\end{figure}

\begin{figure}
  \centering
  \includegraphics[width=\textwidth]{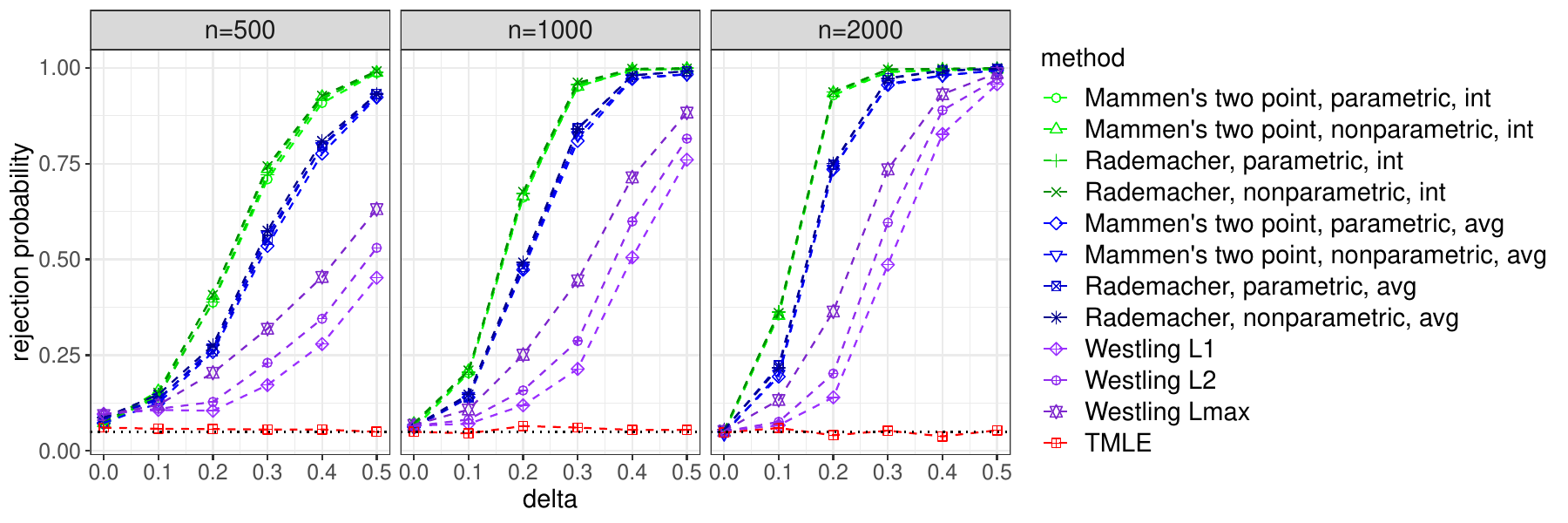}  
  \caption{Simulation result for Model 2 with $\pi$ and $\mu$ estimated from
    nonparametric models.}
  \label{fig:model2-nonparametric}
\end{figure}

\subsection{Cross-fitted test procedure}

We also consider simulations to analyze how dimensionality of the confounders $\bs{L}$ can affect the performance of our test and how cross-fitting could be applied to improve finite sample performance under high dimensional settings. To save space, we defer this part of the content to Section~\ref{sec:cross-fitting}
The detailed description of the test procedure with cross-fitting can be found in Section~\ref{subsec:cross-fitting-test}. We conduct simulation studies to compare our main proposed test (without cross-fitting) and the cross-fitted test under low dimensional data and under high dimensional data, respectively, in Section~\ref{subsec:cross-fitting-simulation}.  The following is a summary of the results from the simulations.  When the dimensionality of the covariates is small, both non-cross-fitted and cross-fitted tests can achieve the desired type I error probability but the cross-fitted version tends to have lower power; when the dimensionality of the covariates is large, only the cross-fitted version maintains the desired type I error probability.

\section{Analysis of data on nursing hours and hospital performance}
\label{sec:real-analysis}

In this section we apply our test to a real data problem. In
\citet{Kennedy:2017cq} and \citet{McHugh:2013gn}, the authors were interested in
whether  nurse staffing (measured in nurse hours per patient day)
affected a hospital's risk of excess readmission penalty after adjusting for
hospital characteristics (for more detail of the data and related background of
the problem, see \citet{McHugh:2013gn}). \citet{Kennedy:2017cq} 
proposed a doubly robust procedure to estimate the probability of readmission
penalty against average adjusted nursing hours per patient day, and provided
pointwise confidence intervals for the estimated treatment curve. However, their
method and analysis
did not answer the question of whether  nurse staffing significantly
affects the probability of excess readmission penalty after adjusting for
hospital characteristics. We apply our method to test the null hypothesis: nurse staffing 
does not affect  hospital's risk of excess readmission penalty after adjusting for
hospital characteristics, with  updated data from the year 2018. As a brief
summary of the data, the outcome $Y$ indicates whether the hospital was
penalized due to excess readmissions and are calculated by the Center for
Medicare \& Medicaid Services (\url{https://www.cms.gov}). The treatment $A$ measures nurse
staffing hours and we calculate it as the ratio of registered nurse hours to
inpatient days, which is slightly different from \citet{Kennedy:2017cq} and
\citet{McHugh:2013gn}, because we don't have access to the hospitals' financial data
and thus are not able to calculate adjusted inpatient days. The covariates
$\bs{L}$ include the following nine variables:
the number of beds,
the teaching intensity,
an indicator for not-for-profit status,
an indicator for whether the location is urban or rural,
the proportion of patients on Medicaid,
the average patient socioeconomic status,
a measure of market competition,
an indicator for whether the hospital has a skilled nursing facility (because our measure of nurse staffing hours $A$ will unfortunately include hours worked in such a skilled nursing facility),
and whether open heart or organ transplant surgery is performed (which serves as a measurement of whether the hospital is high technology).
We omitted patient race proportions and
operating margin from the analysis (present in \citet{Kennedy:2017cq} and
\citet{McHugh:2013gn}) because we don't have  access to those features.
Figure~\ref{fig:observed-data}
shows  an unadjusted loess fit of the readmission
penalty as a function of the average nursing hours and the loess fits of the covariates
against  the average nursing hours. The curves are not identical to those in
\citet{Kennedy:2017cq} since we've used updated data from 2018, but we observe
generally similar patterns and  nurse staffing hours is correlated with many
hospital characteristics. In the analysis, we use Super Learner
\citep{van2007super} with the same implementation as in \citet{Kennedy:2017cq}
to estimate $\pi_0$ and $\mu_0$. We truncate $\wh\pi$ to be 0.01 if it fell
below that value. The rule of thumb \cite{Fan:1996tk} is
applied for bandwidth selection as in Section~\ref{sec:simulation}.
Since
our test statistic is based on the integrated distance between the nonparametric
fit of the treatment effect curve and the parametric fit of the treatment effect curve
under the null hypothesis, a  byproduct of the test is the estimated
treatment effect curve. We plot the estimated treatment effect curve of average
nurse staffing 
in Figure~\ref{fig:aha-treatment-curve}
(the solid red curve).


\begin{figure}
  \centering
  \includegraphics[width=0.75\textwidth]{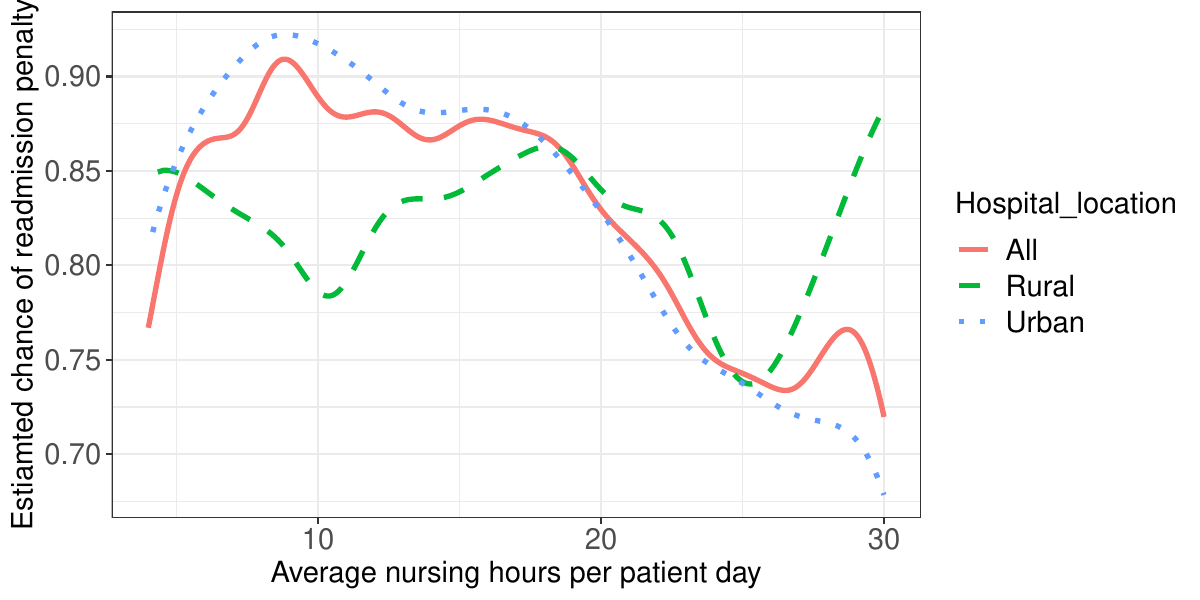}
  \caption{Estimated treatment effect of average nursing hours on probability of
    readmission penalty.   \label{fig:aha-treatment-curve}}
\end{figure}

We apply our test, Westling's test, and the discretized version of TMLE to this
data set. All versions of our methods and Westling's methods have p-values of 0.
(Exact zeros  are due to the fact that we use simulated reference distributions.)
The discretized TMLE reports a p-value of 0.0017. So all the  tests
suggest strong statistical evidence against the null hypothesis of 
constant treatment effect, meaning average nursing hours does have a significant
causal impact on a hospital's chance of being penalized for excess
readmissions. 
This interpretation requires that we have included all important confounders in our analysis.
If we have not, our test result is interpreted as being based on a partially adjusted estimate of association (rather than the treatment effect curve).  

Finally, we test whether 
an indicator for whether a hospital is in a rural or in an urban setting is a treatment effect modifier. 
We present the
estimated 
conditional treatment effect curves for rural hospitals and for the urban hospitals in
Figure~\ref{fig:aha-treatment-curve} (dashed green for the rural hospitals and
dotted blue for the urban hospitals).
We observe that  for the  hospitals in  urban areas, the pattern of the effect
curve has a shape that is close to  concave and is
similar to the pattern 
in the overall average treatment effect
curve: after average nursing hours exceeds
10, increasing the average nursing hour results in a decrease in the probability of
the readmission penalty. The increasing trend of the curve up to 8 average nursing hours seems to be
counter-intuitive, but
it turns out that there are not many hospitals in that  range of the data and thus
the left tail behavior is likely an artifact due to low sample size in that region. On the other hand, 
the effect curve for rural hospitals is wavy and does not suggest a clear pattern.

We first apply our main  test separately to each of the two groups of hospitals to see
whether the two individual treatment effect curves are  constant or not. We obtain
a p-value of 0.28 for the group of rural hospitals and a p-value of approximately 0 for the  urban hospitals.
This analysis suggests that  the conditional treatment curve for the rural hospitals is not
significantly different from constant, so the wavy pattern we see in the
estimated curve is likely due to randomness. Next we apply the extended test
procedure and obtain
a p-value of 0.007, which indicates a significant difference between
the two conditional treatment curves.
Again, these interpretations require that we have included all important confounders in our analysis.
It is somewhat surprising that in this dataset the rural hospitals show no significant effect of nurse staffing on readmission penalty.  It is possible that hospital occupancy rates, case mix, financial stability and differing abilities to recruit and retain nurses are important for understanding the effect of nurse staffing on hospital performance, either as confounders or as treatment effect modifiers.



\section*{Acknowledgements}
Charles R.\ Doss is partially supported by
NSF grant DMS-1712664 and 
NSF grant DMS-1712706.

\appendix

\section{Empirical process lemmas}



In this section, we first discuss the concept of stochastic
equicontinuity and provide two lemmas that will be used in the proof
of our main result. Let $\GG_n=\sqrt{n}(\PP_n-\PP)$. Then a sequence of
empirical processes $\{\GG_nV_n(f): f\in\mc{F}\}$ indexed by elements
$f$ from a metric space $\mc{F}$ equipped with semimetric $\rho$ is
stochastically equicontinuous \citep{vanderVaart:1996tf} if for every
$\epsilon>0$ and $\zeta>0$ there exists a $\delta>0$ such that
\begin{align*}
  \limsup_{n\to \infty}P\lp\sup_{\rho(f_1,f_2)<\delta}|\GG_nV_n(f_1)-\GG_nV_n(f_2)|>\epsilon\rp<\zeta.
\end{align*}
We also rely on a standard measurability condition, that of ``$P$-measurability'' \cite[Definition 2.3.3,][]{vanderVaart:1996tf},
which is generally satisfied and so we do not give the definition here. 

\begin{lemma}[Theorem 2.5.2, \cite{vanderVaart:1996tf}]
  \label{lem:identity-equicont}
  Consider the sequence of processes $\{\mathbb{G}_nV_n(\cdot):n\ge 1\}$ where $V_n$
  is the identity mapping, i.e.,
  \begin{align*}
    V_n(f)=f,
  \end{align*}
  here $f\in \mc{F}$ with envelope $F(\bs{Z})=sup_{f\in\mc{F}}|f(\bs{Z})|$. Assume
  $F$ is uniformly bounded, i.e., $\|F\|_{\mc{Z}}\le F_{\max}$ for some
  $F_{\max}<\infty$, and $\mc{F}$ has a finite uniform entropy
  integral, i.e., $J_1(\delta,\mc{F},L_2)<\infty$, and
  $J_1(\delta,\mc{F},L_2)$ is as defined in
  \eqref{eq:uniform-entropy-2-integral}
  Let the class
  $\mc{F}_{\delta}$ and $\mc{F}_{\infty}^2$ be $P$-measurable
  for every $\delta>0$, where  $\mc{F}_{\delta}=\{f_1-f_2: f_1,f_2\in \mc{F},
  \|f_1-f_2\|_2<\delta\}$ and $\mc{F}^2_{\delta}=\{(f_1-f_2)^2: f_1,f_2\in \mc{F},
  \|f_1-f_2\|_2<\delta\}$.
  Then $\{\mathbb{G}_nV_{n}(\cdot):n\ge 1\}$ is stochastically equicontinuous.
\end{lemma}

\begin{lemma}
  \label{lem:int-equicont}
  Consider the sequence of processes $\{\mathbb{G}_nV_n(\cdot):n\ge 1\}$ with
  \begin{align*}
    V_n(f)=\int_{\mc{A}}f(\bs{L},t)\,dt,
  \end{align*}
where $f\in \mc{F}$ with envelope $F$ as in Lemma~\ref{lem:identity-equicont}. Assume
  $F$ is uniformly bounded  and $\mc{F}$ has a finite uniform entropy  integral.
  Then $\{\mathbb{G}_nV_{n}(\cdot):n\ge 1\}$ is stochastically equicontinuous. 
\end{lemma}
\begin{proof}
  We check conditions (1)-(3) of Theorem 2.11.1 from  \cite{vanderVaart:1996tf}.
  For the Lindeberg condition (1), note
  \begin{align*}
    \EE\lb\|V_n\|_{\mc{F}}^2I(\|V_n\|_{\mc{F}}>\epsilon\sqrt{n})\rb\le \EE&\lb
                                                                            \lp\int_{\mc{A}}
                                                                            F(\bs{Z})\,dt\rp^2I\lp\int_{\mc{A}}
                                                                            F(\bs{Z})\,dt>\epsilon\sqrt{n}\rp\rb\\
                                                                          &\le
                                                                            \ls\lambda(\mc{A})F_{\max}\rs^2I(\lambda(\mc{A})F_{\max}>\epsilon\sqrt{n})\to 0,
  \end{align*}
  as $n\to \infty$ for any $\epsilon>0$, since by
  Assumption~\ref{assm:DA}\ref{assm:DA-item1}
  $\mc{A}$
  is compact. Here
  $\lambda(\cdot)$ is the Lebesgue measure on $\RR$.

  For the Lindeberg condition (2), 
  \begin{align*}
    \sup_{\rho(f_1,f_2)<\delta_n}\EE\lb V_n(f_1)-V_n(f_2)\rb^2&=
                                                                \sup_{\rho(f_1,f_2)<\delta_n}\EE\lb
                                                                \int_{\mc{A}}f_1(\bs{L},t)-f_2(\bs{L},t)\,dt\rb^2\\
                                                              &\le \lb 2\lambda(\mc{A})\delta_n\rb^2\to 0,
  \end{align*}
  for any $\delta\to 0$. So condition (2) is also satisfied.

  For the complexity condition (3), again we check that the process is measure-like. Note
  \begin{align*}
    \frac{1}{n}\lb
    V_n(f_1)-V_n(f_2)\rb^2&=\frac{1}{n}\ls\int_{\mc{A}}f_1(\bs{L},t)-f_2(\bs{L},t)\,dt\rs^2\\    
                          &\le\frac{1}{n}\int_{\mc{A}}\lb f_1(\bs{L},t)-f_2(\bs{L},t)\rb^2\,dt,
  \end{align*}
  by Jensen's inequality. So $V_n(f)$ is measure-like with
  $\nu_{ni}=\frac{1}{n}\delta_{\bs{L}_i}\times \lambda$. That completes the proof.
\end{proof}

Recall that  $N(\epsilon,\mc{F},\|\cdot\|)$ denotes the $\epsilon$-covering number of $\mc{F}$, i.e.,
the minimal number of $\epsilon$-balls (with distance defined by $\|\cdot\|$)
needed to cover $\mc{F}$.
\begin{lemma}
  \label{lem:integral-decreases-entropy}
  Let $\mc F$ be a class of functions $f(x,y)$ on $\mc X \times \mc Y$ with finite uniform entropy.  Let $\PP$ be a measure on $\mc X \times \mc Y$.  Then the class of functions $\mc G:= \{ g(y) = \PP f( X, y) : f \in \mc{F} \}$ has
  $\sup_Q N(\epsilon, \mc{G}, L_2(Q) )
  \le \sup_Q N(\epsilon, \mc{F}, L_2(Q))$, for all $\epsilon  >
  0$. 
\end{lemma}
\begin{proof}
  By Jensen's inequality, for a measure $Q$ on $\mc Y$,
  $$\int g^2(y) dQ(y) = \int \ls \int f(x,y) d\PP(x)\rs^2 dQ(y)
  \le \iint f^2(x,y) d\PP(x) dQ(y),$$ and $d\PP(x) dQ(y)$ is a measure on $\mc X \times \mc Y$.  Thus, with $g_i(y) := \PP f_i(X,y)$, $i=1,2$, we have for a measure $Q$, $\int (g_1-g_2)(y)^2 dQ(y) \le \int (f_1-f_2)(x,y)^2 d \wt Q(x,y)$ (with $\wt Q(x,y) := \PP(x)Q(y)$), and so the conclusion follows.
\end{proof}

The following two results are slight modifications of  Theorem 3 from
\cite{andrews1994empirical}, 
so that we can apply them to $J_2$ as well as to $J \equiv J_1$ (as defined in \eqref{eq:uniform-entropy-2-integral}).
The proof of that theorem gives the following statements.

\begin{lemma}
  \label{lem:andrews-add}
  For two classes of measurable functions $\mc{G},$ $\mc{H}$, with envelopes
  $G$ and $H$, respectively, for any $\epsilon > 0$ and probability measure $Q$, we have
  \begin{align*}
    \MoveEqLeft
    N( L_2(Q), \mc{G} + \mc{H},  \epsilon \| G + H \|_{Q,2})
    \\&    \le  
    N(L_2(Q), \mc{G}, 2^{-1} \epsilon \|G \|_{Q,2})
    N(L_2(Q), \mc{H}, 2^{-1} \epsilon \|H \|_{Q,2}).
  \end{align*}
\end{lemma}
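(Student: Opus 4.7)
The plan is to build a cover of $\mc{G}+\mc{H}$ by taking pointwise sums of points in covers of $\mc{G}$ and $\mc{H}$ and then verify, via the $L_2(Q)$ triangle inequality together with the pointwise envelope bounds, that the radii combine in the claimed way.

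First I would fix $\epsilon > 0$ and a probability measure $Q$, and choose finite subsets $\mc{G}_0 \subseteq \mc{G}$ and $\mc{H}_0 \subseteq \mc{H}$ realizing the covering numbers on the right-hand side: every $g \in \mc{G}$ admits some $g' \in \mc{G}_0$ with $\|g - g'\|_{Q,2} \le 2^{-1}\epsilon\,\|G\|_{Q,2}$, and similarly for $\mc{H}$. Then I would propose the candidate cover $\{\, g' + h' : g' \in \mc{G}_0,\; h' \in \mc{H}_0 \,\}$, which has cardinality at most the right-hand side of the claimed inequality.

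Given any $g+h \in \mc{G}+\mc{H}$, pick $g',h'$ from the respective finite covers. By Minkowski's inequality in $L_2(Q)$,
\begin{equation*}
\|(g+h)-(g'+h')\|_{Q,2} \;\le\; \|g-g'\|_{Q,2} + \|h-h'\|_{Q,2} \;\le\; 2^{-1}\epsilon\bigl(\|G\|_{Q,2}+\|H\|_{Q,2}\bigr).
\end{equation*}
The only remaining step is to compare $\|G\|_{Q,2}+\|H\|_{Q,2}$ with $\|G+H\|_{Q,2}$. Because envelopes are nonnegative, we have pointwise $0 \le G \le G+H$ and $0 \le H \le G+H$, so $\|G\|_{Q,2} \le \|G+H\|_{Q,2}$ and $\|H\|_{Q,2} \le \|G+H\|_{Q,2}$. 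Hence $\|G\|_{Q,2}+\|H\|_{Q,2} \le 2\,\|G+H\|_{Q,2}$, and substituting this gives $\|(g+h)-(g'+h')\|_{Q,2} \le \epsilon\,\|G+H\|_{Q,2}$, showing that the candidate cover is an $\epsilon\|G+H\|_{Q,2}$-cover for $\mc{G}+\mc{H}$ in $L_2(Q)$.

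There is no real obstacle here; the main (minor) subtlety is that Minkowski's inequality runs the ``wrong way'' for $\|G+H\|_{Q,2}$ versus $\|G\|_{Q,2}+\|H\|_{Q,2}$, so one must not try to apply the triangle inequality to the envelopes. Instead one uses nonnegativity of envelopes to get the required one-sided pointwise bound, which gives the factor of $2$ that exactly absorbs the $2^{-1}$ built into the radii.
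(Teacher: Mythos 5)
Your proof is correct. The paper states this lemma without proof, treating it as a standard preservation-of-covering-numbers result (in the spirit of Andrews (1994) and Lemma 16 of Nolan and Pollard), and your argument — sum the centers of the two covers, apply Minkowski to the differences, and use nonnegativity of the envelopes to get $\|G\|_{Q,2}+\|H\|_{Q,2}\le 2\|G+H\|_{Q,2}$, which absorbs the factor $2^{-1}$ in the radii — is exactly the standard one. The only cosmetic point is that covering-number centers need not lie in $\mc{G}$ or $\mc{H}$; dropping the assumption $\mc{G}_0\subseteq\mc{G}$ changes nothing in the argument.
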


\begin{lemma}
  \label{lem:andrews-mult}
  For two classes of measurable functions $\mc{G},$ $\mc{H}$, with envelopes
  $G$ and $H$, respectively, and for any $\epsilon > 0$, we have
  \begin{align*}
    \MoveEqLeft    \sup_Q N( L_2(Q), \mc{G} \mc{H},  \epsilon \| (G \vee 1) (H \vee 1) \|_{Q,2})
    \\& \le \sup_Q N(L_2(Q), \mc{G}, 2^{-1} \epsilon \|G \|_{Q,2})
    \sup_Q N(L_2(Q), \mc{H}, 2^{-1} \epsilon \|H \|_{Q,2}).
  \end{align*}
\end{lemma}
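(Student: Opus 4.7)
The plan is to reduce covering $\mc{G}\mc{H}$ in $L_2(Q)$ to covering each factor class separately, via a change-of-measure argument that exploits the fact that the hypothesis takes a supremum over all probability measures. Fix a probability measure $Q$. If $\|G\|_{Q,2}=0$ then every $g\in\mc{G}$ is zero $Q$-a.e., so $\mc{G}\mc{H}$ is a single point in $L_2(Q)$ and the bound is trivial; the case $\|H\|_{Q,2}=0$ is symmetric. In the nontrivial case, introduce the weighted probability measures $dQ_G := G^2\,dQ/\|G\|_{Q,2}^2$ and $dQ_H := H^2\,dQ/\|H\|_{Q,2}^2$.

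The starting pointwise inequality is
\[
|gh - g_i h_j| \;\le\; |g - g_i|\,H \;+\; G\,|h - h_j|,
\]
obtained by writing $gh - g_i h_j = (g-g_i)h + g_i(h-h_j)$ and using the envelope bounds $|h|\le H$ and $|g_i|\le G$. Taking $L_2(Q)$-norms gives
\[
\|gh - g_i h_j\|_{L_2(Q)} \;\le\; \|H(g-g_i)\|_{L_2(Q)} + \|G(h-h_j)\|_{L_2(Q)}.
\]

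Next I would bound each of the two terms by changing measure. Using the identity $\|G(h-h_j)\|_{L_2(Q)} = \|G\|_{Q,2}\,\|h-h_j\|_{L_2(Q_G)}$, together with $\|H\|_{L_2(Q_G)} = \|GH\|_{L_2(Q)}/\|G\|_{L_2(Q)}$, the hypothesis supplies a cover of $\mc{H}$ in $L_2(Q_G)$ by at most $\sup_{Q'} N(L_2(Q'),\mc{H},(\epsilon/2)\|H\|_{Q',2})$ balls of radius $(\epsilon/2)\|H\|_{L_2(Q_G)}$; this yields a center $h_j$ with $\|G(h-h_j)\|_{L_2(Q)} \le (\epsilon/2)\|GH\|_{L_2(Q)}$. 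The symmetric argument using $Q_H$ gives a center $g_i$ with $\|H(g-g_i)\|_{L_2(Q)} \le (\epsilon/2)\|GH\|_{L_2(Q)}$. Adding these bounds and invoking $GH \le (G\vee 1)(H\vee 1)$ pointwise produces
\[
\|gh - g_i h_j\|_{L_2(Q)} \;\le\; \epsilon \|GH\|_{L_2(Q)} \;\le\; \epsilon \|(G\vee 1)(H\vee 1)\|_{L_2(Q)}.
\]
The number of product centers $(g_i, h_j)$ is at most the product of the two cardinalities, and taking the supremum over $Q$ yields the claim.

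The argument poses no genuine analytic obstacle; the only subtlety is the bookkeeping around the change of measure, ensuring that the covering radii $(\epsilon/2)\|H\|_{L_2(Q_G)}$ and $(\epsilon/2)\|G\|_{L_2(Q_H)}$ in the weighted norms correctly translate into the expected $\epsilon/2$ fractions of $\|GH\|_{L_2(Q)}$ in the original norm. The pairing of the envelope $\vee 1$ on the right-hand side is used only in the last pointwise inequality and is harmless, since it merely inflates the bounding envelope.
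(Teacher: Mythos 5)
The paper states Lemma~\ref{lem:andrews-mult} without proof: it is imported from the literature (it is essentially Lemma A.1 of Andrews (1994) / Lemma 16 of Nolan and Pollard, which the paper cites alongside it), so there is no in-paper argument to compare against. Your change-of-measure proof is the standard one and is correct: the decomposition $gh-g_ih_j=(g-g_i)h+g_i(h-h_j)$, the identities $\|G(h-h_j)\|_{L_2(Q)}=\|G\|_{Q,2}\|h-h_j\|_{L_2(Q_G)}$ and $\|H\|_{L_2(Q_G)}=\|GH\|_{Q,2}/\|G\|_{Q,2}$, and the evaluation of the supremum on the right-hand side at the tilted measures $Q_G,Q_H$ all check out, as does the final pointwise bound $GH\le (G\vee 1)(H\vee 1)$ (valid since envelopes are nonnegative). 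The one point you should make explicit is that the step $|g_i(h-h_j)|\le G|h-h_j|$ requires the covering centers $g_i$ to satisfy $|g_i|\le G$; this is not automatic if centers are allowed to be arbitrary functions, but it is harmless because one may either take the centers from $\mc{G}$ itself or replace an arbitrary center by its truncation $(g_i\wedge G)\vee(-G)$, which does not increase its $L_2(Q_H)$-distance to any element of $\mc{G}$. With that remark added, the argument is complete; the degenerate cases $\|G\|_{Q,2}=0$ or $\|GH\|_{Q,2}=0$ are handled correctly since the target radius $\epsilon\|(G\vee1)(H\vee1)\|_{Q,2}\ge\epsilon$ is always positive.
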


\section{Proof of Theorem~\ref{thm:null-normality}}
\label{sec:proof-main-theor-refm}

Here we present the proof of Theorem~\ref{thm:null-normality}.  An
outline of the proof is given in the main text.  The proof is broken into 5 steps and a 6th concluding step.  Step 1 is about the main term.  Steps 2--5 are about remainder terms.  The remainder terms are all (essentially, perhaps after some initial analysis) broken up into two main terms, one being a V-process and the other depending essentially on the size of $(\wh \pi, \wh \mu) - (\bar \pi, \bar \mu)$.

\begin{proof}[Proof of Theorem~\ref{thm:null-normality}]
  We write the test statistic as
  \begin{align}
    \label{eq:Tn-decompose}
    T_n=n\sqrt{h}\int_{\mc{A}}\lb D_1(a)+D_2(a)+D_3\rb^2w(a)\,da, 
  \end{align}
  where
  \begin{align*}
    D_1(a)&:=\widehat{\theta}_h(a)-\tilde{\theta}_h(a)\\
    D_2(a)&:=\tilde{\theta}_h(a)-\PP_n\bar{\xi}\\
    D_3&:=\PP_n\bar{\xi}-\PP_n\widehat{\xi},
  \end{align*}
  and $\tilde{\theta}_h(a)$ is the local linear estimator regressing the oracle
  doubly robust mappings $\bar{\xi}_i:=\xi(\bs{Z}_i;\bar{\pi},\bar{\mu})$ on
  $A_i$.

  We will show the dominating term
  $n\sqrt{h}\int_{\mc{A}}\{D_2(a)\}^2w(a)\,da$ converges to the Normal distribution
  given in the conclusion of the theorem. For the rest of the terms, we will
  see that $n\sqrt{h}\int_{\mc{A}}\{D_1(a)\}^2w(a)\,da$,
  $n\sqrt{h}\int_{\mc{A}} D_3^2w(a)\,da$ and all the cross-product terms are
  asymptotically negligible.

  Note that the negligibility of the cross-product terms does {\it not} follow from negligibility of the main squared terms and the Cauchy-Schwarz inequality.  The reason is that the approximate distribution of  $n\sqrt{h}\int_{\mc{A}}\{D_2(a)\}^2w(a)\,da$ has a mean of order $1/\sqrt{h} \to \infty$.
  Thus, to apply a Cauchy-Schwarz argument to, say, $n \sqrt{h} \int_{\mc{A}} D_1(a) D_2(a) w(a) da$, one would need to not just show that $n h^{1/2} \int_{\mc{A}} D_1(a)^2 w(a)da$ is negligible but that it is $o_p(\sqrt{h})$, which will not generally be true.

  
  \bigskip

  
  \noindent \textbf{Step 1.}
  We first show the asymptotic distribution of
  $n\sqrt{h}\int_{\mc{A}}\{D_2(a)\}^2w(a)\,da$ by applying Theorem 2.1 of
  \citet{alcala1999goodness}
  (which extends the results of \citet{Hardle:1993ih} to allow local polynomial estimators)
  to $n\sqrt{h}\int_{\mc{A}}\{D_2(a)\}^2w(a)\,da$ .
  We can see assumptions (A1) and (A2) of \citet{alcala1999goodness} are automatically satisfied by our
  Assumption~\ref{assm:IA}\ref{assm:IA-item2},
  \ref{assm:DA}\ref{assm:DA-item1} and \ref{assm:DA}\ref{assm:DA-item2}. 

  Next, we check Assumption (A4) of \citet{alcala1999goodness}. By Assumption~\ref{assm:DA}\ref{assm:DA-item3}, we know $\mu_0(\bs{l},a)$ is uniformly bounded. And since $m_0(a) = \int\mu_0(\bs{l},a)\,dP(\bs{l})$, for any $a_0\in\mc{A}$ and $a_n\to a_0$, by Assumption~\ref{assm:DA}\ref{assm:DA-item4} and the dominated convergence theorem, we have $|m_0(a_0)-m_0(a_n)|=|\int\{\mu_0(\bs{l},a_0)-\mu_0(\bs{l},a_n)\}\,dP(\bs{l})|\to 0$ as $a_n \to a_0$ and thus $m_0(a)$ is continuous and bounded. Similarly, we can show $\bar{m}, 1/\varpi_0,1/\bar{\varpi}$ are continuous and uniformly bounded. Then applying the dominated convergence theorem to \eqref{eq:condition-var} with $a_n\to a_0$ gives the continuity of $\sigma^2(a)$. Then by the total variance formula, we have
  \begin{align*}
    \MoveEqLeft\Var\lp\xi(\bs{Z};\bar{\pi},\bar{\mu} )|A=a\rp\\& = \EE\lb
    \Var\lp\xi(\bs{Z};\bar{\pi},\bar{\mu} )|\bs{L},A=a\rp|A=a\rb\\&\quad +\Var\lb  \EE\lp\xi(\bs{Z};\bar{\pi},\bar{\mu} )|\bs{L},A=a\rp|A=a\rb .
  \end{align*}
  From \eqref{eq:dr-mapping}, we have  
  \begin{align*}
    \MoveEqLeft   \Var\lp\xi(\bs{Z};\bar{\pi},\bar{\mu} )|\bs{L},A=a\rp\\
    &=\Var\lp Y-\mu(\bs{L},A)|\bs{L},A=a\rp\lb\frac{1}{\bar\pi(a|\bs{L})}\int\bar\pi(a|\bs{l})\,dP(\bs{l})\rb^2\\
    & = \tau(\bs{L},a)\lb\frac{1}{\bar\pi(a|\bs{L})}\int\bar\pi(a|\bs{l})\,dP(\bs{l})\rb^2
  \end{align*}
  and by Assumption~\ref{assm:DA}\ref{assm:DA-item4}, we have $\tau(\bs{l},a)$ is
  positive everywhere, so we know the expectation $\EE\lb
  \Var\lp\xi(\bs{Z};\bar{\pi},\bar{\mu} )|\bs{L},A=a\rp|A=a\rb$ is positive for
  all $a\in \mc{A}$ and thus the
  conditional variance $\sigma^2(a)$ is positive.
  Then since
  $\mc{A}$ is compact, by the continuity of $\sigma^2(a)$, we have $\sigma^2(a)$ is bounded below from 0 and bounded
  above.

  Assumption (A5) of \citet{alcala1999goodness} also holds since our null parametric
  model is a linear model
  with only intercept term which can be estimated with
  $\sqrt{n}$-consistency. Assumptions (K1) and (K2) of \citet{alcala1999goodness} are also met by our
  Assumption~\ref{assm:EA}\ref{assm:EA-item3} and
  \ref{assm:EA}\ref{assm:EA-item4}. So we have
  \begin{align}
    d\lb n\sqrt{h}\int_{\mc{A}} D_2(a)^2 w(a)\,da,
    \, N\lp
    b_{0h},V\rp\rb\to 0,
  \end{align}
  as $n\to \infty$.

  \bigskip



  \bigskip
  \noindent \textbf{Step 2 ($D_3^2$).}  \;
  We now show
  that $D_3 = o_p(1 / \sqrt{n h^{1/2}})$ and so in particular, the integral
  $n\sqrt{h}\int_{\mc{A}}D_3^2w(a)\,da$ is $o_p(1)$ as $n \to \infty$.
  We can write $D_3$ as
  \begin{align}
    \label{eq:D3-decomp}
    \begin{split}  \MoveEqLeft\PP_n\{\xi(\bs{Z};\bar{\pi},\bar{\mu})-\widehat{\xi}(\bs{Z};\widehat{\pi},\widehat{\mu})\}\\&=\PP_n\{\xi(\bs{Z};\bar{\pi},\bar{\mu})-\xi(\bs{Z};\widehat{\pi},\widehat{\mu})\}+\PP_n\{\xi(\bs{Z};\widehat{\pi},\widehat{\mu})-\widehat{\xi}(\bs{Z};\widehat{\pi},\widehat{\mu})\}.
    \end{split}
  \end{align}
  This type of decomposition is used repeatedly below and is discussed/explained in the proof outline in the main document. 
  The first term above (is split up into an empirical process and a `second order remainder' term and) has order of magnitude given by
  Lemma~\ref{lem:step2-1} as
  $O_p(1 /\sqrt{n}+s_n^{\infty}r_n^{\infty})$.  The second term on the right side of
  \eqref{eq:D3-decomp} is a V-process and is shown to be $O_p(n^{-1/2})$ by Lemma~\ref{lem:3}.
So $D_3=O_p(1 /\sqrt{n}+s_n^{\infty}r_n^{\infty})$ and  thus we
conclude that $D_3= o_p((nh^{1/2})^{-1/2})$ by Assumption~\ref{assm:EA}\ref{assm:EA-item5}.

  
  \bigskip
  \noindent \textbf{Step 3 ($D_1^2$).} \;  Now we show the asymptotic negligibility of
  $n\sqrt{h}\int_{\mc{A}}D_1(a)^2w(a)\,da$.  To use the standard
  representation of a local polynomial estimator (see
  Lemma~\ref{lem:LPE-basic-rep}), we let
  $\wh{\bs{D}}_{ha} = \PP_n\{g_{ha}(A)K_{ha}(A)g_{ha}^T(A)\}$ and
  $W_{ha}(t) := g_{h,0}^T(0) \wh{\bs{D}}_{ha}^{-1} g_{ha}(t) K_{ha}(t)$.
  Then (by Lemma~\ref{lem:LPE-basic-rep}) we write $D_1(a)$ as
  \begin{align}
    \label{eq:D1-decomp}
    \begin{split}
      D_1(a)&=g_{ha}^T(a) \widehat {\bs D}_{ha}^{-1} \PP_n \ls g_{ha}(A) K_{ha}(A)
      \lb  \wh \xi( \bs Z; \wh \pi, \wh \mu ) - \xi(\bs Z; \bar \pi,
      \bar \mu) \rb \rs\\
      &=d_{1,1}(a)+d_{1,2}(a)
    \end{split} 
  \end{align}
  where
  \begin{equation}
    \label{eq:13}
    \begin{split}
      d_{1,1}(a) & := \PP_n \ls W_{ha}(A)
      \lp  \wh \xi( \bs Z; \wh \pi, \wh \mu ) - \xi(\bs Z; \widehat{\pi},
      \widehat{\mu}) \rp \rs, \\
      d_{1,2}(a)&:=  \PP_n \ls W_{ha}(A)
      \lp   \xi( \bs Z; \wh \pi, \wh \mu ) - \xi(\bs Z; \bar{\pi},
      \bar{\mu}) \rp \rs.
    \end{split}
  \end{equation}
  We have that
  $\int d_{1,2}(a)^2 w(a) \,da=
  o_p(1/n+(s_n^{\infty}r_n^{\infty})^2)$ by
  Lemma~\ref{lem:step3-1}.
  For the other term, 
  we have $\int \{d_{1,1}(a)\}^2w(a)da = O_p(1/n)$ by Lemma~\ref{lem:5}.
  Applying the Cauchy-Schwarz inequality
  yields $\int
  D_1(a)^2w(a)\,da=o_p(1/n+(s_n^{\infty}r_n^{\infty})^2)$ and thus $n\sqrt{h}\int
  D_1(a)^2w(a)\,da=o_p(1)$ by
  Assumptions~\ref{assm:EA}\ref{assm:EA-item1} and \ref{assm:EA}\ref{assm:EA-item5}.

  \medskip  \noindent\textbf{Step 4 ($D_2 D_3$).} \;  Next we
 will consider  the integrated crossproduct term
  $n\sqrt{h}\int_{\mc{A}}D_2(a)D_3w(a)\,da$.
  This term can be
  analyzed using the above results, since $D_3$ factors out of the integral. 
  The analysis is given in Lemma~\ref{lem:d2d3}, which shows that
  $n\sqrt{h}\int_{\mc{A}}D_2(a)D_3w(a)\,da$ is $o_p(1)$
  by showing that
  $\int_{\mc A} D_2(a) w(a) \, da$ is  $O_p(n^{-1/2}) = o_p(1 / \sqrt{n \sqrt{h}})$
  (and combining that with the result of Step~2).


  \bigskip \noindent
  \textbf{Step 5 ($D_1 D_3$).} \;
  Here we show $\int_{\mc{A}} D_1(a) D_3 w(a) \, da$ is $o_p ( (n \sqrt{h})^{-1})$. 
  This follows directly by the results of Step 2 and of Step 3, and the Cauchy-Schwarz inequality.

  \bigskip
  \noindent\textbf{Step 6 ($D_1D_2$).} \; Now we show that $\int_{\mc{A}}D_1(a)D_2(a)w(a)\,da$  is $  o_p( (n\sqrt{h})^{-1})$.
  Recall in Step 4, we write
  \begin{align}
    D_2(a)=\tilde{\theta}_h(a)-\PP\bar{\xi}+\PP\bar{\xi}-\PP_n\bar{\xi},
  \end{align}
  and thus there we can focus on $\int D_1(a)\{\tilde{\theta}_h(a)-\PP\bar{\xi}\}
  w(a)\,da$, because the other term
  \begin{align*}
    \int D_1(a)(\PP\bar{\xi}-\PP_n\bar{\xi})
    w(a)\,da
    &=(\PP\bar{\xi}-\PP_n\bar{\xi})\int D_1(a)w(a)\,da\\
    &\le |\PP\bar{\xi}-\PP_n\bar{\xi}|\int |D_1(a)|w(a)\,da\\
    &\le |\PP\bar{\xi}-\PP_n\bar{\xi}|\sqrt{\int \{D_1(a)\}^2w(a)\,da}\\
    &=O_p(n^{-1/2}) o_p( (n\sqrt{h})^{-1/2}).
  \end{align*}
  where the last line above comes from the results in Step 3.
  So we can write
  \begin{align*}
    & \int_{\mc{A}}D_1(a)D_2(a)w(a)\,da\\
    &\quad=\int_{\mc{A}}D_1(a)\lb\tilde{\theta}_h(a)-\PP\bar{\xi}\rb w(a)\,da +o_p((n\sqrt{h})^{-1}),
  \end{align*}
  and further  we write $\int_{\mc{A}}D_1(a)\lb\tilde{\theta}_h(a)-\PP\bar{\xi}\rb
  w(a)\,da$ as the sum 
  \begin{align}
    \label{eq:step5-decompose}
    \int_{\mc{A}}d_{1,1}(a)\lb\tilde{\theta}_h(a)-\PP\bar{\xi}\rb
    w(a)\,da+\int_{\mc{A}}d_{1,2}(a)\lb\tilde{\theta}_h(a)-\PP\bar{\xi}\rb
    w(a)\,da,
  \end{align}
  where $d_{1,1}$ and $d_{1,2}$ are defined in equation \eqref{eq:13}.
  From Lemma~\ref{lem:equiv-kernel-lemma}
  $ g_{ha}(a) \wh{\bs D}_{ha}^{-1} g_{ha}(A_1)$ converges to
  $\varpi_0(a)^{-1}$ almost surely (so also in-probability) regardless of the
  value of $A_1$.  Further, as in the proof of Lemma~\ref{lem:d2d3}, we write
  out $\tilde{\theta}_h(a)-\PP\bar{\xi}$ 
  as
  \begin{align}
    \wt \theta(a) - \PP \overline{\xi}
    =
    \frac{1}{n}\sum_{i=1}^n\frac{1}{h\varpi_0(a)}
    K\lp\frac{A_i-a}{h}\rp
    \lp \bar{\xi}_i-\PP\bar{\xi}\rp
    \lb                                      1+o_p(1)\rb.  
  \end{align}
Let $\overline{\epsilon}:=\xi(\bs{Z};\overline{\pi},\overline{\mu})-\PP\xi(\bs{Z};\overline{\pi},\overline{\mu})$, and specifically $\overline{\epsilon}_i:=\xi(\bs{Z}_i;\overline{\pi},\overline{\mu})-\PP\xi(\bs{Z};\overline{\pi},\overline{\mu})$ for $i=1,\ldots, n$.  So the order of  the first term on right side of   \eqref{eq:step5-decompose} is dominated by
  \begin{align}
    \label{eq:3-order-v-simple}
    \int    \frac{1}{\varpi_0^2(a)} \PP_n \ls K_{ha}(A)
    \lb  \wh \xi( \bs Z; \wh \pi, \wh \mu ) -  \xi(\bs Z;  \wh \pi,
    \wh \mu) \rb\rs\frac{1}{n}\sum_{i=1}^n
    K_h\lp A_i-a\rp
    \bar\epsilon_i w(a)\,da.  
  \end{align}
 Since we can write
 \begin{align*}
 \wh \xi( \bs Z; \wh \pi, \wh \mu ) -  \xi(\bs Z;  \wh \pi,
    \wh \mu ) &=
                \frac{Y-\wh\mu(\bs{L},A)}{\wh\pi(A|\bs{L})}\lp\frac{1}{n}\sum_{j=1}^n\wh\pi(A|\bs{L}_j)-\int\wh\pi(A|\bs{l})\,d\PP(\bs{l})\rp\\
   &\qquad +
     \frac{1}{n}\sum_{j=1}^n\wh\mu(\bs{L}_j,A)-\int\wh\mu(\bs{l},A)\,d\PP(\bs{l})\\
   &=\frac{Y-\wh\mu(\bs{L},A)}{\wh\pi(A|\bs{L})}\frac{1}{n}\sum_{j=1}^n\wt{\wh\pi}(A|\bs{L}_j)
     + \frac{1}{n}\sum_{j=1}^n\wt{\wh\mu}(\bs{L}_j, A),
 \end{align*}
 where   we let tilde $ \wt{ \cdot } $
  operate on any $\mu, \pi$ to yield
  $\wt{\pi}(a_1|l_2) := \pi(a_1|l_2) - \PP \pi(a_1| \bs L)$,   and similarly
  $\wt\mu(l_2,a_1) := \mu(l_2,a_1) - \PP \mu(\bs L,a_1)$,
 we can further decompose \eqref{eq:3-order-v-simple} as
 \begin{align*}
 \int    \frac{1}{\varpi_0^2(a)} \PP_n \ls K_{ha}(A)
    \lb \frac{Y-\wh\mu(\bs{L},A)}{\wh\pi(A|\bs{L})}\frac{1}{n}\sum_{j=1}^n\wt{\wh\pi}(A|\bs{L}_j) \rb\rs\frac{1}{n}\sum_{i=1}^n
    K_h\lp A_i-a\rp
    \bar\epsilon_i w(a)\,da
 \end{align*}
 plus
 \begin{align*}
    \int    \frac{1}{\varpi_0^2(a)} \PP_n \ls K_{ha}(A)
  \frac{1}{n}\sum_{j=1}^n\wt{\wh\mu}(\bs{L}_j, A) \rs\frac{1}{n}\sum_{i=1}^n
    K_h\lp A_i-a\rp
    \bar\epsilon_i w(a)\,da,
 \end{align*}
 which are in the forms of the two terms studied in Lemma~\ref{lem:4}
 ($V_{12,\pi}$ and $V_{12,\mu}$, respectively.).
 So applying Lemma~\ref{lem:4} yields that \eqref{eq:3-order-v-simple} is
 $o_p((n\sqrt{h})^{-1})$.
  For the second term on the right side of  \eqref{eq:step5-decompose},  we have
  \begin{align*}
    \MoveEqLeft    \int d_{1,2}(a)\lb\tilde{\theta}_h(a)-\PP\bar{\xi}\rb w(a)\,da\\
    &= \int R_{n,1,a}\lb\tilde{\theta}_h(a)-\PP\bar{\xi}\rb w(a)\,da+\int R_{n,2,a}\lb\tilde{\theta}_h(a)-\PP\bar{\xi}\rb w(a)\,da,
  \end{align*}
  where
  \begin{align*}
    R_{n,1,a} & := g_{ha}^T \widehat {\bs D}_{ha}^{-1} (\PP_n - \PP) \ls g_{ha}(A) K_{ha}(A)
                \lb   \xi( \bs Z; \wh \pi, \wh \mu ) - \xi(\bs Z; \bar \pi, \bar \mu) \rb \rs  \\
    R_{n,2,a} & := g_{ha}^T \widehat {\bs  D}_{ha}^{-1}  \PP \ls  g_{ha}(A) K_{ha}(A)
                \lb   \xi( \bs Z; \wh \pi, \wh \mu ) - \xi(\bs Z; \bar \pi, \bar \mu) \rb\rs.
  \end{align*}
  The first term $\int R_{n,1,a}\{\tilde{\theta}_h(a)-\PP\bar{\xi}\} w(a)\,da$  could be bounded by Cauchy-Schwarz as
  \begin{align*}
    \MoveEqLeft\int R_{n,1,a}\lb\tilde{\theta}_h(a)-\PP\bar{\xi}\rb w(a)\,da\\
    &\le \sqrt{\int (R_{n,1,a})^2w(a)\,da\int
      \lb\tilde{\theta}_h(a)-\PP\bar{\xi}\rb^2 w(a)\,da}.
  \end{align*}
From Lemma~\ref{lem:step3-1}, we know $\int
(R_{n,1,a})^2w(a)\,da=o_p(1/n)$ and from Step 1, we have $\int
      \{\tilde{\theta}_h(a)-\PP\bar{\xi}\}^2
      w(a)\,da=\{O(1/\sqrt{h})+O_p(1)\}/(n\sqrt{h})$. Thus
        \begin{align*}
    \MoveEqLeft\int R_{n,1,a}\lb\tilde{\theta}_h(a)-\PP\bar{\xi}\rb w(a)\,da\\
     &=\sqrt{o_p(1/n)\lb O\lp\frac{1}{\sqrt{h}}\rp+O_p(1)\rb/(n\sqrt{h})}\\
    &=o\lp(n\sqrt{h})^{-1}\rp.
  \end{align*}
  Next, with a similar argument as in  \eqref{eq:3-order-v-simple}, we see
the   order of the 
  second term $\int R_{n,2,a}\{\tilde{\theta}_h(a)-\PP\bar{\xi}\} w(a)\,da$ is
  \begin{align}
    \int    \frac{1}{\varpi_0^2(a)} \PP \ls K_{ha}(A)
    \lb   \xi( \bs Z; \wh \pi, \wh \mu ) -  \xi(\bs Z;  \bar \pi,
    \bar \mu) \rb\rs\frac{1}{n}\sum_{i=1}^n
    K_h\lp A_i-a\rp
    \bar\epsilon_i w(a)\,da\lb 1+ o_p(1)\rb,      
  \end{align}
  and thus dominated by
    \begin{align}
    \int    \frac{1}{\varpi_0^2(a)} \PP \ls K_{ha}(A)
    \lb   \xi( \bs Z; \wh \pi, \wh \mu ) -  \xi(\bs Z;  \bar \pi,
    \bar \mu) \rb\rs\frac{1}{n}\sum_{i=1}^n
    K_h\lp A_i-a\rp
    \bar\epsilon_i w(a)\,da.      
  \end{align}
  An empirical process argument,
  given in
  Lemma~\ref{lem:step5-decompose-term2},
  shows that the above term is $o_p( (n\sqrt{h})^{-1})$.
  Finally, combining the above, it is straight forward to see
  $\int_{\mc{A}}D_1(a)D_2(a)w(a)\,da = o_p((n\sqrt{h})^{-1})$.

  \bigskip
  \noindent\textbf{Step 7 (conclusions).}
  We have shown that $T_n = n\sqrt{h}\int_{\mc{A}}\{D_2(a)\}^2w(a)\,da+o_p(1)$. Then
  by the triangle inequality, we have
  \begin{align*}
    d\lb T_n,N(b_{0h},V))\rb&\le  d\lb T_n,n\sqrt{h}\int_{\mc{A}}\{D_2(a)\}^2w(a)\,da\rb\\&\quad+d\lb n\sqrt{h}\int_{\mc{A}}\{D_2(a)\}^2w(a)\,da, N\lp
    b_{0h},V\rp\rb.
  \end{align*}
  The first term on the right side of the inequality has been shown to go to 0 as
  $n\to \infty$. For the second term,
  by the definition of Dudley metric, we have
  \begin{align*}
    \MoveEqLeft  d\lb T_n,n\sqrt{h}\int_{\mc{A}}\{D_2(a)\}^2w(a)\,da\rb\\ &= \sup\lb \EE g(T_n)-\EE g\lp
                                                                            n\sqrt{h}\int_{\mc{A}}\{D_2(a)\}^2w(a)\,da\rp:
                                                                            \|g\|_{BL}\le 1\rb\\
                                                                          &\le \EE \lb\left\lvert T_n-n\sqrt{h}\int_{\mc{A}}\{D_2(a)\}^2w(a)\,da)\right\rvert\wedge 2\rb.
  \end{align*}
  Since $|T_n-n\sqrt{h}\int_{\mc{A}}\{D_2(a)\}^2w(a)\,da|=o_p(1)$, by the dominated
  convergence theorem, the expectation in the last line above converge to 0 as
  $n\to \infty$ and thus we have
  \begin{align*}
    d\lb T_n,N(b_{0h},V))\rb\to 0,
  \end{align*}
  as $n\to \infty$ and that completes the proof.

\end{proof}

\section{Applying U- or V-process results to remainder terms}
\label{sec:proofs-u-or-v}

The following three lemmas provide the negligibility of remainder terms in the analysis of our test statistic's limit distribution.  They are all in V-statistic form (if one regards $\wh \pi,\wh\mu$ as fixed) and thus their analysis (allowing $\wh \pi, \wh \mu $ to vary) requires the theory of V-processes.


\begin{lemma}
  \label{lem:3}
  Let the assumptions
  of Theorem~\ref{thm:null-normality} hold.
  Then
  \begin{equation}
    \label{eq:102}
    \PP_n\lb\widehat{\xi}(\bs{Z};\widehat{\pi},\widehat{\mu}) -
    \xi(\bs{Z};\widehat{\pi},\widehat{\mu})\rb
    = O_p(n^{-1/2}) 
    \text{ as } n \to \infty.
  \end{equation}
\end{lemma}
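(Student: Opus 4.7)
The plan is to exploit the fact that $\PP_n\lb\widehat\xi - \xi\rb$ naturally splits into two $V$-processes of order two.  From \eqref{eq:dr-mapping} and \eqref{eq:pseudo-outcome},
\[
\widehat\xi(\bs{Z};\widehat\pi,\widehat\mu) - \xi(\bs{Z}; \widehat\pi,\widehat\mu) = \int_{\mc L}\widehat\mu(\bs{l}, A)\, d(\PP_n-P)(\bs{l}) + \frac{Y - \widehat\mu(\bs{L},A)}{\widehat\pi(A|\bs{L})}\int_{\mc L}\widehat\pi(A|\bs{l})\, d(\PP_n - P)(\bs{l}),
\]
so that, after expanding the inner empirical integrals, $\PP_n\lb\widehat\xi - \xi\rb$ equals
\[
\frac{1}{n^2}\sum_{i,j=1}^n G_{\widehat\mu}(\bs{Z}_i, \bs{Z}_j) \;+\; \frac{1}{n^2}\sum_{i,j=1}^n H_{\widehat\pi,\widehat\mu}(\bs{Z}_i, \bs{Z}_j),
\]
where, for $\bs{z}_k = (\bs{l}_k, a_k, y_k)$,
\[
G_\mu(\bs{z}_1, \bs{z}_2) := \mu(\bs{l}_2, a_1) - \int \mu(\bs{l}, a_1)\, dP(\bs{l}), \quad H_{\pi,\mu}(\bs{z}_1, \bs{z}_2) := \frac{y_1 - \mu(\bs{l}_1, a_1)}{\pi(a_1|\bs{l}_1)}\lb \pi(a_1|\bs{l}_2) - \int \pi(a_1|\bs{l})\,dP(\bs{l})\rb.
\]
By construction, $\PP G_\mu(\bs{w}, \bs{Z}) = 0$ and $\PP H_{\pi,\mu}(\bs{w}, \bs{Z}) = 0$ for every $\bs{w}$, so both kernels are degenerate in their second argument; this is the key property making the subsequent Hoeffding expansion tractable.

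Next I would split each $V$-statistic into diagonal ($i=j$) and off-diagonal ($i\ne j$) pieces.  The diagonal pieces contribute $O_p(n^{-1})$ by the uniform boundedness of $\widehat\mu$ and $1/\widehat\pi$ (Assumption~\ref{assm:EA}\ref{assm:EA-item2}) together with $\EE Y^2 < \infty$.  Each off-diagonal piece is, up to a factor $(n-1)/n$, a genuine U-statistic with symmetrized kernel.  Applying the H\'ajek/Hoeffding decomposition and exploiting the degeneracy $\PP G_\mu(\bs{w},\bs{Z}) = 0$ (and analogously for $H$), the off-diagonal piece associated with $G_{\widehat\mu}$ reduces to a linear H\'ajek projection $n^{-1}\sum_i g_{\widehat\mu}(\bs{L}_i)$, with
\[
g_\mu(\bs{l}) := \int \mu(\bs{l}, a)\, dP_A(a) - \int\!\!\int \mu(\bs{l}', a)\, dP(\bs{l}')\, dP_A(a),
\]
plus a fully degenerate order-two U-process remainder; an analogous decomposition produces a centered linear term $n^{-1}\sum_i h_{\widehat\pi,\widehat\mu}(\bs{Z}_i)$ from $H$, where $h_{\pi,\mu}(\bs{l}) = \EE\lb \frac{\mu_0(\bs L, A) - \mu(\bs L, A)}{\pi(A|\bs L)}\{\pi(A|\bs{l}) - \int \pi(A|\bs{l}')\,dP(\bs{l}')\} \rb$.

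The linear terms are controlled by standard empirical-process maximal inequalities (e.g., \cite{vanderVaart:1996tf}, Theorem~2.14.1), giving $\sup_{\mu\in\mc{F}_\mu}\lv n^{-1}\sum_i g_\mu(\bs{L}_i)\rv = O_p(n^{-1/2})$ and similarly for $h$ (using $\EE Y^2 < \infty$ for a square-integrable envelope), provided the induced classes have finite uniform entropy integral.  The degenerate order-two U-process remainders are each bounded by $O_p(n^{-1})$ uniformly in $(\widehat\pi,\widehat\mu)$ via Lemma~\ref{lem:100}.  Summing the three contributions---diagonal $O_p(n^{-1})$, linear $O_p(n^{-1/2})$, and degenerate $O_p(n^{-1})$---yields the claimed $O_p(n^{-1/2})$ rate.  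The main obstacle is the entropy-inheritance step: one must verify that operations such as $\mu\mapsto \int\mu(\bs{l},\cdot)\,dP(\bs{l})$ and multiplication by $1/\pi$ send $\mc{F}_\pi, \mc{F}_\mu$ to classes with finite uniform entropy integrals.  This is exactly the issue flagged in the discussion preceding Theorem~\ref{thm:kennedy-et-al_thm2}, and it is resolved here because the partial integration is against the fixed true measure $P$ (rather than $\PP_n$), together with the boundedness of $1/\widehat\pi$ and the standard permanence properties of VC-type classes under products and Lipschitz compositions (\cite{vanderVaart:1996tf}, Sec.~2.10).
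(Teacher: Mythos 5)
Your proposal is correct and follows essentially the same route as the paper: the paper writes $\wh\xi-\xi$ as a single order-two V-kernel $h_1(w_1,w_2)=\frac{y_1-\mu(l_1,a_1)}{\pi(a_1|l_1)}\wt\pi(a_1|l_2)+\wt\mu(l_2,a_1)$ (your $G_\mu+H_{\pi,\mu}$ combined), exploits the same degeneracy $\PP h_1(w,W)=0$, separates diagonal terms ($O_p(n^{-1})$), and handles the symmetrized off-diagonal U-process via the Hoeffding decomposition into a H\'ajek part bounded by Theorem~2.14.1 of \cite{vanderVaart:1996tf} and a degenerate part bounded by the Nolan--Pollard inequality, all packaged in Lemma~\ref{lem:100}, with the entropy-inheritance step resolved exactly as you indicate (integration against the fixed $P$ plus the permanence lemmas for sums and products).
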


\begin{lemma}
  \label{lem:5}
  Let the assumptions of Theorem~\ref{thm:null-normality} hold.
  Let 
  \begin{equation*}
    d_{1,1}(a) := g_{ha}^T \widehat {\bs D}_{ha}^{-1} \PP_n \ls g_{ha}(A) K_{ha}(A)
    \lb  \wh \xi( \bs Z; \wh \pi, \wh \mu ) - \xi(\bs Z; \widehat{\pi},
    \widehat{\mu}) \rb \rs .
  \end{equation*}
  Then
  $n \sqrt{h} (\sup_{a \in \mc{A}} d_{1,1}(a)^2)  = O_p( \sqrt{h})$ and so
  $n \sqrt{h} \int (d_{1,1}(a))^2 w(a) da = O_p(\sqrt{h}) = o_p(1)$ as $n \to \infty$.
\end{lemma}

\medskip

\noindent Recall in the above statement, $\wh{\bs{D}}_{ha} = \PP_n\{g_{ha}(A)K_{ha}(A)g_{ha}^T(A)\}$. The V-statistic terms analyzed in the next lemma arise
from certain cross product error terms.  We let tilde $ \wt{ \cdot } $
operate on any $\mu, \pi$ to yield
$\wt{\pi}(a_1|l_2) := \pi(a_1|l_2) - \PP \pi(a_1| \bs L)$
and similarly
$\wt\mu(l_2,a_1) := \mu(l_2,a_1) - \PP \mu(\bs L,a_1)$. 
Let
$\overline{\epsilon}:=\xi(\bs{Z};\overline{\pi},\overline{\mu})-\PP\xi(\bs{Z};\overline{\pi},\overline{\mu})$,
and specifically
$\overline{\epsilon}_i:=\xi(\bs{Z}_i;\overline{\pi},\overline{\mu})-\PP\xi(\bs{Z};\overline{\pi},\overline{\mu})$
for $i=1,\ldots, n$.
\begin{lemma}
  \label{lem:4}
  Let the assumptions of Theorem~\ref{thm:null-normality} hold.
  Let
  \begin{equation*}
    V_{12,\mu} :=
    \inv{n^3} \sum_{i,j,k=1}^n
    \int_{\mc A}
    \varpi_0(a)^{-2} 
    \ls  K_{ha}(A_i) \wt \mu (\bs L_j , A_i)  K_{ha}(A_k) \overline \epsilon_k \rs
    w(a)
    da, \label{eq:3V-term-defn-mu} \\
  \end{equation*}
  \begin{equation*}
    \label{eq:3V-term-defn-pi}
    \begin{split}
      \MoveEqLeft
      V_{12, \pi} := 
      \inv{n^3} \sum_{i,j,k=1}^n
      \int_{\mc A}
      \varpi_0(a)^{-2} \times \\
      &      \ls  K_{ha}(A_i)
      \lp  \frac{Y_i - \mu(\bs L_i, A_i)}{\pi(A_i | \bs L_i) } \wt \pi(A_i | \bs L_j) \rp
      K_{ha}(A_k) \overline \epsilon_k \rs
      w(a) da.
    \end{split}
  \end{equation*}
  Then $V_{12,\wh \mu}$ and $V_{12, \wh \pi}$ are $O_p(n^{-2}h^{-3/2})$
  as $n \to \infty$.
\end{lemma}

\begin{proof}[Proof of Lemma~\ref{lem:3}]
  We first write
  \begin{equation}
    \label{eq:102-extra}
    \begin{split}
      \MoveEqLeft \PP_n
      \lb\widehat{\xi}(\bs{Z};\widehat{\pi},\widehat{\mu}) -
      \xi(\bs{Z};\widehat{\pi},\widehat{\mu})\rb\\
      &=\frac{1}{n}\sum_{i=1}^n\ls\frac{Y_i-\widehat{\mu}(\bs{L}_i,A_i)}{\widehat{\pi}(A_i|\bs{L}_i)}\lb\frac{1}{n}\sum_{j=1}^n\widehat{\pi}(A_i|\bs{L}_j)-\int\widehat{\pi}(A_i|\bs{l})\,dP(\bs{l})\rb\right.\\
      &\qquad\left.+\lb\frac{1}{n}\sum_{j=1}^n\widehat{\mu}(\bs{L}_j,A_i)-\int\widehat{\mu}(\bs{l},A_i)\,dP(\bs{l})\rb\rs.
    \end{split}
  \end{equation}
  Thus, the  mean of the given term can be bounded by the mean of a process where
  $\pi$, $\mu$, range over $\mc F_\pi$, $\mc F_\mu$.
  We define  the V-processes of interest as follows. For $w_i \in \mc{Z} := \mc L \times \mc A \times \mc Y$, $i=1,2$, let
  \begin{equation}
    \label{eq:1002}
    h_1(w_1, w_2)
    \equiv   h_{1, \pi, \mu}(w_1, w_2) 
    := \frac{y_1 - \mu(l_1,a_1)}{\pi(a_1|l_1)}
    \wt \pi(a_1 | l_2) + \wt \mu(l_2,a_1)
  \end{equation}
  be a (non-symmetric) U- or V-statistic ``kernel'', indexed by $\mu,\pi$,
  and where, recall
  we let  tilde $ \wt{ \cdot  } $  operate on  any $\mu, \pi$ to yield
  $\wt{\pi}(a_1|l_2) := \pi(a_1|l_2) - \PP \pi(a_1|L),$
  and similarly $\wt\mu(l_2,a_1) := \mu(l_2,a_1) -  \PP \mu(L,a_1)$. Then, recalling $\{ W_i\} =  \{ (L_i,A_i,Y_i)\}$ is our i.i.d.\ sample,  \eqref{eq:102}  is of the form
  \begin{equation}
    \label{eq:1113}
    n^{-2} \lp \sum_{i=1}^n h_1(W_i, W_i) +
    \sum_{1 \le i < j \le n} h_1(W_i, W_j) + h_1(W_j,W_i)\rp.
  \end{equation}
  Note that
  $\PP h_1(w_1, W_2) = 0$, for almost any $w_1$, so
  $\PP h_1(W_1, W_2) = 0$. Now by Assumption~\ref{assm:DA}\ref{assm:DA-item3} and
  \ref{assm:DA}\ref{assm:DA-item4}, using the total variance formula, we can see
  $Y$ has a finite variance. In addition, by
  Assumption~\ref{assm:EA-item2}
  $\pi$ and $\mu$ are bounded above, and $\pi$ is
  bounded below, it is immediate that $\PP |h_1(W_1, W_1)| < \infty$ and
  $\PP h_1(W_1,W_2)^2<\infty$. 
  The larger term will be the double summation(s)
  in \eqref{eq:1113}, which  is  in U-statistic form so we now
  introduce some notation so we can then apply U-process results.

  Let
  \begin{equation}
    \label{eq:10003}
    g_1(w_1,w_2) \equiv g_{1,\mu,\pi}(w_1,w_2) := h_1(w_1,w_2) + h_1(w_2,w_1).
  \end{equation}
  To apply the  maximal inequality in Proposition~\ref{lem:100} to our particular U-processes we thus need to bound the appropriate uniform entropy-type integrals and compute the envelope moments, for the classes of functions
  \begin{equation*}
    \mc G_1:=
    \{ g_{1,\mu,\pi} :
    \mu \in \mc F_\mu, \pi \in \mc F_\pi
    \}.
  \end{equation*}
  We start by considering the covering numbers.
  Take a generic class of functions $\mc F$ on a space $\mc X$ with finite covering number $N( \mc F, \| \cdot \|_{2,Q}, \tau)$ and envelope $F$.  Then it is easy to verify
  that the class $\mc{F}^\circ$ of functions $f^\circ(x,z) := f(x)$ defined on the
  extended space $\mc X \times \tilde{\mc X} $, some measurable space $\tilde{\mc W}$, has
  $N( \mc F, \| \cdot \|_{2,Q}, \tau)
  = N( \mc{F}^\circ , \| \cdot \|_{2,Q^\circ}, \tau)$  for any $Q^\circ$ on $\mc X \times \tilde{\mc X}$ that extends $Q$ in the sense that $Q$ is the marginal of $Q^\circ$ on $\mc X$.  In particular,
  $\sup_Q N( \mc F, \| \cdot \|_{2,Q}, \tau)
  = \sup_{Q^\circ} N( \mc{F}^\circ , \| \cdot \|_{2,Q^\circ}, \tau)$.
  Thus, if we consider
  the class of functions $\mc{F}_\mu^\circ$ given by $(l_1, a_1, l_2, a_2) \mapsto
  \mu(l_2, a_1)$ for $\mu \in \mc{F}_\mu$, then this class has the same uniform
  covering number as the original class $\mc{F}_\mu$.  Similarly for $\mc{F}_\pi$.

  \noindent {\bf Bounding entropy.}  First we bound $J$ and $J_2$ for the appropriate classes of functions.
  Let $\mc{H}_1$ be the class of $h_1$ functions (defined in
  \eqref{eq:1002}).
  We can write the class $\mc{H}_1$ as
  \begin{equation*}
    \mc{H}_1 =
    (Y_1 - \mc{F}_\mu)
    \mc{F}_{\pi}^{-1} \wt{\mc{F}}^\circ_\pi + \wt{\mc{F}}^\circ_{\mu},
  \end{equation*}
  where: we abuse notation to let, e.g., $\mc{F}_\mu$ refer to the class $(l_1, a_1, y_1, l_2, a_2,y_2) \mapsto \mu(l_1, a_1)$, and similarly for $\mc{F}_\pi$;
  we define operations on classes of functions so $\mc{F} \mc{G} := \{ fg : f \in \mc{F}, g \in \mc{G} \}$, $\mc{F} + \mc{G} := \{ f + g : f \in \mc{F}, g \in \mc{G}\}$, $\mc{F}^{-1} := \{ f^{-1} : f \in \mc{F} \}$;
  we let $Y_1$ denote the function on $\mc{W}^2$ that returns just $y_1$;
  and let $\wt{\mc{F}}^\circ_\mu$ be the class of
  functions
  $(l_1, a_1, l_2, a_2) \mapsto \wt{\mu}(l_2, a_1)$
  and similarly 
  $\wt{\mc{F}}^\circ_\pi$ is the class of
  functions
  $(l_1, a_1, l_2, a_2) \mapsto \wt{\pi}(a_1|l_2)$.  The class $\wt{\mc{F}}^\circ_\mu$ can be written as $\mc{F}^\circ_\mu - \{ \PP \mu(L, \cdot) ; \mu \in \mc{F}^\circ_\mu \}$.
  By the proof of
  Lemma~20
  of \cite{Nolan:jw},  
  $\{ \PP \mu(L, \cdot) ; \mu \in \mc{F}_\mu \}$ has uniform entropy smaller than
  that of $\mc{F}_\mu$.
  A similar statement holds for $\wt{\mc{F}}^\circ_\pi$.
  Thus, we can apply
  Lemmas \ref{lem:andrews-add} and \ref{lem:andrews-mult}
  (see also Lemma 16 of \cite{Nolan:jw}) 
  to $\mc{H}_1$.
  This shows $J_2(1, \mc{H}_1) < \infty$.
  Let $\mc{G}_1 := \{ g_{1,\mu,\pi} :
  \mu \in \mc F_\mu, \pi \in \mc F_\pi
  \}$.
  Note that $N(\mc{G}_1,  L_2(Q), \epsilon \sqrt{2})
  = N(\mc{H}_1, L_2(Q), \epsilon)$ so we can conclude
  $J_2(1, \mc{G}_1) < \infty$. By  Lemma~20
  of \cite{Nolan:jw},
  we also conclude that $J(1, \PP \mc{G}_1) < \infty$.  By
  Proposition~\ref{lem:100}, we conclude that $\PP \| U_n \|_{\mc{G}_1}$ is thus
  bounded above by a constant, as desired.

  \noindent {\bf Envelopes.} Now we consider envelopes and their squared expectation for the appropriate classes.
  Let $G_1$  
  be the envelope for $\mc{G}_1$.
  By our assumptions~\ref{assm:DA}\ref{assm:DA-item3}, \ref{assm:DA}\ref{assm:DA-item4}
  and \ref{assm:EA-item2} 
  we see that $EG_1^2 < \infty$ (and is independent of $n$).
  It's also easily seen that   $\mc{H}_1$ has squared-integrable envelope $H_1$.

  Next, for the minimal envelope $K_1$ for $\PP\mc{G}_1$, it is clear again
  that we have $\PP K_1(W)^2<\infty$.  Finally, we apply
    Proposition~\ref{lem:100} to \eqref{eq:102} and conclude \eqref{eq:102} is
  of order $O_p(n^{-1/2})$.
\end{proof}

\begin{proof}[Proof of Lemma~\ref{lem:5}]
  For $n\sqrt{h}\int\{d_{1,1}(a)\}^2w(a)\,da$, we apply a 
  V-process approach, like that in Lemma~\ref{lem:3}, but now we must accommodate the kernel. 
  For $w_i \in \mc{Z} = \mc L
  \times \mc A \times \mc Y$, $i=1,2$, let $h_2 \equiv h_{2,\mu, \pi}$ be
  \begin{equation}
    \label{eq:1003}
    \begin{split}
      h_2(w_1, w_2)
      & :=  g_{ha}(a_1)
      K_{ha}(a_1)h_1(w_1, w_2),
    \end{split}
  \end{equation}
  where $h_1 \equiv h_{1, \mu, \pi}$
  is as defined in $\eqref{eq:1002}$,
  and then we can write 
  \begin{align}
    \label{eq:1114}
    \begin{split}
      d_{1,1}'(a)
      &  := \PP_n  \ls g_{ha}(A) K_{ha}(A)
      \lb  \wh \xi( \bs Z; \hat \pi, \hat \mu ) - \xi(\bs Z; \widehat{\pi},
      \widehat{\mu}) \rb \rs \\
      & =  \inv{n^{2}} \lp\sum_{i=1}^n h_{2, \wh \mu, \wh \pi}(W_i, W_i) +
      \sum_{1 \le i < j \le n} h_{2, \wh \mu, \wh \pi}(W_i, W_j) + h_{2, \wh \mu, \wh \pi}(W_j,W_i)\rp.  
    \end{split}
  \end{align}
  Note that
  $\PP h_2(w_1, W_2) = 0$, for almost any $w_1$, so
  $\PP h_2(W_1, W_2) = 0$. And we have that  $\PP h_2(W_1,W_1) $
  equals
  \begin{align}
    \label{eq:1005}
    \begin{split}
      \MoveEqLeft    \int g_{1,0}(u) K(u) \lp \frac{y - \mu(l,a + h u)}{\pi(a + hu|l)}
      \wt \pi(a + hu | l) + \wt \mu(l, a + hu) \rp\times\\
      &p(l, a+hu, y)   
      d\nu(l, y) du
    \end{split}
  \end{align}
  which is bounded (uniformly in $a$ and $h$) since $Y$ has a finite mean (and by
  Assumption~\ref{assm:EA-item2}
  $\pi$ is uniformly bounded below away from 0, and $\pi,\mu$ bounded above). Thus
  \begin{equation}
    \label{eq:U-diagonals-Op-h2}
    n^{-2} \sum_{i=1}^n h_2(W_i,W_i) = O_p(n^{-1}).
  \end{equation}
  The larger term will be the double summation(s)
  in \eqref{eq:1114}.  Now let
  \begin{equation}
    \label{eq:10003}
    g_2(w_1,w_2) \equiv g_{2,\mu,\pi}(w_1,w_2) := h_2(w_1,w_2) + h_2(w_2,w_1).
  \end{equation}
  Since $h_2$ and $g_2$ are vector functions
  (recall $g_{ha}(a_1) = (1, (a_1-a)/h)^T$),
  we refer
  to their components as
  $h_{2;i} \equiv h_{2 , \mu, \pi; i} $ and
  $g_{2; i} \equiv g_{2, \mu, \pi; i}$, $i=1,2$.
  Similarly as in Step 2, to apply the above maximal inequality in
  Proposition~\ref{lem:100} to the  particular U-processes we thus need to bound the appropriate uniform entropy-type integrals and compute the envelope moments, for the class of functions
  \begin{equation*}
    \mc{G}_{2;i} := \{ g_{2,\mu,\pi; i} :
    \mu \in \mc F_\mu, \pi \in \mc F_\pi
    \},
    \quad i=1,2.
  \end{equation*}

  \noindent {\bf Bounding entropy.} We consider $\mc{H}_{2;i}$,
  $i=1,2$, which are the classes of the two coordinate functions of
  the $h_{2}$ functions (defined in \eqref{eq:1003} and \eqref{eq:1002}). We can write
  $\mc{H}_{2;1}, \mc{H}_{2;2}$ as
  \begin{align}
    \begin{split}
      h  \mc{H}_{2;1} &:=
      \lb
      (w_1, w_2)\mapsto
      h K_{ha}(a_1) h_{1,\mu,\pi}(w_1,w_2) ;
      \mu \in \mc{F}_\mu, \pi \in \mc{F}_\pi , a \in \mc{A}
      \rb \\
      h^2 \mc{H}_{2;2} & :=
      \lb
      (w_1, w_2)\mapsto
      (a_1-a) h K_{ha}(a_1) h_{1,\mu,\pi}(w_1,w_2) ;
      \mu \in \mc{F}_\mu, \pi \in \mc{F}_\pi , a \in \mc{A}
      \rb
    \end{split}
  \end{align}
  Note that the classes of functions $ K( (\cdot - a)/h)$
  and
  $(\cdot - a)  K( (\cdot - a)/h)$,
  indexed over $a \in \mc{A}$, are both bounded VC classes of measurable functions  
  by Assumption~\ref{assm:EA}\ref{assm:EA-item4}.
  Again applying
  Lemmas \ref{lem:andrews-add} and \ref{lem:andrews-mult} we can conclude that $h^i \mc{H}_{2;i}$ has $J_2(1, h^i \mc{H}_{2;i}) < \infty$, for $i=1,2$.  (Note that the $L_2(Q_1)$ covering number of a class $\{ f(x) \}$ is equal to the $L_2(Q)$ covering number, for any $Q(x,y)$ with $Q_1$ as its $x$-marginal, of $\{ (x,y) \mapsto f(x) \}$.  That is, we can extend each function (class) of functions $f: \mc Z \to \RR$ to functions defined on $\mathcal{Z}^2$, letting $(z_1,z_2) \mapsto f(z_1)$ without changing the entropy.)  By Assumption~\ref{assm:EA-item2}$_3$ 
  this then allows us to conclude that $J_2(1, h^i \mc{G}_{2;i}) < \infty$ (since $J_3(1, h^i \mc{G}_{2;i}) < \infty$) and further, by Lemma~20 of \cite{Nolan:jw}, that $J(1, h^i \PP \mc{G}_{2;i}) < \infty$, $i=1,2$.

  \noindent {\bf Envelopes.} From the proof of Lemma~\ref{lem:3}, we know $\mc{H}_1$ has squared-integrable
  envelope $H_1$ (by the boundedness assumptions on the function classes, and the assumption $Y$ has a finite variance), so an envelope for $h \mc{G}_{2;1}$ is $ G_{2;1}:= K_{\max} H_1$
  and for $h^2 \mc{G}_{2;2}$ is $ G_{2;2}:= a_{\text{width}} K_{\max} H_1$ where $a_{\text{width}} := \max_{a_1,a_2 \in \mc{A}} |a_2-a_1|$, and
  where $K_{\max} := \sup K < \infty$ by assumption.
  These two envelopes have second moment finite (and independent of $n$). Next
  consider minimal envelopes  $K_{2;i}(w)$ for $\PP \mc{G}_{2;i}$, $i=1,2$.  Note that
  \begin{equation*}
    \PP g_2(w, W) =
    \PP h_2(W, w)
  \end{equation*}
  since $\PP h_2(w,W) =0$ for almost every $w$.
  (Recall $h_2$ is defined in \eqref{eq:1003}
  and $g_i$ is defined in \eqref{eq:10003}.)
  And we can see, by the change of variables $u = (a_1-a)/h$, that
  \begin{equation}
    \label{eq:1006}
    \begin{split}
      \MoveEqLeft \PP h_2(W,w_2)\\
      &= \int \big( 1, \frac{a_1-a}{h} \big)^T h^{-1}K \big( \frac{a_1-a}{h} \big)
      \lp \frac{y - \mu(l,a_1)}{\pi(a_1|l)} \wt \pi(a_1 | l_2) + \wt \mu(l_2,
      a_1) \rp\times\\
      &\qquad p(l, a_1, y)
      d\nu(l, y) da_1  \\
      & =\int   (1,u)^T 
      K(u) \lp \frac{y - \mu(l,a + h u)}{\pi(a
        + hu|l)} \wt \pi(a + hu | l_2) + \wt \mu(l_2, a + hu) \rp\times\\
      &\qquad p(l, a+hu, y)
      d\nu(l, y) du 
    \end{split} 
  \end{equation}
  which, by our Assumptions~\ref{assm:EA-item2}
  and \ref{assm:EA}\ref{assm:EA-item4} (and $Y$ having a finite mean),
  is bounded in absolute value above by a constant (independent of $n$).
  We can check 
  that the class of functions $w \mapsto \PP h_2(W,w)$ (and so $w \mapsto \PP g_2(w, W)$)
  has a (vector) envelope $(K_{2;1}, K_{2;2})$ satisfying $\PP K_{2;1}(W)^2 <
  \infty$  and $\PP K_{2;2}(W)^2  < \infty $ (independent of $n$). Next we will
  analyze the two components of \eqref{eq:1114} separately.  The term with the $(a_1-a)/h$ factor will be larger in our analysis so we focus on it.  This term can be written, as described above, as
  \begin{equation}
    \label{eq:10002}
    \int_{\mc A} \lp h^{-2}  n^{-2}
    \sum_{i=1}^n \sum_{j=1}^n
    \wt{h}_{2, \wh \mu, \wh \pi; 2}(W_i,W_j) \rp^2 w(a) da,
  \end{equation}
  where  $ \wt{ h}_{2, \wh \mu, \wh \pi; 2} \in h^2 \mc{H}_{2;2}$.
  (Recall from \eqref{eq:1003} that $h_{2;i}$ depends on $a$ but we notationally suppressed this dependence for simplicity.)
  Considering
  $$   n^{-3/2}
  \sum_{i=1}^n \sum_{j=1}^n
  \wt{h}_{2, \mu, \pi; 2}(W_i,W_j),$$
  and  then taking a sup over $\mu \in \mc{F}_\mu,$ $\pi \in \mc{F}_\pi$, $a \in \mc{A}$,
  and applying \eqref{eq:U-diagonals-Op-h2}
  and Proposition~\ref{lem:100}
  with the class $h^2 \mc{G}_{2;2}$
  and envelopes $G_{2;2}$ and $h^2 K_{2;2}$,
  (recalling \eqref{eq:1114} to go from $\mc{H}_{2;2}$ to $\mc{G}_{2;2}$), 
  we see that
  \eqref{eq:10002}
  is bounded above by
  \begin{equation}
    \label{eq:6}
    O_p( n^{-2} h^{-4} ) + O_p( n^{-1}).
  \end{equation}
  Similarly, we write the other term in \eqref{eq:1114} as
  $$\int_{\mc A} \lp h^{-1} n^{-2} \sum_{i=1}^n \sum_{j=1}^n
  \wt{h}_{2, \wh \mu, \wh \pi; 1}(W_i,W_j) \rp^2 w(a) da,$$ where
  $ \wt{h}_{2, \wh \mu, \wh \pi; 1} \in h \mc{H}_{2;1}$, and (by the same argument) be
  seen to be of smaller order, $O_p(n^{-2} h^{-2} ) + O_p(n^{-1})$.

  Thus,
  multiplying 
  $O_p( n^{-2} h^{-4} ) + O_p( n^{-1})$
  by $n \sqrt{h}$ we see that

  \begin{equation*}
    \sup_{a \in \mc{A}} (n\sqrt{h})  d_{1,1}'(a)^2
    =     O_p( n^{-1} h^{-7/2}) + O_p(h^{1/2}),
  \end{equation*}
  so that   
  \begin{align*}
    \label{eq:4}
    \MoveEqLeft   n\sqrt{h}\int
    d_{1,1}'(a)^2
    w(a)\,da
    =  O_p( n^{-1} h^{-7/2}) + O_p(h^{1/2}),
  \end{align*}
  and
  is thus $O_p(\sqrt{h})$ by
  Assumption~\ref{assm:EA}\ref{assm:EA-item3}. Lastly, in order to derive  the order of
  $n \sqrt{h} \int\{ d_{1,1}(a)\}^2 w(a)\, da$, similar to the proof of
  Lemma~\ref{lem:step3-1}, 
  \begin{align*}
    \MoveEqLeft    n\sqrt{h}\int_{\mc A} d_{1,1}(a)^2w(a)\,da\\ &= n\sqrt{h}\int \lb
                                                          g_{ha}^T\widehat{D}_{ha}^{-1}d'_{1,1}(a)\rb^2w(a)\,da\\
                                                        &= n\sqrt{h}\int \lb
                                                          \ls
                                                          g_{ha}^T\widehat{D}_{ha}^{-1}-(\varpi(a)^{-1},0)+(\varpi(a)^{-1},0)\rs
                                                          d'_{1,1}(a)\rb^2w(a)\,da.
  \end{align*}
  Then
  \begin{align*}
    n\sqrt{h}\int d_{1,1}(a)^2w(a)\,da
    &\le n\sqrt{h}\int 2\lb
      |g_{ha}^T\widehat{D}_{ha}^{-1}-(\varpi(a)^{-1},0)|s'_{1,1}(a)\rb^2w(a)\,da\\
    &\quad+
      n\sqrt{h}\int 2\lb
      |(\varpi(a)^{-1},0)|d'_{1,1}(a)\rb^2w(a)\,da.    
  \end{align*}
  The first term on the last two lines above can be bounded as
  \begin{align*}
    \MoveEqLeft   n\sqrt{h}\int 2\lb
    |g_{ha}^T\widehat{D}_{ha}^{-1}-(\varpi(a)^{-1},0)|d'_{1,1}(a)\rb^2w(a)\,da\\
    &\le 2n\sqrt{h}\int 
      \|g_{ha}^T\widehat{D}_{ha}^{-1}-(\varpi(a)^{-1},0)\|^2_{l_2}\|d'_{1,1}\|_{l_2}^2w(a)\,da;
  \end{align*}
  From the proof of Lemma~\ref{lem:step3-1}, we know $ |g_{ha}^T\widehat{D}_{ha}^{-1}-(\varpi(a)^{-1},0)|$
  is uniformly $o(1)$ a.s., then we have the right hand side of the above inequality
  is $o_p(\sqrt{h})$.
  By
  Assumption~\ref{assm:IA}\ref{assm:IA-item2}, $\varpi(a)$ is bounded below from
  $0$; similarly it is easy to see
  \begin{align*}
    \MoveEqLeft   n\sqrt{h}\int 2\lb
    |(\varpi(a)^{-1},0)|d'_{1,1}(a)\rb^2w(a)\,da\\
    &\le 2n\sqrt{h}\int \|(\varpi(a)^{-1},0)\|_{l_2}^2\|d'_{1,1}(a)\|_{l_2}^2w(a)\,da\\
    &=O_p(\sqrt{h}).
  \end{align*}
  That completes the proof of $n\sqrt{h}\int \{d_{1,1}(a)\}^2w(a)\,da=O_p(\sqrt{h})$.
\end{proof}

\begin{proof}[Proof of Lemma~\ref{lem:4}]
  First we focus on the
  $V_{12, \wh \mu}$ term.
  This is a third degree V-statistic.  
  Start by defining the asymmetric kernel $H$ for a generic $\mu$
  to be
  \begin{equation}
    \label{eq:1}
    H(\bs Z_1, \bs Z_2, \bs Z_3) :=
    \epsilon_3 \int_{\mc A}  \varpi_0^{-2}(a)
    \ls K_{ha}(A_1) \wt \mu (\bs L_2, A_1)  K_h(A_3-a) \rs
    w(a) da.
  \end{equation}
  Recall that $\wt\mu(l_2,a_1) := \mu(l_2,a_1) -  \PP \mu(L,a_1)$.
  Then the $\mu$ term V-statistic is
  $  V_{12, \mu} :=
  n^{-3} \sum_{i,j,k=1}^n H(\bs Z_i, \bs Z_j, \bs Z_k).$
  We begin by analyzing the corresponding U-statistic
  \begin{equation*}
    U_{12, \mu} :=
    n^{-3} \sum_{i,j,k}^n H(\bs Z_i, \bs Z_j, \bs Z_k)
  \end{equation*}
  where the sum is over $(i,j,k)$ with unique coordinates (i.e., over
  the $n(n-1)(n-2)$ ordered choices of $3$ distinct elements of $\{1,\ldots, n\}$).
  Let the symmetrized kernel be
  $G(\bs Z_1, \bs Z_2, \bs Z_3) :=
  6^{-1} \sum_{\sigma} H( \bs Z_{\sigma_1}, \bs Z_{\sigma_2},\bs Z_{\sigma_3})$, 
  where $\sigma$ ranges over the $6$ permutations of $3$ elements.

  One can decompose any $U$-statistic (kernel) into sums of $U$-statistics (kernels) which have certain
  ``degenerate'' structure.  (See
  \citet[pages 177--178]{serfling1980approximation}).)
  We define as follows.
  Let
  \begin{align*}
    &G_{1, \{1\}}( \bs z_1) := \PP^2 H(\bs z_1, \bs Z_2, \bs Z_3),\\
    & G_{1, \{2\}}(\bs z_2) := \PP^2 H(\bs Z_1, \bs z_2, \bs Z_3),\\
    &G_{1, \{3\}}(\bs z_3) := \PP^2 H(\bs Z_1, \bs Z_2, \bs z_3),
  \end{align*}
  where $\PP^i$ denotes the product measure of $\PP$ ($i$ copies).
  Note that
  these three functions are all identically zero.
  It is immediately seen  that $\PP H(\bs Z_1, \bs Z_2, \bs z) = \PP H(\bs z , \bs Z_2, \bs Z_3) = 0$, since $\PP \wt \mu(a, \bs L) = 0$ for any $a$.
  It is also immediately clear that  $\PP H(\bs Z_1, \bs z_2, \bs Z_3) = 0$,
  since $\EE( \overline \epsilon_3 | A_3) = 0$ almost surely.
  Therefore, $G$ is degenerate, in that
  \begin{equation*}
    \PP G( \bs z, \bs Z_2, \bs Z_3) = 0
  \end{equation*}
  for any $z$.
  Next, let
  \begin{align*}
    G_{2, \{1,2\}}(\bs z_1, \bs z_2) & := \PP H (\bs z_1, \bs z_2, \bs Z_3), \\
    G_{2, \{1,3\}}(\bs z_1, \bs z_3) & := \PP H (\bs z_1, \bs Z_2, \bs z_3), \\
    G_{2, \{2,3\}}(\bs z_2, \bs z_3) & := \PP H (\bs Z_1, \bs z_2, \bs z_3). 
  \end{align*}
  Of the above three functions, only
  $G_{2, \{2,3\}}(\bs z_2, \bs z_3)$
  is not identically zero (again, since
  $\EE( \overline \epsilon_3 | A_3) = 0$ and $\PP \wt \mu(a, \bs L) = 0$).
  Finally, let
  \begin{equation}
    \label{eq:G3defn}
    G_3(\bs z_1, \bs z_2, \bs z_3) :=
    H( \bs z_1, \bs z_2, \bs z_3)
    - G_{2, \{2,3\}}(\bs z_2, \bs z_3).
  \end{equation}
  Note that both $G_3$ and $G_{2 , \{2,3\}}$ are maximally degenerate (i.e., averaging over any variable yields the zero function). 
  \begin{mylongform}
    \begin{longform}
      Note that both $G_3$ and $G_{2 , \{2,3\}}$ are degenerate, but to differing extents.  We make a definition here, to distinguish these amounts of degeneracy: we say that a kernel $F^{\star}$ of order $i$ is {\it degenerate of degree j} ($j \le i$) 
      if averaging over any $i-j$ variables (leaving $j$ functional variables) yields an identically zero function; e.g., in the case of a symmetric kernel, we have $\PP F^{\star}(\bs{z}_1, \ldots, \bs{z}_j, \bs{Z}_{j+1}, \ldots, \bs{Z}_{i}) = 0$.

  Note that
  $G_3$ is a
  degree 2 degenerate kernel
  since 
  we have  almost surely that $\PP G_3(\bs Z_1, \bs z_2, \bs z_3) = 0$ (and similarly if we average over $\bs Z_2$ or $\bs Z_3$).
  And note also that $G_{2, \{2,3\}}$ is degree 1 degenerate in that averaging over either of its arguments (yielding one remaining functional variable) yields an identically zero function.

    \end{longform}
  \end{mylongform}

  Now trivially by definition \eqref{eq:G3defn},
  \begin{align*}
    H( \bs z_1, \bs z_2, \bs z_3)
    = G_3( \bs z_1, \bs z_2, \bs z_3)
    + G_{2, \{2,3\}}(\bs z_2, \bs z_3),
  \end{align*}
  meaning that $H$ is decomposed into
  a degree 2 degenerate kernel of degree 3,
  and a degree 1 degenerate kernel of degree 2.

  First we will consider the latter term, $G_{2, \{2,3\}}(\bs z_2, \bs z_3)$.
  We wish to find an envelope for this class of functions, and then can apply a maximal inequality to the sum. 
  Since $\varpi$, $\varpi^{-1}$, and $\wt \mu$ are all uniformly bounded, 
  \begin{align*}
    \varpi_0^{-2}(a)
    \PP \ls K_{ha}(A_1) \wt \mu (\bs L_2, A_1) \rs
    =
    \varpi_0^{-2}(a)
    \int K(u) \wt \mu(a + uh, l_1) \varpi(a + uh) du,
  \end{align*}
  is $O(1)$ (independently of $a$).
  Then,
  plugging the above in to  $G_{2, \{2,3\}}(\bs z_2, \bs z_3)$, we have
  \begin{align*}
    \overline{\epsilon}_3   \int_{\mc A} O(1)  K_h(a_3-a) w(a) da
    = \overline{\epsilon}_3 \int_{(a_3 - \mc A)/h} O(1)  K(u) w(a_3-hu) du.
  \end{align*}
  Thus
  $|\overline{\epsilon}_3| \int_{\RR} O(1)  K(u) (\max w)
  du$
  is an envelope for the class (and is independent of $h$ and $n$).

  Now  we consider entropies.  By
  Assumption~\ref{assm:EA-item2}$_3$, we have
  $J_3(1, \mc{F}_\mu, L_2) < \infty$; the class of functions $H$ under consideration is
  \begin{equation*}
    \mc{H} := \{ (z_1,z_2,z_3) \mapsto \epsilon_3 \varpi_0^{-2} hK_{ha}(a_1) \wt
    \mu(l_2,a_1)K_{ha}(a_3) : \mu \in \mc{F}_\mu, a \in \mc{A} \},
  \end{equation*}
  which by
  an argument almost identical to that in the proof of
  Lemma~\ref{lem:5}, has $J_3(1, \mc{H}, L_2) < \infty$.
  Furthermore, by Lemmas
\ref{lem:integral-decreases-entropy},
\ref{lem:andrews-add},
and \ref{lem:andrews-mult}, we can see that the class of functions $\{ G_{2, \{2,3\}} \}$ and then $\{ G_3 \}$ (using shorthand notation for these classes) also have their corresponding $J_3$ integral finite.

Now, 
since $J_2(1, \{ G_{2, \{2,3\}} \} , L_2) < J_3(1, \{ G_{2, \{2,3\}} \} , L_2) < \infty$,
by Proposition~\ref{prop:1} (alternatively, see Nolan and Pollard (1987)), 
  $n^{-2} \sum_{i\ne j}^n G_{2, \{2,3\}}(\bs Z_i, \bs Z_j) = O_p(n^{-1})$,
  and  so $n^{-3} \sum_{i,j,k}  G_{2, \{2,3\}}(\bs Z_i, \bs Z_j) = O_p(n^{-1})$ (summing over $(i,j,k)$ not equal to each other).

  \medskip

  Now consider  $G_3( \bs z_1, \bs z_2, \bs z_3)$.  We have just seen $J_3(1, \{ G_3 \}, L_2) < \infty$ so we only need to focus on finding the envelope.  Since we  have an envelope for 
  $G_{2, \{2,3\}}(\bs z_2, \bs z_3)$
  we just need an envelope for $H( \bs z_1, \bs z_2, \bs z_3)$.
  By
  the change of variables $u = (a_1-a)/h$
  \begin{align}
    \label{eq:2000}
    \begin{split}
    & H( \bs z_1, \bs z_2, \bs z_3)\\
    &= \overline{\epsilon}_3 \int_{(a_1 - \mc A)/h}
      \varpi_0(a_1 - uh)^{-2} K(u) \wt \mu(a_1, \bs l_2)
      \inv{h} K \lp \frac{a_3-a_1}{h} + u \rp
      w( a_1 - uh) du.
      \end{split}
  \end{align}
  Now
  $K$ has support $[-1,1]$ and so
  \begin{align*}
    K(u)
    \inv{h} K \lp \frac{a_3-a_1}{h} + u \rp
    \le K_{\text{max}}^2 \one_{[-1,1]}(u) \one_{[-2h,2h]}(a_3-a_1)  
  \end{align*}
  since $-h \le a_3-a_1 +uh \le h$ implies $-2h \le a_3-a_1 \le 2h$ for $u \in [-1,1]$.
  Thus, \eqref{eq:2000} is bounded above in absolute value by
  \begin{equation*}
    \begin{split}
      \MoveEqLeft
      \overline{\epsilon}_3 h^{-1} \one_{[-2h,2h]}(a_3-a_1)
      \int_{\RR} \varpi_0^{-2}(a_1-uh) K(u) \wt \mu(a_1-\bs l_2) w(a_1-uh) du \\
      & \le C_1 \overline{\epsilon}_3 h^{-1} \one_{[-2h,2h]}(a_3-a_1)
      =: F_3(\bs z_1 , \bs z_2, \bs z_3),
    \end{split}
  \end{equation*}
  for a constant $C_1$, where we take $F_3$ as our envelope. We see that
  \begin{equation*}
    \PP^3 F_3^2(\bs Z_1, \bs Z_2, \bs Z_3)
    \le C_2 h^{-3/2}
  \end{equation*}
  for a constant $C_2$.
  Therefore,
  by Proposition~\ref{prop:1},
  we have
  $n^{-3} \sum_{i,j,k}^n G_3(\bs Z_i, \bs Z_j, \bs Z_k) = O_p(n^{-3/2} h^{-3/4}) $ (sum over $i,j,k$ not ever equal to each other).


  Finally, the above analysis of $U_{12,\mu}$ ignored the summands in $V_{n,\mu}$
  where an argument is repeated (i.e., $H(\bs Z_1, \bs Z_1, \bs Z_2),$ etc., or $ H(\bs Z_1,\bs Z_1,\bs Z_1)$).  The sums of such terms are very small since there are many fewer terms than $n^3$.  A very simple analysis can show each such sum is not larger than $O_p( n^{-2} h^{-3/2}) = O_p( (n \sqrt{h})^{-1} (nh)^{-1}) $, which is much smaller than we need.

  Analyzing the ``$\pi$ term'',
  $V_{12, \wh \pi}$, 
  is similar to analyzing the $\mu $ term and we do
  not present the details. 
\end{proof}

\section{Lemma~\ref{lem:1}}

The quantity $\PP_n \ls \bs{g}_{ha}(A) K_{ha}(A) \lb \wh{\xi}(\bs{Z}; \wh \pi, \wh \mu) - \xi(\bs Z; \overline \pi, \overline \mu) \rb \rs$, broken into three pieces, is analyzed in the following lemma;
the result
allows to conclude that
$  | \wh \theta_h(a) - \theta(a) |
= O_p(  (nh)^{-1/2} + h^2 + r_n s_n).$

\begin{lemma}
  \label{lem:1}
  Let $\overline{\pi}$ and $\overline{\mu}$ denote fixed functions to which $\wh \pi$ and $\wh \mu$ converge in the sense that $\sup_{\bs z} | \wh \pi - \overline \pi| = o_p(1)$ and $\sup_{\bs z} | \wh \mu - \overline{\mu}| = o_p(1)$.  Let $a \in \mc{A}$ denote a point in the interior of the compact support $\mc A$ of $A$.
  Let Assumption~\ref{assm:IA} hold,
  Assumption \ref{assm:DA} parts \ref{assm:DA-item1}, \ref{assm:DA-item2}, \ref{assm:DA-item3} hold, and
  Assumption  \ref{assm:EA}\ref{assm:EA-item4} holds.
  Assume \ref{assm:EA-item2}$_2$ holds.
  Assume $Y$ has finite variance.
  Assume the conditional density of $\xi(\bs Z; \overline \pi, \overline \mu)$ given $A=a$ is continuous in $a$.
  Assume $h \equiv h_n \to 0$
  and $nh^3 \to \infty$ 
  as $n \to \infty$.  Assume either $\overline \pi = \pi_0$ or $\overline \mu = \mu_0$.
  Let $r_n \equiv r_n(a)$ and $s_n\equiv s_n(a)$ be given by  $\sup_{t: |t-a| \le h} \| \wh \pi(t | \bs L) - \pi(t | \bs L) \|_2 = O_p(r_n)$
  and
  $\sup_{t: |t-a| \le h} \| \wh \mu( \bs L, t) - \mu(\bs L, t) \|_2 = O_p(s_n)$.
  Then
  \begin{equation}
    \label{eq:1}
       \PP_n\ls \bs{g}_{ha}(A) K_{ha}(A)
    \lb \wh{\xi}(\bs{Z}; \wh \pi, \wh \mu) -
    \xi(\bs{Z}; \wh{\pi} , \wh{\mu} )\rb\rs
   =  o_p(1/\sqrt{nh}),
 \end{equation}
 \begin{equation}
   \label{eq:2}
   (\PP_n-\PP)\ls \bs{g}_{ha}(A) K_{ha}(A)
   \lb \xi(\bs{Z}; \wh \pi, \wh \mu) -
   \xi(\bs{Z}; \overline{\pi} , \overline{\mu} )\rb\rs
   =  o_p(1/\sqrt{nh}),
   \text{ and} 
 \end{equation}
 \begin{equation}
   \label{eq:3}
   \PP\ls \bs{g}_{ha}(A) K_{ha}(A)
   \lb \xi(\bs{Z}; \wh \pi, \wh \mu) -
   \xi(\bs{Z}; \overline{\pi} , \overline{\mu} )\rb\rs
   =  O_p(r_ns_n)
 \end{equation}
  as $n \to \infty$.
\end{lemma}


\begin{proof}[Proof of Lemma~\ref{lem:1}]
  {\bf Second term, \eqref{eq:2}}:
  \citet{Kennedy:2017cq} show that
  $$    (\PP_n - \PP) \ls \bs{g}_{ha}(A) K_{ha}(A)
  \lb {\xi}(\bs{Z}; \wh \pi, \wh \mu) -
  \xi(\bs{Z}; \overline{\pi} , \overline{\mu} )\rb\rs
  =  o_p(1/ \sqrt{nh})$$
  (with no hat over the first $\xi$).
  (This follows from their  proof/analysis of their  $R_{n,1}$ in the proof of their Theorem 2
  (page 13 of the Web Appendix
  of \citet{Kennedy:2017cq}).)

  {\bf First term,  \eqref{eq:1}}: The order of the first term,  \eqref{eq:1}, follows from the proof of Lemma~\ref{lem:5}; note that the proof of Lemma~\ref{lem:5} proceeds (initially) for a fixed $a$.  We make a few comments here about the differing assumptions for Lemma~\ref{lem:1}
  and for Lemma~\ref{lem:5} (i.e., for Theorem~\ref{thm:null-normality})
  in
  the context of the proof of Lemma~\ref{lem:5}.

  Lemma~\ref{lem:5}
  relies on the
  assumption that $\pi,\mu, 1/\pi$ are uniformly bounded above,
  which
  both 
  Theorem~\ref{thm:null-normality} and
   Lemma~\ref{lem:1} make
  (Assumption \ref{assm:EA-item2}$_2$).  
  Similarly, both theorems make the assumption that $K$ is bounded above (Assumption~\ref{assm:EA}\ref{assm:EA-item4}). 
  Both theorems assume that $J_m(1, \mc{F}, L_2) < \infty$ for some $m > 1$,
  so $J(1, \mc{F},L_2)<\infty$, for $\mc{F}$ equal to
  $\mc{F}_\mu$,$\mc{F}_\pi$; this latter assumption is all that is needed for
  Lemma~\ref{lem:5} (which calls upon Proposition~\ref{lem:100}).  Both
  theorems assume that $Y$ has a finite variance, as needed by
  Lemma~\ref{lem:5}.

  Finally, by the assumption that $nh^{3} \to \infty$, we see that multiplying \eqref{eq:6} by $nh$ yields an order of $o_p(1)$, so the term on the left of \eqref{eq:1} is $o_p(1/\sqrt{nh}$) (rather than $O_p(1/\sqrt{n})$
  under the stronger assumption of Theorem~\ref{thm:null-normality} that $h$ is of order $n^{-1/5}$) as desired.

  {\bf Third term,  \eqref{eq:3}}: 
  Note 
  $ \PP[ \bs{g}_{ha}(A) K_{ha}(A) \{ \xi(\bs{Z}; \wh \pi, \wh \mu) -
  \xi(\bs{Z}; \overline{\pi} , \overline{\mu} )\}]$ is a vector with $j$th
  element ($j=1,2$) equal to
  \begin{equation*}
    \int \bs{g}_{ha,j}(t) K_{ha}(t)
    \PP    \lb \xi(\bs{Z}; \wh \pi, \wh \mu) -
    \xi(\bs{Z}; \overline{\pi} , \overline{\mu} )|A=t\rb\varpi_0(t)\,dt,
  \end{equation*}
  where $g_{ha,j}(t)=\{(t-a)/h\}^{j-1}$. Note  that
  \begin{align}
    \label{eq:conditional-db}
    \begin{split}
      \MoveEqLeft \PP\lb\xi(\bs{Z};\widehat{\pi},\widehat{\mu})-\xi(\bs{Z};\bar{\pi},\bar{\mu})|A=t\rb\\
      &= \PP\ls
      \lb\mu_0(\bs{L},t)-\widehat{\mu}(\bs{L},t)\rb\lb\frac{\pi_0(t|\bs{L})/\varpi_0(t)}{\widehat{\pi}(t|\bs{L})/\int\widehat{\pi}(t|\bs{l})\,dP(\bs{l})}\rb\rs\\
      &\quad+\int\widehat{\mu}(\bs{l},t)\,dP(\bs{l})-\int\mu_0(\bs{l},t)\,dP(\bs{l})\\
      &=\PP\ls            \lb\mu_0(\bs{L},t)-\widehat{\mu}(\bs{L},t)\rb\lb\frac{\pi_0(t|\bs{L})/\varpi_0(t)}{\widehat{\pi}(t|\bs{L})/\int\widehat{\pi}(t|\bs{l})\,dP(\bs{l})}-1\rb\rs\\
      &=\PP\ls          \lb\mu_0(\bs{L},t)-\widehat{\mu}(\bs{L},t)\rb\lb\frac{\pi_0(t|\bs{L})\int\widehat{\pi}(t|\bs{l})\,dP(\bs{l})-\widehat{\pi}(t|\bs{L})\varpi_0(t)}{\widehat{\pi}(t|\bs{L})\varpi_0(t)}\rb\rs.
    \end{split}
  \end{align}
  Then by further calculation, we see
  \begin{align}
    \begin{split}
      \label{eq:conditional-db-part2}
      &\PP\ls          \lb\mu_0(\bs{L},t)-\widehat{\mu}(\bs{L},t)\rb\lb\frac{\pi_0(t|\bs{L})\int\widehat{\pi}(t|\bs{l})\,dP(\bs{l})-\widehat{\pi}(t|\bs{L})\varpi_0(t)}{\widehat{\pi}(t|\bs{L})\varpi_0(t)}\rb\rs\\
      &=\PP\ls                                                             \lb\mu_0(\bs{L},t)-\widehat{\mu}(\bs{L},t)\rb\lb\frac{\pi_0(t|\bs{L})\int\widehat{\pi}(t|\bs{l})\,dP(\bs{l})-\widehat{\pi}(t|\bs{L})\int\widehat{\pi}(t|\bs{l})\,dP(\bs{l})}{\widehat{\pi}(t|\bs{L})\varpi_0(t)}\rb\rs\\
      &\quad+\PP\ls          \lb\mu_0(\bs{L},t)-\widehat{\mu}(\bs{L},t)\rb\lb\frac{\widehat{\pi}(t|\bs{L})\int\widehat{\pi}(t|\bs{l})\,dP(\bs{l})-\widehat{\pi}(t|\bs{L})\varpi_0(t)}{\widehat{\pi}(t|\bs{L})\varpi_0(t)}\rb\rs\\        &=\frac{\int\widehat{\pi}(t|\bs{l})\,dP(\bs{l})}{\varpi_0(t)}\PP\ls                                                             \lb\mu_0(\bs{L},t)-\widehat{\mu}(\bs{L},t)\rb\lb\frac{\pi_0(t|\bs{L})-\widehat{\pi}(t|\bs{L})}{\widehat{\pi}(t|\bs{L})}\rb\rs\\
      &\quad+\frac{1}{\varpi_0(t)}\PP\lb\widehat{\pi}(t|\bs{L})-\pi_0(t|\bs{L})\rb\PP\lb\mu_0(\bs{L},t)-\widehat{\mu}(\bs{L},t)\rb.
        \end{split}
  \end{align}
  By Assumptions~\ref{assm:IA}\ref{assm:IA-item2},
  \ref{assm:EA-item2},
  and the Cauchy-Schwarz inequality, we see the
  conditional integral
  $\PP\lb\xi(\bs{Z};\widehat{\pi},\widehat{\mu})-\xi(\bs{Z};\bar{\pi},\bar{\mu})|A=t\rb$
  is  bounded by
  \begin{align*}
    O\lp\|\wh\pi(t|\bs{L})-\pi_0(a|\bs{L})\|_2\|\wh\mu(\bs{L},t)-\mu_0(\bs{L},t)\|_2\rp,   
  \end{align*}
  and thus
  \begin{align*}
    \MoveEqLeft       \left\lvert \PP\ls \bs{g}_{ha,j}(A) K_{ha}(A)
    \lb \xi(\bs{Z}; \wh \pi, \wh \mu) -
    \xi(\bs{Z}; \overline{\pi} , \overline{\mu} )\rb\rs  \right\lvert\\
    & =O\lp\int_{\mc{A}}
      \bs{g}_{ha,j}(A)\|\wh\pi(t|\bs{L})-\pi_0(a|\bs{L})\|_2\|\wh\mu(\bs{L},t)-\mu_0(\bs{L},t)\|_2\varpi_0(t)\,dt\rp\\
    &=O_p(r_ns_n).
  \end{align*}
\end{proof}

\section{Proof of main lemmas for Theorem~\ref{thm:null-normality}}
The lemmas proved in this section, combined with several of those from Section~\ref{sec:proofs-u-or-v}, form the backbone of
the proof of Theorem~\ref{thm:null-normality}.

\begin{lemma}
  \label{lem:step2-1}
  Let the assumptions of Theorem~\ref{thm:null-normality} hold. Then
  \begin{equation*}
    \PP_n\{\xi(\bs{Z};\bar{\pi},\bar{\mu})-\xi(\bs{Z};\widehat{\pi},\widehat{\mu})\}
    = O_p\lp 1/\sqrt{n} + s_n^{\infty}r_n^{\infty}\rp,
  \end{equation*}
  as $n\to \infty$.
\end{lemma}
\begin{proof}[Proof of Lemma~\ref{lem:step2-1}]
  We write the quantity of interest as 
  \begin{align*}
    \begin{split}  \MoveEqLeft\PP_n\{\xi(\bs{Z};\bar{\pi},\bar{\mu})-\xi(\bs{Z};\widehat{\pi},\widehat{\mu})\}\\&=(\PP_n-\PP)\{\xi(\bs{Z};\bar{\pi},\bar{\mu})-\xi(\bs{Z};\widehat{\pi},\widehat{\mu})\}+\PP\{\xi(\bs{Z};\bar{\pi},\bar{\mu})-\xi(\bs{Z};\widehat{\pi},\widehat{\mu})\}.
    \end{split}
  \end{align*}
  Now, we apply the concept of stochastic
  equicontinuity to deal with first term on the right side of the previous display, using an argument
  similar to the one used by \cite{Kennedy:2017cq}. 
  Let
  \begin{align}
    \label{eq:def-xi}
    \Xi =(Y\oplus\mc{F}_{\mu})\mc{F}_{\pi}^{-1}\mc{F}_{\varpi}\oplus\mc{F}_m,
  \end{align}
  where $(\mc{F}_{\mu},\mc{F}_{\pi},\mc{F}_{\varpi},\mc{F}_m)$ are the classes of
  functions containing $(\pi,\mu,\varpi,m)$ (Recall $\mc{F}_{\mu}$ and $\mc{F}_{\pi}$ are
  defined in Assumption~\ref{assm:EA-item2}) and
  specifically, the latter two are defined as 
  $\mc{F}_{\varpi} := \{\int\pi(\cdot|\bs{l})\,dP(\bs{l}),\pi\in\mc{F}_{\pi}\}$
  and $\mc{F}_{m} := \{\int\mu(\bs{l},\cdot)\,dP(\bs{l}),\mu\in\mc{F}_{\mu}\}$.   With slight abuse of notation, $Y$
  is the single identity function that takes $\bs{z}=(\bs{l},a,y)$ as input and outputs
  $y$.
  Moreover, we define the operators
  $\mc{F}_1\oplus\mc{F}_2=\{f_1+f_2:f_j\in\mc{F}_j\}$, $\mc{F}^{-1}=\{1/f:f\in \mc{F}\}$
  and  $\mc{F}_1\mc{F}_2=\{f_1f_2:f_j\in\mc{F}_j\}$, for arbitrary function classes
  $\mc{F}$.

  By construction of $\Xi$ and
  Assumption~\ref{assm:EA-item2}, $\xi(\bs{z};\bar\pi,\bar\mu)$ and
  $\xi(\bs{z};\wh\pi,\wh\mu)$ fall in the class $\Xi$.  Moreover, after some rearranging, we can write
  \begin{align}
    \begin{split}
      \MoveEqLeft
      \xi(\bs{Z};\bar{\pi},\bar{\mu})-\xi(\bs{Z};\widehat{\pi},\widehat{\mu})\\
      &=\frac{Y-\bar{\mu}(\bs{L},A)}{\bar{\pi}(A|\bs{L})}\int\bar{\pi}(A|\bs{l})\,dP(\bs{l})+
      \int\bar{\mu}(\bs{l},A)\,dP(\bs{l})\\
      &\quad-\frac{Y-\widehat{\mu}(\bs{L},A)}{\widehat{\pi}(A|\bs{L})}\int\wh{\pi}(A|\bs{l})\,dP(\bs{l})-
      \int\wh{\mu}(\bs{l},A)\,dP(\bs{l})\\
      &=\frac{Y-\bar{\mu}(\bs{L},A)}{\bar{\pi}(A|\bs{L})}\frac{\int\widehat{\pi}(A|\bs{l})\,dP(\bs{l})}{\widehat{\pi}(A|\bs{L})}
      \lb
      \widehat{\pi}(A|\bs{L})-\bar{\pi}(A|\bs{L})\rb\\
      &\quad+\frac{\int\widehat{\pi}(A|\bs{l})\,dP(\bs{l})}{\widehat{\pi}(A|\bs{L})}
      \lb\widehat{\mu}(\bs{L},A)-\bar{\mu}(\bs{L},A)\rb\\
      &\quad
      +\frac{Y-\bar{\mu}(\bs{L},A)}{\bar{\pi}(A|\bs{L})}\lb\int\bar{\pi}(A|\bs{l})\,P(\bs{l})-\int\widehat{\pi}(A|\bs{l})\,dP(\bs{l})\rb\\
      &\quad   +\lb\int\bar{\mu}(\bs{l},A)\,dP(\bs{l})-\int\widehat{\mu}(\bs{l},A)\,P(\bs{l})\rb\\
      &=O_p(\|\widehat{\pi}-\bar{\pi}\|_{\mathcal{Z}}+\|\widehat{\mu}-\bar{\mu}\|_{\mathcal{Z}}).
    \end{split}
  \end{align}
  Thus, by
  Assumption~\ref{assm:EA}\ref{assm:EA-item1}, we have $\|
  \xi(\bs{z};\bar{\pi},\bar{\mu})-\xi(\bs{z};\widehat{\pi},\widehat{\mu})\|_{\mc{Z}}=o_p(\sqrt{h})$. 
  Then by Lemma~\ref{lem:identity-equicont} we have
  \begin{align}
    (\PP_n-\PP)\{\xi(\bs{Z};\bar{\pi},\bar{\mu})-\xi(\bs{Z};\widehat{\pi},\widehat{\mu})\}=o_p(1/\sqrt{n}).
  \end{align}
  From the proof of Lemma~\ref{lem:1}, 
  we have $\PP\lb\xi(\bs{Z};\widehat{\pi},\widehat{\mu})-\xi(\bs{Z};\bar{\pi},\bar{\mu})|A=t\rb$
  is 
  $O(\|\wh\pi(t|\bs{L}-\pi_0(a|\bs{L})\|_2\|\wh\mu(\bs{L},t)-\mu_0(\bs{L},t)\|_2)$.
  Then by Assumption~\ref{assm:DA}\ref{assm:DA-item3}, we have the uniform
  boundedness of $\varpi_0$. Thus we have
  \begin{align}
    \begin{split}
      \MoveEqLeft    \lvert\PP\{\xi(\bs{Z};\bar{\pi},\bar{\mu})-\xi(\bs{Z};\widehat{\pi},\widehat{\mu})\}\rvert    \\&=O\lp\left\lvert\int_{\mc{A}}\|\widehat{\pi}(t|\bs{L})-\pi_0(t|\bs{L})\|_2\|\widehat{\mu}(\bs{L},t)-\mu_0(\bs{L},t)\|_2\,dt\right\rvert\rp\\
      &=O\lp
        \sup_{t\in\mc{A}}\|\widehat{\pi}(t|\bs{L})-\pi_0(t|\bs{L})\|_2\sup_{t\in\mc{A}}\|\widehat{\mu}(\bs{L},t)-\mu_0(\bs{L},t)\|_2\rp\\
      &=O_p \lp s_n^{\infty}r_n^{\infty}\rp.
    \end{split}
  \end{align}
\end{proof}

\begin{lemma}
  \label{lem:step3-1}
  Let the assumptions of Theorem~\ref{thm:null-normality} hold. Then
  \begin{equation*}
    \int\lp g_{ha}^T \widehat {\bs D}_{ha}^{-1} \PP_n \ls g_{ha}(A) K_{ha}(A)
    \lb   \xi( \bs Z; \wh \pi, \wh \mu ) - \xi(\bs Z; \bar{\pi},
    \bar{\mu}) \rb \rs\rp^2 w(a)\,da
  \end{equation*}
  is  $o_p(1/n+(s_n^{\infty}r_n^{\infty})^2)$   as $n\to \infty$.          
\end{lemma}
\begin{proof}[Proof of Lemma~\ref{lem:step3-1}]
  Here  we  
  use a further decomposition $d_{1,2}(a)=R_{n,1,a}+R_{n,2,a}$  (recall the definition of
  $d_{1,2}$ in \eqref{eq:13}) with
  \begin{equation}
    \label{eq:17}
    \begin{split}
    R_{n,1,a} & := g_{ha}^T(a) \widehat {\bs D}_{ha}^{-1} (\PP_n - \PP) \ls g_{ha}(A) K_{ha}(A)
                \lb   \xi( \bs Z; \wh \pi, \wh \mu ) - \xi(\bs Z; \bar \pi, \bar \mu) \rb \rs  \\
    R_{n,2,a} & := g_{ha}^T(a) \widehat {\bs  D}_{ha}^{-1}  \PP \ls  g_{ha}(A) K_{ha}(A)
                \lb   \xi( \bs Z; \wh \pi, \wh \mu ) - \xi(\bs Z; \bar \pi, \bar \mu) \rb\rs.      
    \end{split}
  \end{equation}
  We will bound $\sup_{a \in \mc{A}} R_{n,2,a}$, whereas for $R_{n,1,a}^2 $ we will show a moment bound for each $a$ and then to control the order of magnitude of the integral we control its expectation, by interchanging the integral and the expectation.

  \begin{mylongform}
    \begin{longform}

      For bootstrap if we define errors as centered at $\wh \theta$ rather than at the null estimate $\PP_n \wh \xi$, we need some control of quantities like $\PP_n \wh \theta(A) - \wt \theta(A)$ and $\PP_n W_{ha}{A} (\wh \theta(A) - \wt \theta(A))$.  The v process is ok and the second order remainder above is ok.  It is $R_{n,1,a}$ that is the issue.  It can be done with just another V-process argument but I basically ran out of steam.
      
    \end{longform}
  \end{mylongform}
  
  Lemma~\ref{lem:equiv-kernel-lemma} yields that
  $\sup_{a\in
    \mc{A}}|g_{ha}(a)^T\widehat{D}_{ha}^{-1}-(\varpi_0(a)^{-1},0)|=o(1)$ a.s.\
  for each element.
  Now we let
  \begin{align*}
    R'_{n,1,a} & := (\PP_n - \PP) \ls g_{ha}(A) K_{ha}(A)
                 \lb   \xi( \bs Z; \wh \pi, \wh \mu ) - \xi(\bs Z; \bar \pi, \bar \mu) \rb \rs  \\
    R'_{n,2,a} & :=  \PP \ls  g_{ha}(A) K_{ha}(A)
                 \lb   \xi( \bs Z; \wh \pi, \wh \mu ) - \xi(\bs Z; \bar \pi, \bar \mu) \rb\rs.
  \end{align*}
  We first show $n\sqrt{h}\int (R'_{n,1,a})^2w(a)\,da$ is $o_p(\sqrt{h}) = o_p(1)$. 
  For a measurable class of functions $\mc F$, recall that 
  \begin{equation*}
    J(\delta, \mc F, L_2) :=
    \int_0^\delta     \sup_{Q} \sqrt{1 +
      \log N(\epsilon \| F \|_{Q,2}, \mc F, L_2(Q))} d \epsilon
  \end{equation*}
  where the supremum is taken over all discrete probability measures $Q$ with $\| F\|_{Q,2} > 0$.    \begin{mylongform}
    \begin{longform}
      \KR{Definition of $J$ different from \cite{Kennedy:2017cq}.}

      Definition of $J$ now modified to match with earlier definition in our paper.  for moment bound can have sup on outside but moving sup on inside only increases the value.  
    \end{longform}
  \end{mylongform}
  Here $N(\epsilon, \mc F, L_2(Q))$ is the $L_2(Q)$-$\epsilon$ covering number of $\mc F$ \citep{vanderVaart:1996tf}.
  Also, let $\| \GG_n \|_{\mc F} := \sup_{f \in \mc F} | \GG_n(f) |$.

  \begin{mylongform}
    \begin{longform}
      Let $\Xi$ be as defined on page 10 of the supplement of \cite{Kennedy:2017cq}.
      The class $\Xi$ is shown to be uniformly bounded (under the assumptions of Theorem 2 of  \cite{Kennedy:2017cq}) and to have a finite uniform entropy integral, meaning $J(\infty, \Xi) < \infty$.
    \end{longform}
  \end{mylongform}

  For $\delta > 0$, $j=1,2$, let $\mc{G}_{\delta,j,a,n} \equiv \mc G_{\delta,j,
    a}$ be the class of functions
  \begin{align*}
    \MoveEqLeft h^{-(j-1 + 1/2)} \times\\& \lb \bs{Z} \mapsto \lp A-a\rp^{j-1} K\lp\frac{A-a}{h}\rp (\xi_1(\bs Z)-\xi_2(\bs Z)) :
    \| \xi_1 - \xi_2 \|_{\mc Z} \le \sqrt{h}\delta \rb
  \end{align*}
  for $\bs{Z} = (\bs L,A,Y)$.
  Here $K$ is a kernel satisfying  Assumption~\ref{assm:EA}\ref{assm:EA-item4}.
  For any $a$, $\mc{G}_{\delta,j,a}$ has envelope given by
  \begin{align*}
    G_n(\bs Z)  :=  \lp \frac{A-a}{h}\rp^{j-1} K\lp\frac{A-a}{h}\rp\delta.
  \end{align*}
  By Theorem 2.14.1 of \cite{vanderVaart:1996tf}, we have (suppressing outer expectations related to measurability concerns)
  \begin{equation}
    \label{eq:1}
    \EE \| \GG_n \|^2_{\mc{G}_{\delta,j,a}}
    \lesssim J(1, \mc{G}_{\delta,j,a})^2 \EE | G_n |^2 
  \end{equation}
  where  $\lesssim$ means $\le$ up to a universal constant factor.  
  (To apply the theorem, the class $\mc{G}_{\delta,j,a}$ must be $\PP$-measurable; see  Definition 2.3.3 page 110 of  \cite{vanderVaart:1996tf}.)

  A standard kernel density calculation shows that 
  $\EE G_n^2 \le O(h)\delta$
  independent of $a$ (by  Assumption~\ref{assm:EA}\ref{assm:EA-item4} $\int |u|^{j-1} K(u) du < \infty$ and  $K$ is bounded).
  Thus, 
  we can
  conclude that \eqref{eq:1} is bounded above by $O(h)$, independently of $a$.








  Then for any $\delta,\epsilon>0$, we have
  \begin{align*}
    \MoveEqLeft
    P\lp n\int (R'_{n,1,a})^2w(a)\,da>\epsilon\rp\\&\le   P\lp
    n\int (R'_{n,1,a})^2w(a)\,da>\epsilon,\|\xi(\bs{z};\wh\pi,\wh\mu)-\xi(\bs{z};\bar\pi,\bar\mu)\|_{\mc{Z}}\le
    \sqrt{h}\delta\rp\\
    &\qquad+P\lp\|\xi(\bs{z};\wh\pi,\wh\mu)-\xi(\bs{z};\bar\pi,\bar\mu)\|_{\mc{Z}}> \sqrt{h}\delta\rp,
  \end{align*}
  where the latter probability converges to 0 since
  $\|\xi(\bs{z};\wh\pi,\wh\mu)-\xi(\bs{z};\bar\pi,\bar\mu)\|_{\mc{Z}}=o_p(\sqrt{h})$
  from 
  Step 2. 
  And for the other term, by
  Markov's theorem,
  \begin{align*}
    \MoveEqLeft   P\lp
    n\int (R'_{n,1,a})^2w(a)\,da>\epsilon,\|\xi(\bs{z};\wh\pi,\wh\mu)-\xi(\bs{z};\bar\pi,\bar\mu)\|_{\mc{Z}}
    \le
    \sqrt{h}\delta\rp\\
    &\le \epsilon^{-1}\EE\lb \int n(R'_{n,1,a})^2w(a)\,da\rb      \\
    & = \epsilon^{-1} \int \EE n (R'_{n,1,a})^2 w(a) da
    \le
      \frac{w_{\max}}{\epsilon}\EE\|\bb{G}_n\|^2_{\mc{G}_{\delta,j,a}}
    \le \frac{w_{\max}O(\delta)}{\epsilon},
  \end{align*}
  which goes to 0 by letting $\delta\searrow 0$, and here $w_{\max}:=\sup_{\mc{A}}|w(a)|<\infty$. Thus $n\sqrt{h}\int
  (R'_{n,1,a})^2w(a)\,da$ is $o_p(\sqrt{h})$.

  \begin{mylongform}
    \begin{longform}
       HERE we do NOT have sup control over $a$.  We do use the fact
        of the integral ! 
    \end{longform}
  \end{mylongform}

  To go from $R_{n,1,a}'$ to $R_{n,1,a}$, 
  we use the argument (used above) on $g_{ha}^T \widehat{D}_{ha}^{-1}$ being $O_p(1)$; we have
  \begin{align*}
    \MoveEqLeft
    n\sqrt{h}\int R_{n,1,a}^2w(a)\,da \\&= n\sqrt{h}\int \lb
    g_{ha}^T\widehat{D}_{ha}^{-1}R'_{n,1,a}\rb^2w(a)\,da\\
    &= n\sqrt{h}\int \lb
      \ls
      g_{ha}^T \widehat{D}_{ha}^{-1}-(\varpi_0(a)^{-1},0)+(\varpi_0(a)^{-1},0)\rs
      R'_{n,1,a}\rb^2w(a)\,da\\
    &\le n\sqrt{h}\int 2\lb
      |g_{ha}^T\widehat{D}_{ha}^{-1}-(\varpi_0(a)^{-1},0)|R'_{n,1,a}\rb^2w(a)\,da\\
    &\quad+
      n\sqrt{h}\int 2\lb
      |(\varpi_0(a)^{-1},0)|R'_{n,1,a}\rb^2w(a)\,da.
  \end{align*}
  The first term on the last two lines above can be bounded as
  \begin{align*}
    \MoveEqLeft   n\sqrt{h}\int 2\lb
    |g_{ha}^T\widehat{D}_{ha}^{-1}-(\varpi_0(a)^{-1},0)|R'_{n,1,a}\rb^2w(a)\,da\\
    &\le 2n\sqrt{h}\int 
      \|g_{ha}^T\widehat{D}_{ha}^{-1}-(\varpi_0(a)^{-1},0)\|^2_{l_2}\|R'_{n,1,a}\|_{l_2}^2w(a)\,da,
  \end{align*}
  where $\|x\|_{l_2}:=\sqrt{x^Tx}$  is the $l_2$ norm for vectors on
  $\RR^{d}$. Since we have $ |g_{ha}^T\widehat{D}_{ha}^{-1}-(\varpi_0(a)^{-1},0)|$
  is uniformly $o(1)$ a.s. and each element of $n\sqrt{h}\int(R'_{n,1,a})^2w(a)\,da$ is
  $o_p(\sqrt{h})$, we have the right hand side of the above inequality is $o_p(\sqrt{h})$. By
  Assumption~\ref{assm:IA}\ref{assm:IA-item2}, $\varpi_0(a)$ is bounded below from
  $0$; similarly it is easy to see
  \begin{align*}
    \MoveEqLeft   n\sqrt{h}\int 2\lb
    |(\varpi_0(a)^{-1},0)|R'_{n,1,a}\rb^2w(a)\,da\\
    &\le 2n\sqrt{h}\int \|(\varpi_0(a)^{-1},0)\|_{l_2}^2\|R'_{n,1,a}\|_{l_2}^2w(a)\,da\\
    &=o_p(\sqrt{h}).
  \end{align*}
  So $n\sqrt{h}\int R_{n,1,a}^2w(a)\,da=o_p(\sqrt{h})$.

  Next we show $\int (R'_{n,2,a})^2w(a)\,da$  is $o_p((s_n^{\infty}r_n^{\infty})^2)$. From \eqref{eq:conditional-db}, we have
  \begin{align*}
    \begin{split}
      |R'_{n,2,a}|
      &=O\lp
      \left\lvert\int_{\mc{A}}g_{ha}(t)K_{ha}(t)\|\widehat{\pi}(t|\bs{L})-\pi(t|\bs{L})\|_2\|\widehat{\mu}(\bs{L},t)-\mu(\bs{L},t)\}\|_2\,dt\right\lvert\rb
    \end{split}
  \end{align*}
  We let
  \begin{align*}  I_a&:=\left\lvert\int_{\mc{A}}g_{ha}(t)K_{ha}(t)\|\widehat{\pi}(t|\bs{L})-\pi(t|\bs{L})\|_2\|\widehat{\mu}(\bs{L},t)-\mu(\bs{L},t)\|_2\,dt\right\lvert
  \end{align*}
  We  show $\int_{\mc{A}}I_a^2w(a)\,da$ is $o_p((n\sqrt{h})^{-1})$.  By
  Assumption~\ref{assm:EA}\ref{assm:DA-item4}, we know $K$ has bounded support,
  and thus
  \begin{align*}
    I_a&=O\lp\int_{\mc{A}}K_{ha}(t)\|\widehat{\pi}(t|\bs{L})-\pi(t|\bs{L})\|_2\|\widehat{\mu}(\bs{L},t)
         -\mu(\bs{L},t)\|_2\,dt\rp\\
       &=O\lp\sup_{t\in\mc{A}}\|\widehat{\pi}(t|\bs{L})-\pi(t|\bs{L})\|_2\sup_{t\in\mc{A}}\|\widehat{\mu}(\bs{L},t)
         -\mu(\bs{L},t)\|_2\rp\\
    &=O_p\lp s_n^{\infty}r_n^{\infty}\rp.
  \end{align*}
  With a similar argument as for $n\sqrt{h}\int R_{n,1,a}^2w(a)\,da$, we   see
  the integral
  $\int R_{n,2,a}^2w(a)\,da=o_p((s_n^{\infty}r_n^{\infty})^2)$. Then by the Cauchy-Schwarz inequality, it's
  easily seen that $\int \{d_{1,2}(a)\}^2w(a)\,da=o_p(1/n+(s_n^{\infty}r_n^{\infty})^2)$.
\end{proof}

\bigskip

\begin{lemma}
  \label{lem:d2d3}
  Let the assumptions of Theorem~\ref{thm:null-normality} hold. Let $D_2(a), D_3$ be
  as defined in \eqref{eq:Tn-decompose}. Then
  $\int_{\mc A} D_2(a) w(a) \, da = O_p(n^{-1/2})$, and so
  \begin{equation}
    \label{eq:8}
    n\sqrt{h}\int_{\mc{A}}D_2(a)D_3w(a)\,da=o_p(1)\text{ as }n\to \infty.
  \end{equation}
\end{lemma}
\begin{proof}[Proof of Lemma~\ref{lem:d2d3}]
  In step 2 of the proof of Theorem~\ref{thm:null-normality}, we have
  seen that $D_3=o_p(1/\sqrt{nh^{1/2}})$. So \eqref{eq:8} follows from
  $\int_{\mc A} D_2(a) w(a) \, da = O_p(n^{-1/2})$.  We first write $D_2(a)$
  as
  \begin{align}
    D_2(a)=\tilde{\theta}_h(a)-\PP_n\bar{\xi}=\tilde{\theta}_h(a)-\PP\bar{\xi}+\PP\bar{\xi}-\PP_n\bar{\xi}.
  \end{align}
  By the Central Limit Theorem, it is easily seen that
  $\PP\bar{\xi}-\PP_n\bar{\xi}=O_p(1/\sqrt{n})$. 
  By
  Lemma~\ref{lem:equiv-kernel-lemma},
  we can
  write
  \begin{align*}    
    \tilde{\theta}_h(a)-\PP\bar{\xi}&= g^T_{ha}\widehat{\bs{D}}^{-1}_{ha}\PP_n\ls
                                      g_{ha}(A)K_{ha}(A)\lb\bar{\xi}(\bs{Z};\bar{\pi},\bar{\mu})-\PP\bar{\xi}\rb\rs\\
                                    &= \frac{1}{n}\sum_{i=1}^n\ls(\varpi_0(a)^{-1},0)+o_p(1)\rs\ls
                                      g_{ha}(A_i)K_{ha}(A_i)\lb\bar{\xi}(\bs{Z}_i;\bar{\pi},\bar{\mu})-\PP\bar{\xi}\rb\rs\\
                                    &=\frac{1}{n}\sum_{i=1}^n\frac{1}{h\varpi_0(a)}K\lp\frac{A_i-a}{h}\rp\lp \bar{\xi}_i-\PP\bar{\xi}\rp\lb
                                      1+o_p(1)\rb
  \end{align*}
  uniformly for all $a\in\mc{A}$. So we can focus on integrating the dominating term
  \begin{align*}
    \frac{1}{n}\sum_{i=1}^n\frac{1}{h\varpi_0(a)}K\lp\frac{A_i-a}{h}\rp\lp
    \bar{\xi}_i-\PP\bar{\xi}\rp,
  \end{align*}
  and we have
  \begin{align*}
    \MoveEqLeft  \int_{\mc{A}} \frac{1}{n}\sum_{i=1}^n\frac{1}{h\varpi_0(a)}K\lp\frac{A_i-a}{h}\rp\lp
    \bar{\xi}_i-\PP\bar{\xi}\rp w(a)\,da\\&=\frac{1}{n}\sum_{i=1}^n\int_{\mc{A}}\frac{w(a)}{h\varpi_0(a)}K\lp\frac{A_i-a}{h}\rp\,da\lp
    \bar{\xi}_i-\PP\bar{\xi}\rp.
  \end{align*}
  By a change of variables ($a=A_i+hu$), we can evaluate the integral as
  \begin{align*}
    \int_{\mc{A}}\frac{w(a)}{h\varpi_0(a)}K\lp\frac{A_i-a}{h}\rp\,da=\int \frac{w(A_i+hu)}{\varpi_0(A_i+hu)}K(u)\,du.
  \end{align*}
  By Taylor expansion, we can write $w(A_i+hu)/\varpi_0(A_i+hu)$ as 
  \begin{align*}
    \frac{w(A_i)}{\varpi_0(A_i)}+\frac{w'(A_i+\eta
    uh)\varpi_0(A_i+\eta uh)-w(A_i+\eta
    uh)\varpi_0'(A_i+\eta uh)}{\{\varpi_0(A_i+\eta uh)\}^2}uh,
  \end{align*}
  where $0<\eta<1$ is some constant depending on $u$. By
  assumption~\ref{assm:DA}\ref{assm:DA-item2} and our assumption on $w(a)$, we
  know $\varpi_0(a)$ is bounded below from 0 and bounded above, and $\varpi'$ and $w'$
  are both bounded.
  \begin{align*}
    \int_{\mc{A}} \frac{w(a)}{h\varpi_0(a)}K\lp \frac{A_i-a}{h}\rp\,da
    =\frac{w(A_i)}{\varpi_0(A_i)}+O(h),
  \end{align*}
  uniformly for all $A_i$. Then by central limit theorem,
  \begin{align*}
    \frac{1}{n}\sum_{i=1}^n\int_{\mc{A}}\frac{w(a)}{h\varpi_0(a)}K\lp\frac{A_i-a}{h}\rp\,da\lp
    \bar{\xi}_i-\PP\bar{\xi}\rp=O_p(1/\sqrt{n}), 
  \end{align*}
  and thus $\int D_2(a)w(a)\,da=O_p(1/\sqrt{n}))$. So we have $n\sqrt{h}\int
  D_2(a)D_3 w(a)\,da=o_p(h^{1/4})=o_p(1)$.
\end{proof}


\bigskip

\begin{lemma}
  \label{lem:step5-decompose-term2}
  Let the assumptions of Theorem~\ref{thm:null-normality} hold. Then
  \begin{equation*}
    \int    \frac{1}{\varpi_0^2(a)} \PP \ls K_{ha}(A)
    \lb   \xi( \bs Z; \wh \pi, \wh \mu ) -  \xi(\bs Z;  \bar \pi,
    \bar \mu) \rb\rs\frac{1}{n}\sum_{i=1}^n
    K_h\lp A_i-a\rp
    \bar\epsilon_i w(a)\,da
  \end{equation*}
  is $o_p((n\sqrt{h})^{-1})$.
\end{lemma}
\begin{proof}[Proof of Lemma~\ref{lem:step5-decompose-term2}]
  Recall
  the definition of the function class $\Xi$ of functions $\xi(\cdot; \pi, \mu)$:
  \begin{equation}
    \label{eq:26}
    \Xi :=  (Y + \mc{F}_\mu) \mc{F}^{-1}_\pi \mc{F}_\varpi + \mc{F}_m,
  \end{equation}
  where $Y$ is shorthand for the single function outputting $Y$ from $\bs Z$.
  Thus $\Xi$ is indexed by the above function classes. 
  Define also the shifted class $\Xi_{1} := \lb \xi - \overline{\xi} : \xi \in \Xi\rb $.  Now for functions $\pi \in \mc{F}_\pi, \mu \in \mc{F}_\mu$,
  let
  \begin{equation}
    \label{eq:27}
    \begin{split}
      \psi_{\mu,\pi}(t) & :=
      \frac{\PP\pi(t|\bs{L}))}{\PP\pi_0(t|\bs{L})} \PP \ls \lb \mu_0(\bs L, t) - \mu(\bs L, t) \rb
      \lb \frac{\pi_0(t|\bs L) - \pi(t | \bs L)}{ \pi(t | \bs L)} \rb \rs \\
      & \quad + \inv{\PP\pi_0(t|\bs{L})} \PP  \lb \pi(t | \bs L) - \pi_0( t | \bs L) \rb
      \PP \lb \mu_0(\bs L, t) - \mu(\bs L, t) \rb.
    \end{split}
  \end{equation}
  (regarding $\varpi$ as dependent on $\pi$).
  Note that $\mc{F}_\psi:=\{ \psi_{\pi, \mu} : \pi \in \mc{F}_\pi, \mu \in \mc{F}_\mu \}$ is uniformly bounded, and also has finite (independent of $n$) uniform entropy integral.  This latter statement follows because, first, from
  the properties discussed in the previous paragraph (i.e., by
  \cite[Theorem 2.10.20]{vanderVaart:1996tf}) the classes of functions
  $ (\bs l, t) \mapsto \frac{\varpi(t)}{\varpi_0(t)}  \lb \mu_0(\bs l, t) - \mu(\bs l, t) \rb
  \lb \frac{\pi_0(t|\bs l) - \pi(t | \bs l)}{ \pi(t | \bs l)} \rb$,
  as well as 
  $ (\bs l, t) \mapsto \inv{\varpi_0(t)}   \lb \pi(t | \bs l) - \pi_0( t | \bs l) \rb$,
  and $ (\bs l, t) \mapsto \lb \mu_0(\bs l, t) - \mu(\bs l, t) \rb$, all have finite uniform entropy integral.
  And then, second, applying $\PP$ (with $t$ fixed and $\bs L$ random) cannot
  increase the uniform entropy integral by
  Lemma~\ref{lem:integral-decreases-entropy}. Thus,
  $ t \mapsto \PP \ls \frac{\varpi(t)}{\varpi_0(t)}  \lb \mu_0(\bs L, t) - \mu(\bs L, t) \rb
  \lb \frac{\pi_0(t|\bs L) - \pi(t | \bs L)}{ \pi(t | \bs L)} \rb \rs$,
  $ t \mapsto   \inv{\varpi_0(t)}  \PP \lb \pi(t | \bs L) - \pi_0( t | \bs L) \rb $
  and $ t \mapsto \PP \lb \mu_0(\bs L, t) - \mu(\bs L, t) \rb$, all have finite uniform entropy integral;
  and then by the preservation properties  of the uniform entropy integral (i.e., by
  \cite[Theorem 2.10.20]{vanderVaart:1996tf}), 
  the class of functions $\{ \psi_{\mu,\pi}$, $\mu\in \mc{F}_\mu$, $\pi \in \mc{F}_\pi \}$, given in \eqref{eq:27}, has a finite uniform entropy integral.  It can also be checked to be uniformly bounded.

  For a sequence $\delta_n > 0$, define another class  
  \begin{equation}
    \label{eq:24}
    \KXI :=
    \lb
    \begin{array}{l}
      f(a_1) := \varpi_0(a_1)^{-1} \int K_h(a_2-a_1) \wt f(l,a_2,y)  d\PP(l,a_2,y) ;
      \wt f \in \Xi_1 , \\
      \qquad f(a_1) = \varpi_0(a_1)^{-1} \int_{\mc A}  K_h(a_2-a_1) \psi_{\mu,\pi}(a_2)
      \varpi_0(a_2) da_2,
      \| f  \|_{\mc A} \le \delta_n    
    \end{array}
    \rb
  \end{equation}
  where recall that functions in $\Xi$ (and so $\Xi_1$) are indexed by the classes
  given in \eqref{eq:26}. Define a third class
  \begin{equation*}
    \KKXI :=
    \lb
    f(A) =
    \int
    K_h(A - a_1)  \wt f(a_1) da_1
    ; \wt f \in \KXI
    \rb.
  \end{equation*}
  And, for a probability measure $Q$ on $\mc A$, let $\wt Q$ be a measure on $\mc A$ given by
  \begin{equation}
    \label{eq:28}
    \wt Q \wt \psi :=
    Q \lp 
    \iint  K_h(A - a_1) K_h(a_2-a_1) \wt \psi(a_2)  da_2  da_1 
    \rp.
  \end{equation}
  (The properties of a measure, e.g.\ additivity, can be trivially verified given the definition as an integral.)

  For a function $f \in \KKXI$, we would like to use Jensen's inequality to show $Q f^2$ is bounded above
  by (a constant times) $\wt Q \wt{\psi}^2$ for a $\wt{\psi} \in \mc{F}_\psi$.
  Any $Qf^2$ is of the form
  \begin{align*}
    \MoveEqLeft
    E_Q \lp 
    \iint  K_h(A - a_1) K_h(a_2-a_1) \wt f(l,a_2, y)
    \varpi_0^{-1}(a_1) d\PP(l,a_2,y)  da_1 
    \rp^2
  \end{align*}
  which equals
  \begin{align*}
    \MoveEqLeft  E_Q \lp 
    \iint  \varpi^{-1}(a_1) K_h(A - a_1) K_h(a_2-a_1) \psi_{\mu,\pi}(a_2)
    \varpi_0^{-1}(a_2)   da_2  da_1
    \rp^2 \\
    & \le \varpi^{-2}_{\text{min}}
      E_Q \lp 
      \iint   K_h(A - a_1) K_h(a_2-a_1) \psi^2_{\mu,\pi}(a_2)
      da_2  da_1
      \rp
      =  \varpi^{-2}_{\text{min}}  \wt{Q} \psi^2_{\mu,\pi}
  \end{align*}
  where the inequality is by applying Jensen's inequality twice.  (Since both kernels integrate to $1$, Jensen's inequality applies.)


  It is immediately clear that $F \equiv \delta_n $ is a (constant) envelope for the class $\KKXI$. (Recall $\varpi_{\max} := \max \varpi$ and $\varpi_{\min} := \min \varpi$.)
  Then, with $\wt F$ being the envelope for $\mc{F}_\psi$, we have
  \begin{equation*}
    N \lp \epsilon  \delta_n /\varpi_{\min}^{2} ,
    \KKXI, L_2(Q) \rp
    \le N \lp \epsilon \| \wt F \|_{\wt Q,2} , \mc{F}_{\psi}, L_2(\wt Q) \rp.
  \end{equation*}
  Thus we can take a sup on the left over all measures $Q$ and this is upper bounded by a sup on the right over all measures $\wt Q$. 
  (We take sups over all measures, not just all discrete measures).  Thus by change of variables 
  \begin{align*}
    J(\infty 
    ) &:=
        \int_0^\infty  \sup_Q  \sqrt{ \log N(\epsilon \| F \|_{Q,2} , \KKXI, L_2(Q))} d \epsilon\\
      &\le  
        \int_0^\infty  \sup_Q \sqrt{ \log N(\epsilon \| \wt F \|_{Q,2} , \mc{F}_\psi, L_2(Q))} d \epsilon
        < \infty
  \end{align*}
  where the second integral does not depend on $n$ and is finite as discussed above.

  Finally, note that the entropy for $\KKXI$ equals the entropy for $e \times \KKXI$ where $e$ stands for the singleton function giving $\epsilon$;
  and this class has envelope $|e| \delta_n$ with $L_2(\PP)$-integral of order $\delta_n$.
  Thus, applying Theorem 2.14.1 of \cite{vanderVaart:1996tf}
  (see also \cite{vanderVaart:2011ju}) to the previous display (and the order $\delta_n$ envelope) we have
  \begin{equation}
    \label{eq:112}
    \PP \| \GG_n \|_{e \times \KKXI}
    \le C_1  J(\infty) \delta_n  \le C_2 \delta_n
  \end{equation}
  where $\GG_n = \sqrt{n} ( \PP_n - \PP) = \sqrt{n} \PP_n$ since $E (\epsilon | A) = 0$.
  and $C_i$ are constants (depending on $\PP$ and the classes/methods for $\pi, \mu$.).

  Finally, we will apply \eqref{eq:112}  to  see that
  \begin{equation}
    \label{eq:3}
    n\sqrt{h}\int
    \frac{\frac{1}{n}\sum_{i=1}^nK_h(A_i-a)\bar\epsilon_i}{\varpi_0(a)}\frac{\PP\lb
      K_h(A-a)(\xi(\cdot; \wh \pi, \wh \mu) -\bar{\xi})\rb}{\varpi_0(a)}\,da
  \end{equation}
  is $O_p(\delta_n \sqrt{h n } )$.


  This is because by  \eqref{eq:conditional-db} and \eqref{eq:conditional-db-part2} we have
  \begin{equation}
    \label{eq:6}
    \PP \lb K_h(A-a) (\xi( \cdot; \wh \pi , \wh \mu) - \bar{\xi}) \rb
    =
    \int K_h(a_2 - a) \psi_{\wh{\pi}, \wh{\mu}}(a_2) \varpi_0(a_2) da_2,
  \end{equation}

  \begin{align}
    \begin{split}
      \MoveEqLeft\PP \lb K_h(A-a) (\xi( \cdot; \wh \pi , \wh \mu) -
      \bar{\xi}) \rb\\
      &= \int K_h(t - a) \PP \lb \xi( \cdot; \wh \pi , \wh \mu) -
        \bar{\xi}|A=t \rb \varpi_0(t) dt\\
      &=     \int K_h(t - a) \psi_{\wh{\pi}, \wh{\mu}}(t) \varpi_0(t)\,  dt,
    \end{split}
  \end{align}
  and so by applying the Cauchy Schwartz inequality, $\| \psi_{\wh \pi, \wh \mu}
  \|_{\mc A} \le \delta_n$ and so the sup over $a \in \mc{A}$ of \eqref{eq:6} is
  also bounded above by $\delta_n$. Now by
  Assumptions~\ref{assm:IA}\ref{assm:IA-item2} and 
  \ref{assm:EA-item2}, we have
  \begin{align}
    \psi_{\wh\mu,\wh\pi}(t) & =
                              \frac{\PP\pi(t|\bs{L}))}{\PP\pi_0(t|\bs{L})} \PP \ls \lb\wh \mu_0(\bs L, t) - \wh\mu(\bs L, t) \rb
                              \lb \frac{\pi_0(t|\bs L) - \wh\pi(t | \bs L)}{ \pi(t | \bs L)} \rb \rs \\
                            & \quad + \inv{\PP\pi_0(t|\bs{L})} \PP  \lb \wh\pi(t | \bs L) - \pi_0( t | \bs L) \rb
                              \PP \lb \mu_0(\bs L, t) -\wh \mu(\bs L, t) \rb\\
                            &= O\lp \| \mu_0(\bs L, t) - \wh\mu(\bs L, t)\|_2
                              \| \pi_0(t|\bs L) - \wh\pi(t | \bs L)\|_2\rp\\
                            &=O(r_n^{\infty}s_n^{\infty}).
  \end{align}
Substituting $\delta_n$ with $r_n^{\infty}s_n^{\infty}$ yields \eqref{eq:3} is 
$o_p((n\sqrt{h})^{-1})$ because  $s^{\infty}_nr^{\infty}_n
    = o\{(n\sqrt{h})^{-1/2}\}$ from Assumption~\ref{assm:EA}~\ref{assm:EA-item4}.

\end{proof}

\section{Proof of Theorem~\ref{thm:alternative} }
\begin{proof}[Proof of Theorem~\ref{thm:alternative}]
  Recall that  we can write our test statistic as
  \begin{align*}
    T_n = n\sqrt{h}\int _{\mc{A}}\lb D_1(a)+D_2(a)+D_3\rb^2 w(a)\,da.
  \end{align*}
  Under the stated  alternative, by slightly modifying the proof of Theorem~2.1 of
  \citet{alcala1999goodness}, we have
  \begin{align*}
    n\sqrt{h}\int\lb D_2(a)\rb^2w(a)\,da = b_{0h}+\delta_n^2b_{1h}+o_p(\delta_n^2),
  \end{align*}
  where
  \begin{align*}
    b_{1h}&=\int_{\mc{A}}\ls K_{h}\ast g(a)\rs ^2w(a)\,da.
  \end{align*}

  From step 3 of the proof of Theorem~\ref{thm:null-normality} we know
  that
  
  \begin{align}
    \int\lb D_1(a)\rb^2w(a)\,da = O\lp\frac{1}{n}+\lb r_n^{\infty} s_n^{\infty}   \rb^2\rp,
  \end{align}
  regardless of the validity of null hypothesis.
  then by Cauchy-Schwart, we have
  \begin{align}
    \MoveEqLeft    n\sqrt{h}\int D_1(a)D_2(a)w(a)\,da\\&\le \sqrt{n^2h\int\lb D_1(a)\rb^2
    w(a)\,da \int\lb D_2(a)\rb^2 w(a)\,da}  \\
    &=\sqrt{n^2hO\lp\frac{1}{n}+\lb r_n^{\infty} s_n^{\infty}   \rb^2\rp\lb
      b_{0h}+\delta_n^2b_{1h}+o_p(\delta_n^2)\rb/(n\sqrt{h})}\\
    &=\sqrt{\lp O(1)+n\{r_n^{\infty} s_n^{\infty}\}^2\rp\lb
      O(1)+\delta_n^2\sqrt{h}b_{1h}+o_p(\sqrt{h}\delta_n^2)\rb}.
  \end{align}
  It's easily seen that
  \begin{align}
    O(1)\lb O(1)+\delta_n^2\sqrt{h}b_{1h}+o_p(\sqrt{h}\delta_n^2)\rb =o_p(\delta_n^2)
  \end{align}
  For
  the other term,
  \begin{align}
    \MoveEqLeft
    n\{r_n^{\infty} s_n^{\infty}\}^2\lb
    O(1)+\delta_n^2\sqrt{h}b_{1h}+o_p(\sqrt{h}\delta_n^2)\rb \\
    &= O\lp n\{r_n^{\infty} s_n^{\infty}\}^2+n\{r_n^{\infty} s_n^{\infty}\}^2\delta_n^2\sqrt{h}b_{1h}\rp.
  \end{align}
  From Assumption~\ref{assm:EA}\ref{assm:EA-item5},  We know $\{r_n^{\infty}
  s_n^{\infty}\}^2 = o(1/(n\sqrt{h}))$, so the second above $O(n\{r_n^{\infty}
  s_n^{\infty}\}^2\delta_n^2\sqrt{h}b_{1h})$ is  $o(\delta_n^2)$. Moreover,
  $O( n\{r_n^{\infty} s_n^{\infty}\}^2) = o(\sqrt{h}) = o(n^{1/10})$ by
  Assumption~\ref{assm:EA}\ref{assm:EA-item1}, thus  $O(
  n\{r_n^{\infty} s_n^{\infty}\}^2) = o(\delta_n^4)$. Then  we have the
  cross-product term
  $n\sqrt{h}\int D_1(a)D_2(a)w(a)\,da = o(\delta^2_n) $. With a
  similar argument, 
  from step 1 of the proof of Theorem~\ref{thm:null-normality}, we
  have
  \begin{align}
    \int\lb D_3(a)\rb^2w(a)\,da = O\lp\frac{1}{n}+\lb r_n^{\infty} s_n^{\infty}   \rb^2\rp
  \end{align}
  under the alternative model stated and then
  $n\sqrt{h}\int D_2(a)D_3(a)w(a)\,da = o(\delta^2_n) $. Finally we have 
  $T_n = b_{0h}+\delta_n^2b_{1h}+o_p(\delta_n^2)$ and it's straight forward
  to see
  \begin{align*}
    P(T_n>t^{\ast}_{n,1-\alpha})\to 1,
  \end{align*}
  as n$\to \infty$.  

\end{proof}


\section{Basic local polynomial estimator lemmas}
\label{sec:further-lemmas}


Here we provide  lemmas that are standard in the analysis of local polynomial estimators. 

\begin{lemma}
  \label{lem:equiv-kernel-lemma}
  Let the assumptions of Theorem~\ref{thm:null-normality} hold.  Recall
  that   $\wh{\bs{D}}_{ha} = \PP_n\{g_{ha}(A)K_{ha}(A)g_{ha}^T(A)\}$ 
  and $g^T_{ha}(A) = (1, (A-a)/h)$.  Then
  $\sup_{a\in \mc{A}}|g_{ha}(a)^T\widehat{\bs
    D}_{ha}^{-1}-(\varpi_0(a)^{-1},0)|=o(1)$ almost surely.
  And thus $g_{ha}(a)^T \wh{\bs{D}}^{-1}_{ha} g_{ha}(t)$ converges to $\varpi_0^{-1}(a)$ almost surely for any $t$.
\end{lemma}
\begin{proof}
  Using Assumptions~\ref{assm:EA}\ref{assm:EA-item3} and
  \ref{assm:EA}\ref{assm:EA-item4}, by Section 4.1 of
  \cite{dony2006uniform}, we have
  \begin{align*}
    \sup_{a\in\mc{A}}\left\lvert\PP_n\lb
    K_{ha}(A)\lp\frac{A-a}{h}\rp^j\rb-\varpi_0(a)\int (-u)^jK(u)\,du\right\rvert\to
    0,\qquad \text{a.s.},
  \end{align*}
  for $j = 0,1,2$. Thus $\widehat{D}_{ha}$ converges to
  $\text{diag}\{ \varpi_0(a), \varpi_0(a)\int u^2K(u)\,du\}$ uniformly almost
  surely over all elements.
  Here $\text{diag}(c_1,c_2)$ is
  a $2\times 2$ diagonal matrix with elements $c_1$ and $c_2$ on the
  diagonal. 
  Then by
  Assumption~\ref{assm:IA}\ref{assm:IA-item2} and
  \ref{assm:DA}\ref{assm:DA-item2}, we have that
  $\sup_{\mc{A}}\varpi_0(a)<\infty$ and
  $\inf_{\mc{A}}\varpi_0(a)\ge\pi_{\min}>0$, so
  $\sup_{a\in
    \mc{A}}|\widehat{D}_{ha}^{-1}-\varpi_0(a)^{-1}\text{diag}\{1,1/\int
  u^2K(u)\,du\}|=o(1)$ a.s.  for each element.
  Thus
  $$\sup_{a\in
    \mc{A}}|g_{ha}(a)^T\widehat{D}_{ha}^{-1}-(\varpi_0(a)^{-1},0)|=o(1)$$
  a.s.\ as $n \to \infty$ for each element.
  \begin{mylongform}
    \begin{longform}
      Because for a invertible $2\times 2$ symmetric matrix, we have
      \begin{align*}
        \begin{bmatrix}
          a&b\\
          b&c
        \end{bmatrix}^{-1}
             = \frac{1}{ac-b^2}      \begin{bmatrix}
               c&-b\\
               -b&a
             \end{bmatrix}.
      \end{align*}
      And for $f_n(x)\to f(x)$, $g_n(x)\to g(x)$, we have
      \begin{itemize}
      \item
        \begin{align*}
          \frac{1}{f_n(x)} = \frac{1}{f(x)}-\frac{1}{f(x)+\delta_x(f_n(x)-f(x))}(f_n(x)-f(x)),
        \end{align*}
        where $\delta_x\in (0,1)$.
      \item
        \begin{align*}
          f^2_n(x) = f^2(x)+2(f(x)+\delta_x(f_n(x)-f(x)))(f_n(x)-f(x)).
        \end{align*}
      \item
        \begin{align*}
          f_n(x)g_n(x) =f(x)g(x)+ f_n(x)(g_n(x)-g(x))+g(x)(f_n(x)-f(x))
        \end{align*}
      \end{itemize}
      So matrix inverse maintains uniform convergence if each element in the limit is
      bounded above from infinity and below from 0.
    \end{longform}
  \end{mylongform}
\end{proof}

\noindent
The following is a standard representation of a local polynomial estimator.

\begin{lemma}[\cite{Fan:1996tk}, page 63]
  \label{lem:LPE-basic-rep}
  Consider the local linear estimator at a point $a$ based on data
  $(A_1, Y_1), \ldots, (A_n,Y_n)$ and kernel $K$, given as
  \begin{equation*}
    \widehat{\beta}_h(a)
    =\argmin_{\beta\in \RR^2}\QQ_n\ls K_{ha}(A)
    \lb Y - g_{ha}(A)^T\beta\rb^2 \rs,
  \end{equation*}
  with $K_{ha}(t)=h^{-1}K\{(t-a)/h\}$ and $g_{ha}(t)=(1,\frac{t-a}{h})^T$
  (and $\QQ_n$ being the empirical distribution of the $(A_i,Y_i)$
  observations). The
  first coordinate
  of $\wh{\beta}_h$ (the function estimator) can be written as
  \begin{equation*}
    \wh{\beta}_{h}
    = n^{-1} 
    \sum_{i=1}^n W_{h} \lp A_i-a \rp Y_i
    = \QQ_n W_h(A-a) Y,
  \end{equation*}
  where 
  \begin{equation}
    \label{eq:7}
    W_{h}(t) := g_{h,0}^T(0) \wh{\bs{D}}_{ha}^{-1} g_{ha}(t) K_h(t),
  \end{equation}
  and where  $\wh{\bs{D}}_{ha} = \QQ_n\{g_{ha}(A)K_{ha}(A)g_{ha}^T(A)\}$.
\end{lemma}

\section{Extension for testing for a treatment effect modifier}
\label{sec:extens-test-finite}

In this section, we briefly introduce an extension of our proposed test to a scenario
where we are interested in testing whether a  covariate
is
a treatment effect modifier or not, meaning the causal treatment effect is different conditional on the covariate value than it is unconditionally.
The following provides a mapping or pseudo-outcome that has the double robustness property for assessing treatment effect modifiers.

\begin{theorem}
  \label{thm:conditional-dr-mapping}
  Let Assumption~\ref{assm:IA} %
  hold.  Then the following
  mapping 
  \begin{align}
    \label{eq:conditional-dr-mapping} 
    \phi(\bs{Z};\pi,\mu) = \frac{Y-\mu(\bs{L},A)}{\pi(A|\bs{L})}
    \int_{\mc{A}}\pi(A|\bs{l})\,dP(\bs{l}|L_p)
    +\int_{\mc{A}}\mu(\bs{l},A)\,dP(\bs{l}|L_p)
  \end{align}
  satisfies  $\EE\{ \phi(\bs{Z};\pi,\mu)|A=a,L_p=l_p\}=\EE\{Y^a|L_p=l_p\}$  if
  either $\pi=\pi_0$ or $\mu=\mu_0$.
\end{theorem}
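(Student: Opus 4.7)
The plan is to mimic the standard doubly-robust calculation for the unconditional mapping \eqref{eq:dr-mapping} but with every expectation/averaging replaced by its conditional-on-$L_p$ analogue. Writing $\widetilde{\pi}(a\mid l_p):=\int_{\mc L}\pi(a\mid\bs l)\,dP(\bs l\mid L_p=l_p)$ and $\widetilde{m}(a\mid l_p):=\int_{\mc L}\mu(\bs l,a)\,dP(\bs l\mid L_p=l_p)$, the pseudo-outcome is
\[
\phi(\bs Z;\pi,\mu)=\frac{Y-\mu(\bs L,A)}{\pi(A\mid\bs L)}\,\widetilde{\pi}(A\mid L_p)+\widetilde{m}(A\mid L_p).
\]
The goal is to compute $\EE\{\phi(\bs Z;\pi,\mu)\mid A=a,L_p=l_p\}$ and show it equals $\EE\{Y^a\mid L_p=l_p\}$ whenever either $\pi=\pi_0$ or $\mu=\mu_0$. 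First I would record the identification consequence of Assumption~\ref{assm:IA}: consistency and ignorability yield $\EE(Y\mid\bs L,A=a)=\mu_0(\bs L,a)$ and hence, by iterated expectations, $\EE(Y^a\mid L_p=l_p)=\EE\{\mu_0(\bs L,a)\mid L_p=l_p\}$. This is the target on the right hand side.

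Next I would split the analysis into two cases. Case (i), $\mu=\mu_0$: after pulling the (non-random, given $A=a,L_p=l_p$) factors $\widetilde\pi(a\mid l_p)$ and $\widetilde m(a\mid l_p)$ out, I would condition further on $\bs L$ inside the weighted residual. By consistency and ignorability,
\[
\EE\lb \tfrac{Y-\mu_0(\bs L,a)}{\pi(a\mid\bs L)}\,\Big|\,A=a,\bs L\rb=\tfrac{\mu_0(\bs L,a)-\mu_0(\bs L,a)}{\pi(a\mid\bs L)}=0,
\]
so the entire first term vanishes, leaving $\widetilde m(a\mid l_p)=\EE\{\mu_0(\bs L,a)\mid L_p=l_p\}$, as required. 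Case (ii), $\pi=\pi_0$: here the main step is a Bayes-rule computation for the conditional law of $\bs L$ given $(A,L_p)$. Because the positivity assumption makes all densities well defined and $L_p$ is a coordinate (or measurable function) of $\bs L$,
\[
dP(\bs l\mid A=a,L_p=l_p)=\frac{\pi_0(a\mid\bs l)}{\varpi_0(a\mid l_p)}\,dP(\bs l\mid L_p=l_p),
\]
where $\varpi_0(a\mid l_p)=\int\pi_0(a\mid\bs l)\,dP(\bs l\mid l_p)=\widetilde\pi_0(a\mid l_p)$. Substituting this into $\EE\{(Y-\mu(\bs L,a))/\pi_0(a\mid\bs L)\mid A=a,L_p=l_p\}$, the factors $\pi_0$ cancel and what remains is $\{\EE[\mu_0(\bs L,a)\mid L_p=l_p]-\widetilde m(a\mid l_p)\}/\widetilde\pi_0(a\mid l_p)$. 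Multiplying by $\widetilde\pi_0(a\mid l_p)$ cancels the denominator, and adding the $\widetilde m(a\mid l_p)$ piece of $\phi$ gives exactly $\EE\{\mu_0(\bs L,a)\mid L_p=l_p\}=\EE(Y^a\mid L_p=l_p)$.

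The main obstacle, and the only nontrivial step, is the Bayes-rule identity above: one must be careful because $L_p$ is a (sub)function of $\bs L$, so the joint density factors as $p(\bs l,a,l_p)=p(\bs l\mid l_p)p(l_p)\pi_0(a\mid\bs l)\mathbf 1\{l_p(\bs l)=l_p\}$, and the marginalization $p(a,l_p)=p(l_p)\widetilde\pi_0(a\mid l_p)$ has to use positivity (Assumption~\ref{assm:IA}\ref{assm:IA-item2}) to justify dividing through. Once this measure-theoretic identity is in hand, both cases are a couple of lines of algebra. I would finish by noting the two cases are not mutually exclusive, so the conclusion actually holds whenever at least one of the nuisance functions is correctly specified, which is the claimed double-robustness property.
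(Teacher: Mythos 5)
Your proposal is correct and follows essentially the same route as the paper: both rest on iterated expectations to replace $Y$ by $\mu_0(\bs L,a)$ and on the Bayes-rule identity $dP(\bs l\mid A=a,L_p=l_p)=\pi_0(a\mid\bs l)\,\varpi_0(a,l_p)^{-1}\,dP(\bs l\mid L_p=l_p)$, whose justification you correctly flag as the one nontrivial step. The only difference is organizational — the paper carries the computation through to a single ``product of errors'' identity, $\EE\{\phi\mid A=a,L_p=l_p\}=\theta_0(a,l_p)+\int(\mu_0-\bar\mu)\bigl(\tfrac{\pi_0/\varpi_0}{\bar\pi/\bar\varpi}-1\bigr)\,dP(\bs l\mid l_p)$, from which both cases of double robustness are read off at once, whereas you verify the two cases separately.
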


In real data analysis, we need to estimate the unknown aspects of 
\eqref{eq:conditional-dr-mapping},
so we arrive at
\begin{equation}
  \label{eq:estimated-conditional-dr-mapping} 
  \wh \phi(\bs{Z};\wh\pi,\wh\mu) = \frac{Y-\wh\mu(\bs{L},A)}{\wh\pi(A|\bs{L})}\int_{\mc{A}}\wh\pi(A|\bs{l})\,d\PP_n(\bs{l}_{}|L_p)+\int_{\mc{A}}\wh\mu(\bs{l},A)\,d\PP_n(\bs{l}_{}|L_p)  
\end{equation}
Once we have esimated the pseudo-outcome, we can use develop methodology analogous to the methodology we developed for testing $\theta_0(\cdot)$ above, based on $\{ (A_i, L_{i,p}, \wh{\phi}(\bs{Z}_i, \wh \pi, \wh \mu)\}_{i=1}^n$.

In this manuscript we focus on the case of discrete variables that are treatment effect modifiers, in which case it is straightforward
to extend the methodology from Subsection~\ref{sec:proposed-method}.
\citet{dette2001nonparametric} proposed three test statistics based on
Nadaraya-Watson estimators of the regression functions to test equality of
regression functions. Specifically, one of the three test statistics they
proposed has a similar form as our statistic $T_n$ above, and can
be written as
\begin{align}
  \label{eq:dette-test-stat3}
  \sum^{k}_{m=2}\sum_{j=1}^{m-1}\int_{\mc{A}}\lb\hat g_m(a)-\hat g_j(a)\rb^2
  w_{m,j}(a)\,da,
\end{align}
where $w_{m,j}$ are (positive, user-chosen) weight functions satisfying $w_{m,j}=w_{j,m}$,
there are $k$ groups (i.e., values of the discrete variable),
and $\hat g_j$, $j=1,\ldots,k$, is the  Nadaraya-Watson estimator for the $j$-th group
regression function.
Combining
the 
conditional doubly robust mapping with \eqref{eq:dette-test-stat3}, we 
consider the following test statistic for  testing whether a discrete
covariate with a finite number, $k$, of possible values  is an effect modifier or not:
\begin{align}
  \label{eq:condition-test-stat}
  T^p_n = \sum_{m=2}^k\sum_{j=1}^{m-1}\int_{\mc{A}}\lb\wh\theta_{h,m}(a)-\wh\theta_{h,j}(a)\rb^2w_{ij}(a)\,da,
\end{align}
where $\wh\theta_{h,j}(a)$ is the local linear estimator applied to the subset
of the tuples $\{(\wh\phi(\bs{Z}_i;\wh\pi,\wh\mu), A_i)\}$ with the $p$-th covariate
$L_{i,p}$ taking the $j$-th category.
We summarize this extended testing procedure as follows.
\begin{enumerate}
\item Estimate the nuisance functions $(\pi_0, \mu_0)$ with $(\wh\pi,\wh\mu)$.
\item Calculate the pseudo-outcomes $\wh\phi(\bs{Z};\wh\pi,\wh\mu)$  according
  to \eqref{eq:estimated-conditional-dr-mapping} and construct
  the local linear estimator $\wh\theta_{h,j}(a)$ for each conditional treatment
  effect curve with the corresponding subset of
  $\{(\wh\phi(\bs{Z}_i;\wh\pi,\wh\mu), A_i)\}$.
\item Calculate the test statistic $T_n^p$ according
  to   \eqref{eq:condition-test-stat}.
\item To estimate the critical value of the asymptotic distribution under the null,
  \begin{enumerate}[label=(\arabic*)]
  \item Use the pooled data $\{(\wh\phi(\bs{Z}_i;\wh\pi,\wh\mu),
    A_i)\}_{i=1}^n$ to construct a single local linear estimator $\wh\theta_h(a)$
  \item  Calculate $ \wh\theta_h(A_i)$ and $\wh\epsilon_i =
    \wh\phi(\bs{Z}_i;\wh\pi,\wh\mu)- \wh\theta_h(A_i) $. Generate the
    bootstrap sample of residuals $\wh\epsilon^{\ast}_i\sim\hat{F}_i$ and response $
    \wh\phi^{\ast}_i= \wh\theta_h(A_i)+\wh\epsilon^{\ast}_i$, where choices of
    $\hat{F}_i$ are the same as discussed in Section~\ref{sec:proposed-method}.
  \item Calculate $T_n^{p\ast}$ using the bootstrap sample $\{(A_i,L_{i,p},\wh\phi_i^{\ast})\}_{i=1}^n$.
  \item Repeat (2) and (3) for $B$ times, where $B$ is the desired number of
    bootstrap samples; We estimate
    critical value of rejection  by $\hat{t}_{n,1-\alpha}^{p\ast}$, the $1-\alpha$ sample
    quantile of $T_n^{p\ast}$ from the bootstrap samples.             
  \end{enumerate}
\item Reject the null hypothesis if $T_n^p>\hat{t}_{n,1-\alpha}^{p\ast}$.
\end{enumerate}

We use simulation to show the validity of this extended procedure in
Section~\ref{sec:simulation}.

\begin{proof}[Proof of Theorem~\ref{thm:conditional-dr-mapping}]
  We use $\bs{l}_{-p}$ to denote of vector of covariates after deleting the
  $p$-th component.

  Let $\bar{m}(a,l_p):=\EE\{\bar{\mu}(\bs{L},a)|L_p=l_p\}$,
  $\bar\varpi(a,l_p):=\EE\{\bar{\pi}(a| \bs L)|L_p=l_p\}$,
  and $\theta_0(a, l_p) := \EE(Y^a | L_p=l_p)$. Then
  \begin{align*}
    \MoveEqLeft \EE\lb\phi(\bs{Z};\bar{\pi},\bar{\mu})|A=a,L_p=l_p\rb\\&=\EE\lb\frac{Y-\bar{\mu}(\bs{L},A)}{\bar{\pi}(A|\bs{L})/\bar\varpi(A,L_p)}+\bar{m}(A,L_p)\Bigr|A=a,L_p=l_p\rb\\
    &=\int\frac{\mu_0(\bs{l},a)-\bar{\mu}(\bs{l},a)}{\bar{\pi}(a|\bs{l})/\bar\varpi(a,l_p)}\,dP(\bs{l}_{}|A=a,L_p=l_p)+\bar{m}(a,l_p)\\
    &= \int \lb\mu_0(\bs{l},a)-\bar{\mu}(\bs{l},a)\rb
      \frac{\pi_0)(a|\bs{l})/\varpi_0(a,l_p)}{\bar{\pi}(a|\bs{l})/\bar\varpi(a,l_p)}\,dP(\bs{l}_{}|L_p=l_p)+\bar{m}(a,l_p)\\
    &=\theta_0(a,l_p)+ \int \lb\mu_0(\bs{l},a)-\bar{\mu}(\bs{l},a)\rb
      \lb\frac{\pi_0(a|\bs{l})/\varpi_0(a,l_p)}{\bar{\pi}(a|\bs{l})/\bar\varpi(a,l_p)}-1\rb\,dP(\bs{l}_{}|L_p=l_p),
  \end{align*}
  where the last line shows the double robustness of the proposed mapping. Some
  details of the calculation are given as follows.
  \begin{itemize}
  \item The second equality above is calculated by iterated expectations:
    \begin{align*}
      \MoveEqLeft    
      \EE\lb\frac{Y-\bar{\mu}(\bs{L},A)}{\bar{\pi}(A|\bs{L})/\bar\varpi(A,L_p)}\Bigr|A=a,L_p=l_p\rb\\
      &=\EE\ls
        \EE\lb\frac{Y-\bar{\mu}(\bs{L},A)}{\bar{\pi}(A|\bs{L})/\bar\varpi(A,L_p)}\Bigr|A,\bs{L}\rb
        \Bigr|A=a,L_p=l_p\rs\\
      &=\EE\lb\frac{\mu_0(\bs{L},A)-\bar{\mu}(\bs{L},A)}{\bar{\pi}(A|\bs{L})/\bar\varpi(A,L_p)}\Bigr|A=a,L_p=l_p\rb\\
      &=\int\frac{\mu(\bs{l},a)-\bar{\mu}(\bs{l},a)}{\bar{\pi}(a|\bs{l})/\bar\varpi(a,l_p)}\,dP(\bs{l}|a,l_p).
    \end{align*}
  \item The third equality comes from the calculation that
    \begin{align*}
      dP(\bs{l}_{}|a,l_p)& =\frac{p(a,\bs{l})}{p(a,l_p)}\, d\nu(\bs{l})\\
                         & = \frac{p(a|\bs{l})p(\bs{l})}{p(a|l_p)p(l_p)} \,d\nu(\bs{l}) \\
                         &=\frac{\pi_0(a|\bs{l})}{\varpi_0(a,l_p)}p(\bs{l}_{-p}|l_p)\,d\nu(\bs{l})\\
                         &=\frac{\pi_0(a|\bs{l})}{\varpi_0(a,l_p)}\,dP(\bs{l}|l_p).
    \end{align*}
  \item For the last equality, note that
    \begin{align*}
      \bar{m}(a,l_p)&=      \int \bar{\mu}(\bs{l},a)\,dP(\bs{l}_{}|l_p) \\
                    &=  \int \lb\bar{\mu}(\bs{l},a)- \mu_0(\bs{l},a)\rb\,dP(\bs{l}_{}|l_p) +
                      \int \mu_0(\bs{l},a)\,dP(\bs{l}_{}|l_p) \\
                    &=\int \lb\bar{\mu}(\bs{l},a)- \mu_0(\bs{l},a)\rb\,dP(\bs{l}_{}|l_p) +\theta_0(a,l_p).
    \end{align*}
  \end{itemize}
\end{proof}

\section{Plots from simulations and from data analysis}

\subsection{Description of simulation for testing for a treatment effect modifier}

We consider the following data generating process with a continuous outcome, which is similar to Model 2 above,
to
perform the simulations for testing for a treatment effect modifier. First, we let
\begin{align*}
  \wt{\bs{L}}=(L_1, L_2, L_3, \wt L_4)^T\sim N(0,\bs{I}_4).
\end{align*}
And 
let $L_4 = \mathbbm{1}\{ \wt L_4>1\}$
and suppose we observe
$\bs{L}=(L_1,L_2,L_3, L_4)^T$. Then we simulate the
treatment level from  Beta distributions,
\begin{align*}
  (A/5)|\bs{L}&\sim \text{Beta}(\lambda(\bs{L}),1-\lambda(\bs{L})),\\
  \text{logit } \lambda(\bs{L})&=0.1L_1+0.1L_2-0.1L_3+0.2L_4,
\end{align*}
and we simulate the continuous response from a normal distribution, 
\begin{align*}
  Y|\bs{L},A&\sim N(\mu(\bs{L},A),0.5^2),\\
  \mu(\bs{L},A)&=0.2L_1+0.2L_2+0.3L_3-0.1\delta L_4 -0.1AL_1+0.1\delta AL_4\\
  &\quad+\exp\lb\frac{(A-2.5)^2}{(1/2)^2}\rb.
\end{align*}
So we see that the parameter $\delta$ controls the distance between the
conditional treatments when $L_4=1$ and when $L_4=0$, and when $\delta=0$, there
is no difference between the two conditional treatment effect curves, i.e.,
$L_4$ is not an effect modifier. In the simulation, we let $\delta$ take values
in $\{0,0.1,0.2,0.3,0.4,0.5\}$. And similar to Section~\ref{sec:simulation},
we
test the performance
of this treatment modifier test
under 4 scenarios: (1) $\pi$ is correctly
specified with a parametric model, $\mu$ is incorrectly specified with a parametric model; (2) $\pi$ is incorrectly
specified with a parametric model, $\mu$ is correctly  specified with a parametric model; (3)
both $\pi$ and $\mu$ are correctly specified with a parametric model; (4) both
$\pi$ and $\mu$ are estimated with Super Learners \citep{van2007super}.

We display the simulation results in
Figures~\ref{fig:effect-modifier-parametric} and
\ref{fig:effect-modifier-nonparametric}. In
Figures~\ref{fig:effect-modifier-parametric}, we see that when at least one of
the nuisance functions are correctly estimated, we have type I error probability
converges to the nominal significance level $\alpha=0.05$ and rejection
probability converging to 1 as we increase the sample size. So this suggests the
extended test maintains double robustness as we expected. In
Figures~\ref{fig:effect-modifier-nonparametric}, with the nonparamtric Super
Learner, we also see type I error probability
converges to the nominal significance level $\alpha=0.05$ and the rejection
probability converges to 1 as we increase the sample size.

\subsection{Plots}

\label{sec:plots-from-simul}
\begin{figure}[H]
  \centering
  \includegraphics[width=\textwidth]{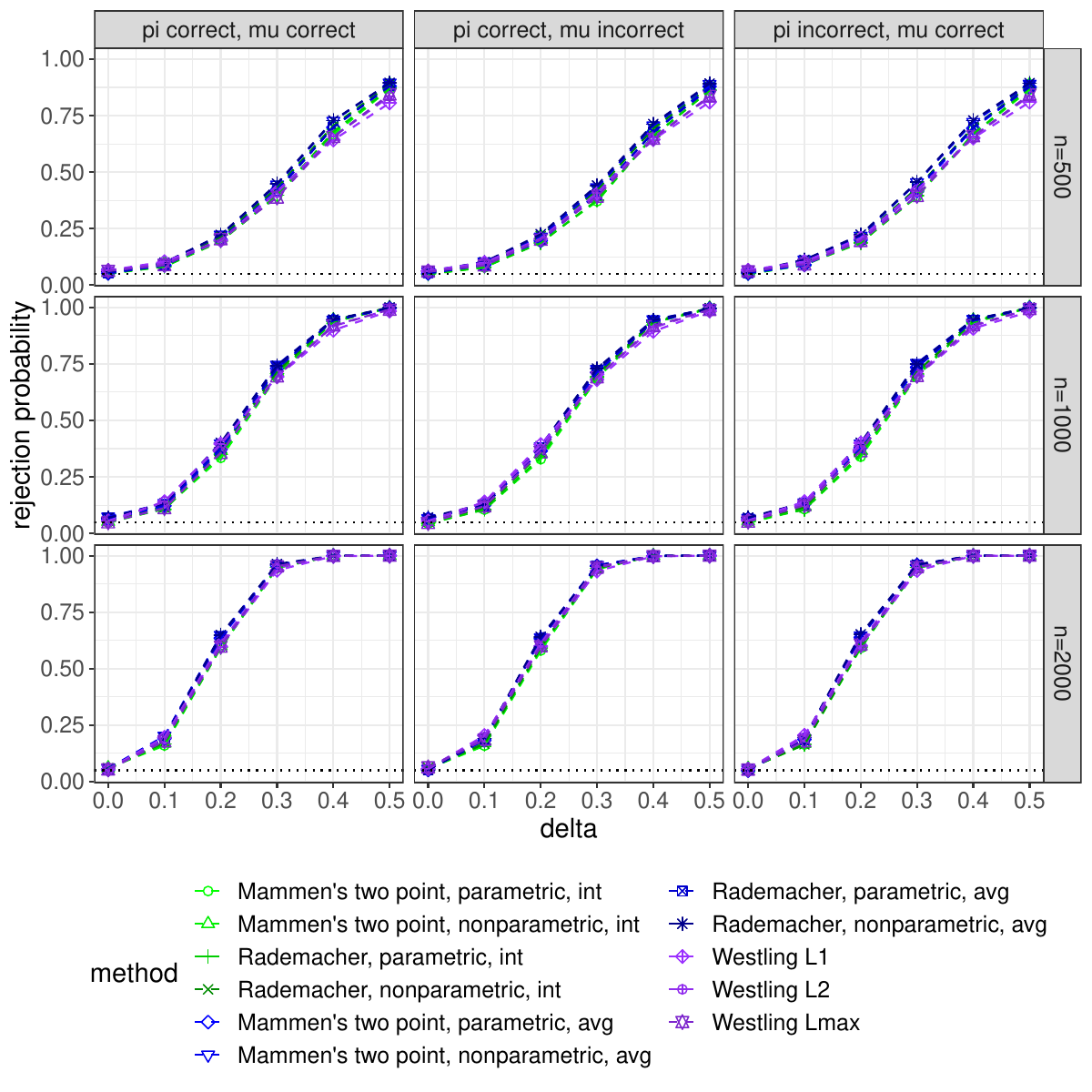}  
  \caption{Simulation result for Model 1 with $\pi$ and $\mu$ estimated from
    parametric models.}
  \label{fig:model1-parametric}
\end{figure}

\begin{figure}[H]
  \centering
  \includegraphics[width=\textwidth]{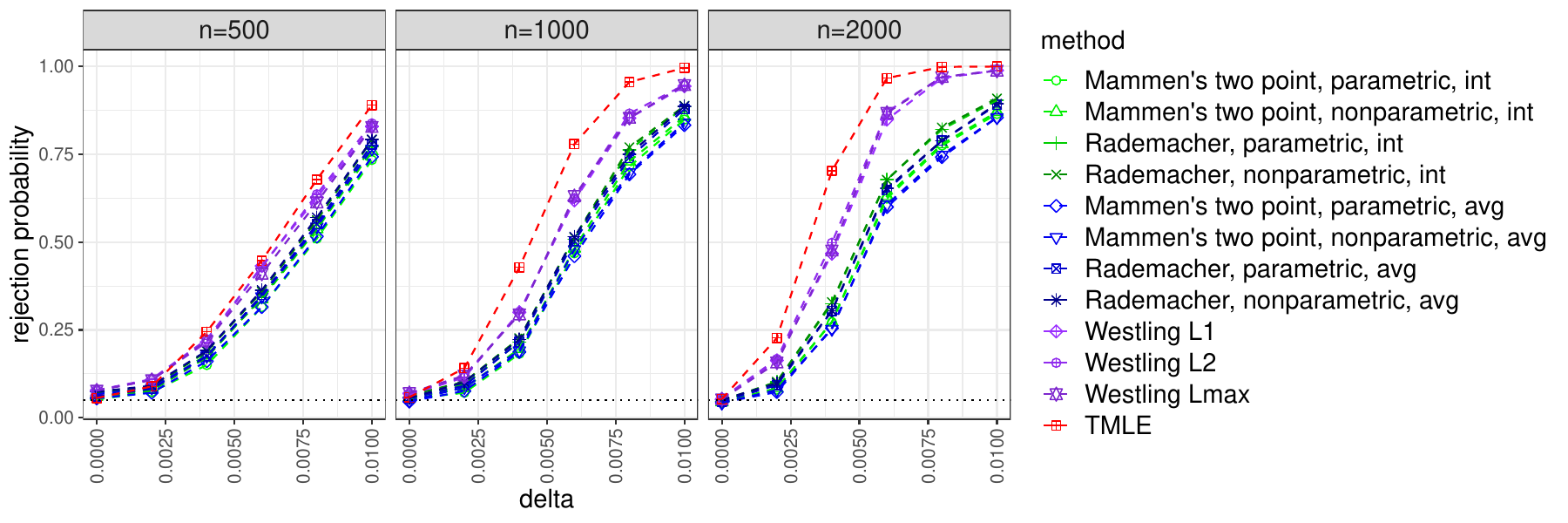}  
  \caption{Simulation result for Model 1 with $\pi$ and $\mu$ estimated from
    nonparametric models.}
  \label{fig:model1-nonparametric}
\end{figure}

\begin{figure}[H]
  \centering
  \includegraphics[width=\textwidth]{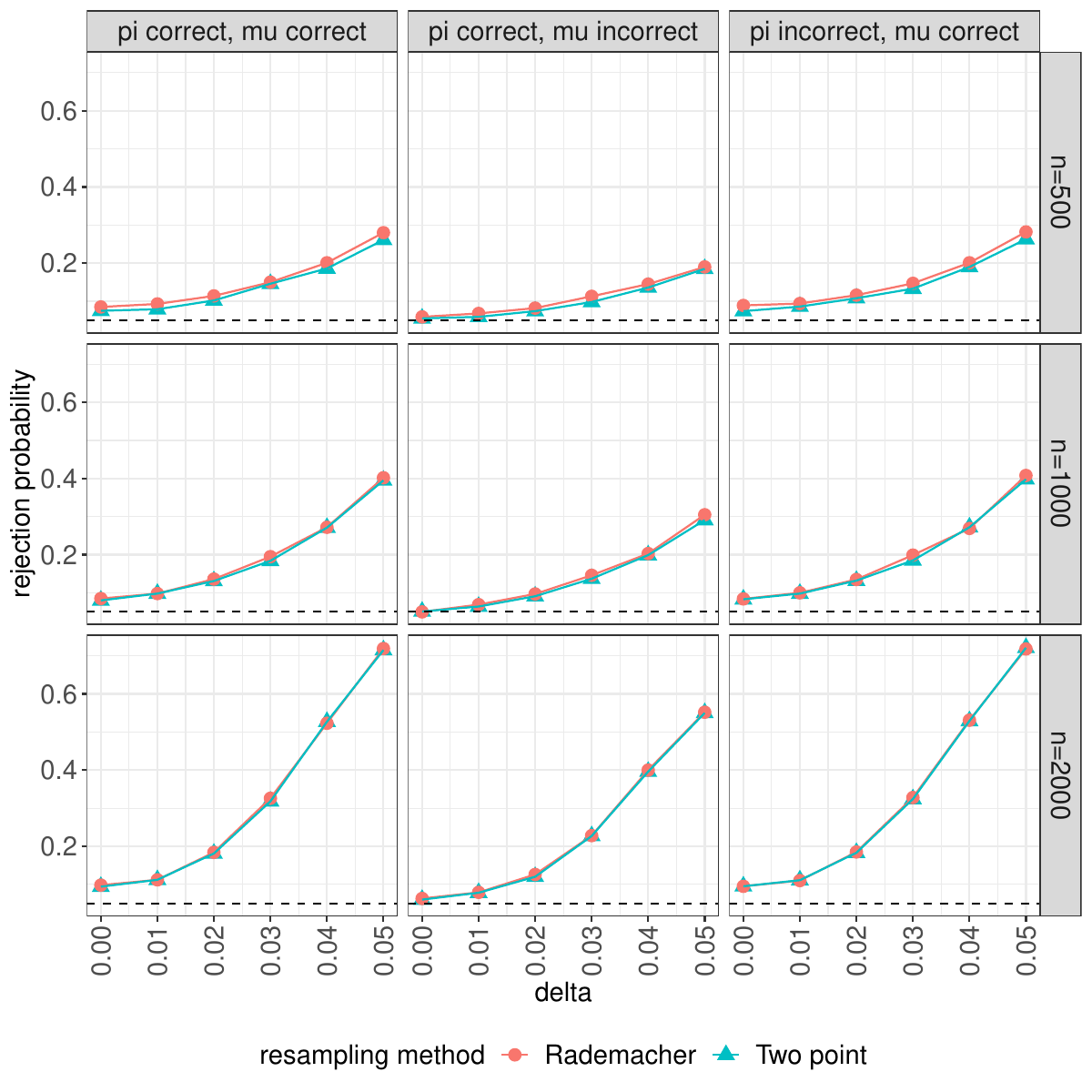}
  \caption{Simulation result for testing effect modifier with $\pi$ and $\mu$
    estimated from parametric models}
  \label{fig:effect-modifier-parametric}
\end{figure}

\begin{figure}[H]
  \centering
  \includegraphics[width=\textwidth]{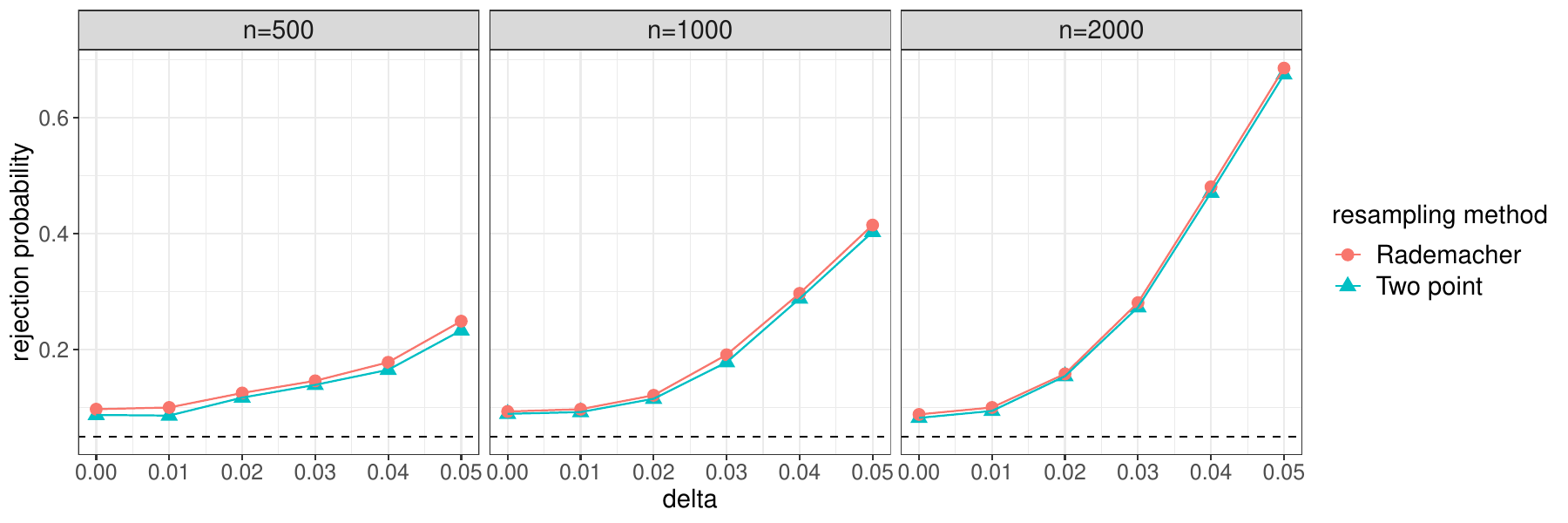}
  \caption{Simulation result for testing effect modifier with $\pi$ and $\mu$
    estimated from nonparametric models.}
  \label{fig:effect-modifier-nonparametric}
\end{figure}
\begin{figure}[H]
  \centering
  \includegraphics[width=\textwidth]{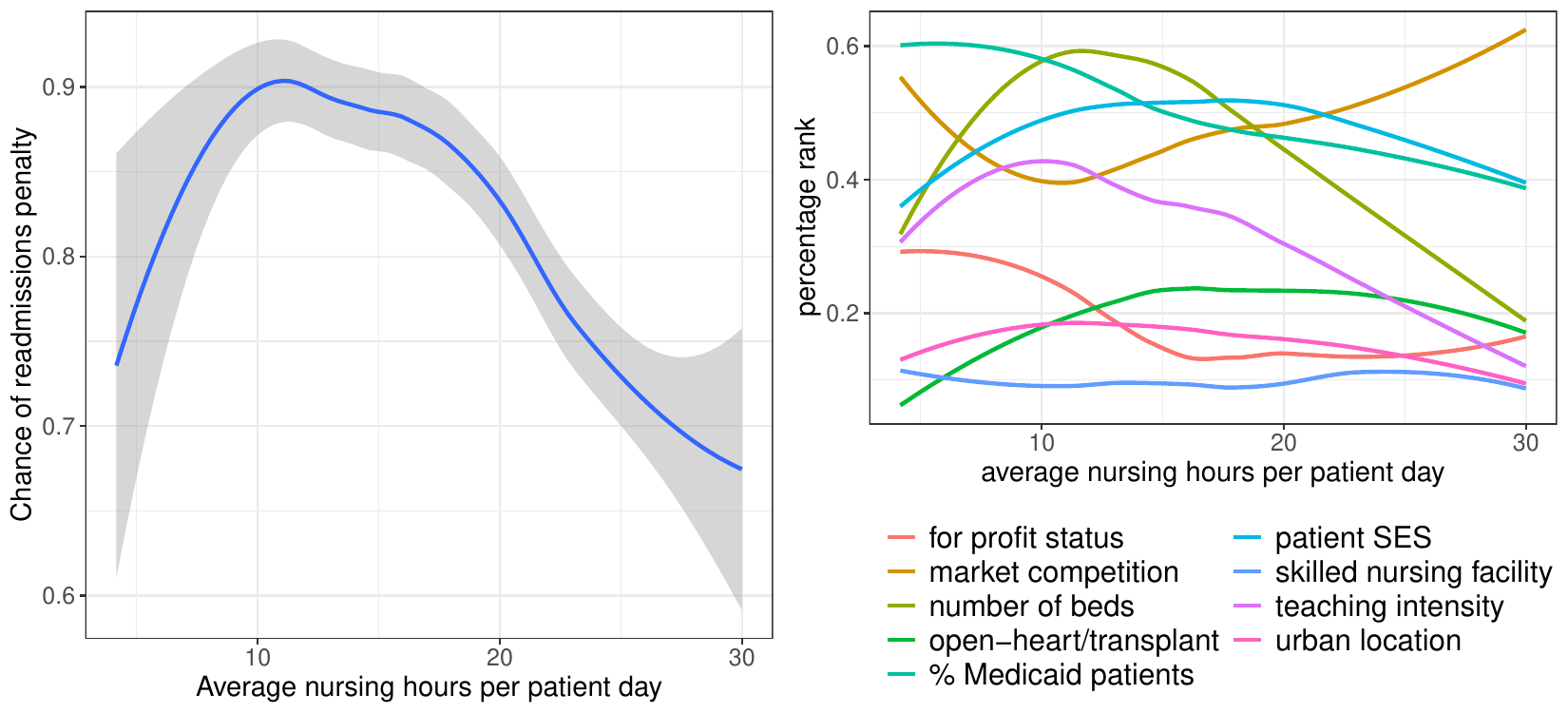}
  \caption{Left: Unadjusted loess fit of outcome against average nursing
    hours. Right: Average covariate values as a function of exposure,after
    transforming to percentiles to display on common scale.}
  \label{fig:observed-data}
\end{figure}
\begin{figure}[H]
  \centering
  \includegraphics[width=\textwidth]{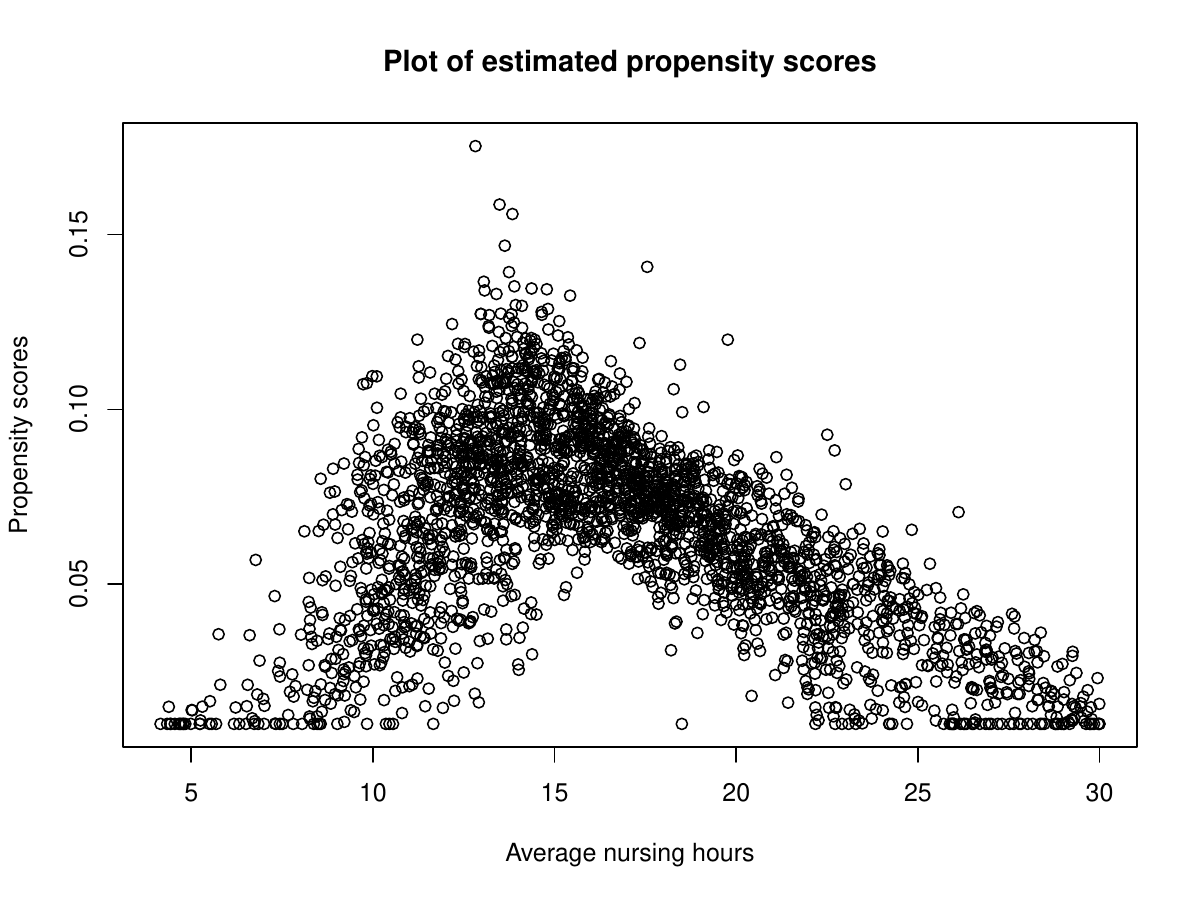}
  \caption{Plot of estimated propensity scores $\wh \pi(a|\bs{l})$ (truncated 
    below by 0.01) against average nursing hours   \label{fig:ps-plot}}
\end{figure}

\section{Cross-fitted Test}
\label{sec:cross-fitting}
In this section, we develop and study a cross-fitted version of the test we presented in the main paper,
and use simulation  studies to show how the dimensionality of the confounder
vector $\bs{L}$ affects the performance of the two tests.

\subsection{Test procedure with cross-fitting}
\label{subsec:cross-fitting-test}
Suppose we randomly partition the
index set $\{1,\ldots,n\}$ into $V$ disjoint sets
$\mc{V}_{n,1},\ldots,\mc{V}_{n,V}$ with cardinalities $N_1,\ldots N_V$, where
$V\in \{2,\ldots,\lfloor n\rfloor\}$ is the number of partitions and without
loss of generality, here we assume all the partitions have equal size, i.e.,
$N_1=N_2=\cdots=N_V$. For each $v\in \{1,\ldots,V\}$, we define $\mc{T}_{n,v} =
\{\bs{Z}_i:i\notin \mc{V}_{n,v}\}$ as the training set for fold $v$.

Recall our original test statistics $T_n$ can be written as
\begin{align}
  \label{eq:test-stat}
  T_n=n\sqrt{h}\int_{\mc{A}}\lp\widehat{\theta}_h(a)-\PP_n\widehat{\xi}\rp^2w(a)\,da,
\end{align}
where
\begin{align}
  \label{eq:pseudo-outcome}  \widehat{\xi}(\bs{Z};\widehat{\pi},\widehat{\mu})=\frac{Y-\widehat{\mu}(\bs{L},A)}{\widehat{\pi}(A|\bs{L})}\int_{\mc{L}}\widehat{\pi}(A|\bs{l})\,d\PP_n(\bs{l})+\int_{\mc{L}}\widehat{\mu}(\bs{l},A)\,d\PP_n(\bs{l}).
\end{align}
 and $\widehat{\theta}_h(a)$ is the local linear estimator applied to
$\{(\widehat{\xi}(\bs{Z}_i;\widehat{\pi},\widehat{\mu}),A_i)\}_{i=1}^n$. Given
one splitting of the data, i.e., $\mc{V}_{n,v}$ and $\mc{T}_{n,v}$, we can
calculate the pseudo-comes as
\begin{align}
  \label{eq:cv-pseudo-outcome}
  \begin{split}
  \MoveEqLeft  \widehat{\xi}^{v}(\bs{Z}_i;\widehat{\pi}_{n,v},\widehat{\mu}_{n,v})\\&=\frac{Y_i-\widehat{\mu}_{n,v}(\bs{L}_i,A_i)}{\widehat{\pi}(A|\bs{L})}\int_{\mc{L}}\widehat{\pi}_{n,v}(A|\bs{l})\,d\PP_{\mc{V}_{n,v}}(\bs{l})+\int_{\mc{L}}\widehat{\mu}_{n,v}(\bs{l},A)\,d\PP_{\mc{V}_{n,v}}(\bs{l}),
  \end{split}
\end{align}
where $\widehat{\pi}_{n,v}$ and $\widehat{\mu}_{n,v}$ are estimated only using
the observations in $\mc{T}_{n,v}$, and $\PP_{\mc{V}_{n,v}}$ is the empirical
measure defined on $\{\mc{Z}_i:i\in \mc{V}_{n,i}\}$. Similarly, we can calculate
the test statistic restricted to  this splitting as
\begin{align}
  \label{eq:cv-test-statistic}
T_n^{v} =N_v\sqrt{h}\int_{\mc{A}}\lp\widehat{\theta}^v_h(a)-\PP_{\mc{V}_{n,v}}\widehat{\xi}^v\rp^2w(a)\,da.
\end{align}
Here $\widehat{\theta}^v_h(a)$ is the local linear estimator applied
to
$\{(\widehat{\xi}^{v}(\bs{Z}_i;\widehat{\pi}_{n,v},\widehat{\mu}_{n,v}),A_i)\}_{i\in\mc{V}_{n,v}}$,
and with slight abuse of notation, we let $\PP_{\mc{V}_{n,v}}\widehat{\xi}^v=1/N_{v}\sum_{i\in\mc{V}_{n,v}}\widehat{\xi}^{v}(\bs{Z}_i;\widehat{\pi}_{n,v},\widehat{\mu}_{n,v})$.
We can do this for each splitting and aggregate the results from all the splittings to get the
cross-fitted test statistic as
\begin{align}
T_n^{\circ} = \frac{1}{V}\sum_{v=1}^VT_n^{v}.
\end{align}

We also use a modified wild bootstrap procedure to estimate the
distribution of the cross-fitted test statistic. We perform
wild bootstrap for each splitting separately. 
The detailed test procedure with cross-fitting is summarized as
follows.
\begin{enumerate}
\item Randomly partition the index set $\{1,\ldots, n\}$ into $V$
  disjoint set $\mc{V}_{n,1},\ldots,\mc{V}_{n,V}$ with equal
  cardinalities $N_V$ (assume $n$ is a multiplier of $V$).
\item For each split $\mc{T}_{n,v}, \mc{V}_{n,v}$,
  \begin{enumerate}
  \item Estimate  the nuisance functions $(\pi_0, \mu_0)$ with
    $(\widehat{\pi}_{n,v}, \widehat{\mu}_{n,v})$ using observations
    from $\mc{T}_{n,v}$.
    \item Calculate the pseudo outcomes
      $\widehat{\xi}^{v}(\bs{Z};\widehat{\pi}_{n,v},\widehat{\mu}_{n,v})$
      by \eqref{eq:cv-pseudo-outcome} and construct the local linear
      estimator $\widehat{\theta}^v_h$ using
      $\{(\widehat{\xi}^{v}(\bs{Z}_i;\widehat{\pi}_{n,v},\widehat{\mu}_{n,v}),A_i)\}_{i\in\mc{V}_{n,v}}$.
     \item Calculate $T_n^{v}$ using \eqref{eq:cv-test-statistic}.
    \item To generate wild bootstrap samples for this splitting,
      \begin{enumerate}
      \item For each $i \in \mc{V}_{n,v}$, calculate
        $\hat{\varepsilon}_{i}^v =
        \widehat{\xi}^{v}(\bs{Z}_i;\widehat{\pi}_{n,v},\widehat{\mu}_{n,v})-\widehat{\theta}_{h}^v(A_i)$. Generate
        $\varepsilon_i^{\ast,v,b}\sim \hat{F}_i$ and use
        $(\xi_{i}^{\ast,v,b}=\PP_{\mc{V}_{n,v}}\widehat{\xi}^v+\varepsilon_i^{\ast,v,b},A_i)$
        as bootstrap observations.
        \item Calculate
          $T_n^{v,\ast,b}$ as
          \begin{align}
            T_n^{v,\ast,b}=N_v\sqrt{h}\int_{\mc{A}}\lp\widehat{\theta}_h^{v,b}(a)-\frac{1}{N_v}\sum_{i\in\mc{V}_{n,v}}\xi_{i}^{\ast,v}\rp^2 w(a)\,da,
          \end{align}
          where $\widehat{\theta}_h^{v,b}(a)$ is the local linear
          estimator applied to
          $\{(\xi_{i}^{\ast,v,b},A_i)\}_{i\in \mc{V}_{n,v}}$.
          
        \item Repeat i. and ii. for $B$ times where $B$ is the desired number
          of boostrap samples.        
        \end{enumerate}
      \end{enumerate}
      \item Calculate $T_n^{o}=\sum_{v=1}^VT_n^v/V$,
        $T_n^{\ast,b}=\sum_{v=1}^VT_n^{v,\ast,b}/V$ for $b=1,\ldots
        B$. Let $\hat{t}_{n,1-\alpha}^{\ast, o}$ denote the $1-\alpha$
        quantile of $\{T_n^{\ast,b}\}_{b=1}^B$. Reject the null
        hypothesis if $T_n^{o}>\hat{t}_{n,1-\alpha}^{\ast, o}$.
\end{enumerate}

\subsection{Simulation studies for the effect of data dimensionality}
\label{subsec:cross-fitting-simulation}
We first compare the performance of the test with cross-fitting with
our original non-cross-fitted test procedure. We use the same data generating models,
Model 1 and Model 2, to perform the simulation. Both $\pi$ and $\mu$
are estimated with Super Learners. The results are shown
in Figures~\ref{fig:cv-model1} and \ref{fig:cv-model2}. We can see
under both two data models, both two versions of tests have type I
error probability converging to the desired level as we increase the
sample size. However, the cross-fitted test shows uniformly lower
power under alternatives. 

\begin{figure}[htbp]
  \centering
  \includegraphics[width=\textwidth]{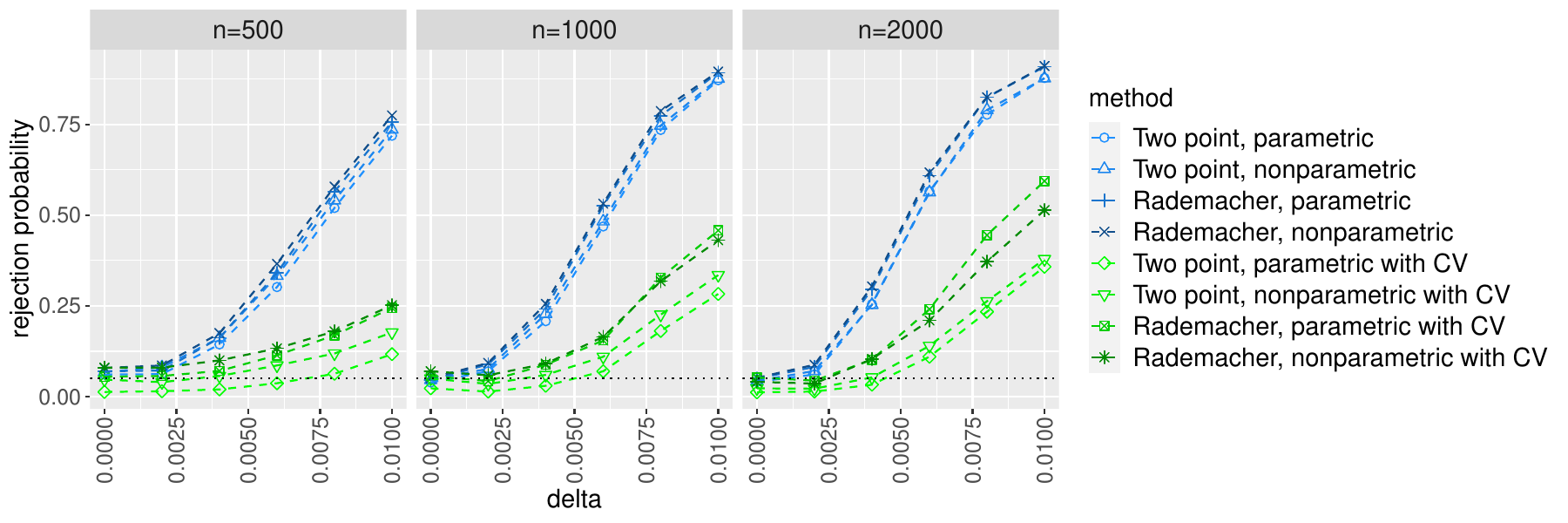}
  \caption{Simulation result for comparing non-cross-fitted test and
    cross-fitted test under model 1 \label{fig:cv-model1}}
\end{figure}

\begin{figure}[htbp]
  \centering
  \includegraphics[width=\textwidth]{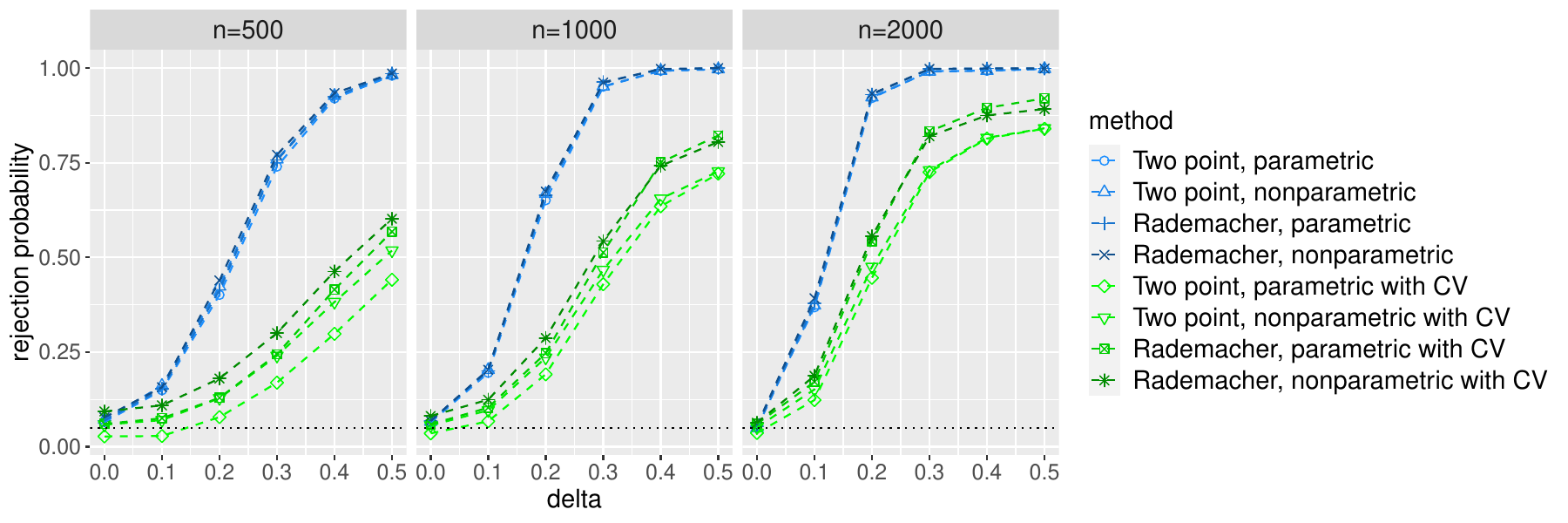}
  \caption{Simulation result for comparing non-cross-fitted test and
    cross-fitted test under model 2 \label{fig:cv-model2}}
\end{figure}

\begin{figure}[tbp]
  \centering
  \includegraphics[width=\textwidth]{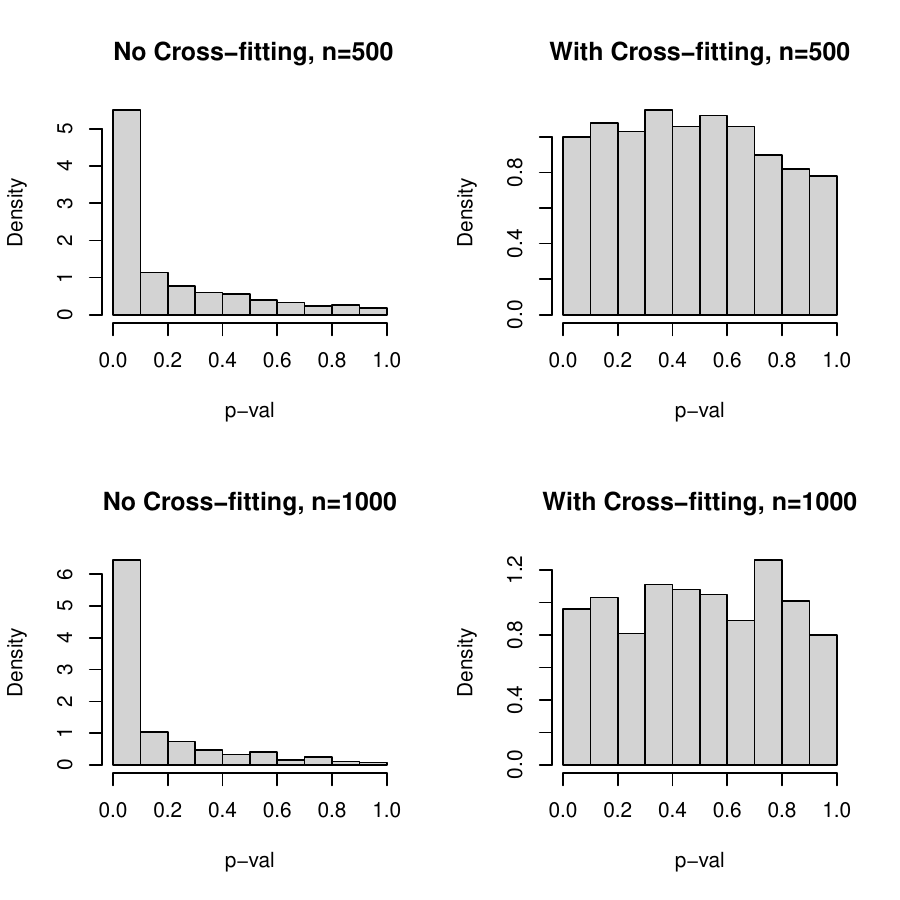}
  \label{fig:cv-hd}
  \caption{Simulation result for comparing p-values of non-cross-fitted test and
    cross-fitted test under Model~\eqref{eq:model-hd}. Type I error
    probabilities for $\alpha=0.05$ are: No Cross-fitting, n=500 (0.448);
  Cross-fitting, n=500 (0.056); No Cross-fitting, n=1000 (0.535);
  Cross-fitting, n=1000 (0.047).}
\end{figure}

Then we compare the two versions of tests with high dimensional
data. The data generating model comes from  \citet{Colangelo:2020tt}
with some slight modifications, where we let 
\begin{align}
  \label{eq:model-hd}
  \begin{split}
  \bs{L}&=(L_1,\ldots, L_{100})'\sim N(0,\Sigma),\\
    A&=\Phi(3\bs{L}'\beta)+0.75\nu,\\Y&=\gamma(1.2A+A^2+AL_1)+1.2\bs{L}'\tilde{\beta}+\epsilon
  \end{split}
\end{align}
where $\beta_j=1/j^2, \text{diag}(\Sigma)=1$, the $(i,j)$-entry
$\Sigma_{ij}=0.5$ for $|i-j|=1$ and 0  for $|i-j|>1$, for
$i,j=1,\ldots, 100$. $\Phi$ is the CDF of $N(0,1)$ and
$\tilde{\beta}=(\beta_{100},\beta_{99},\ldots, \beta_1)$. We set
$\gamma=0$ to show the effect of high dimensional data on type I error
probability of the test and when $\gamma>0$, the treatment effect is nonconstant. We use sample sizes 500 and 1000, and use
Random Forest to estimate the nuisance functions $\pi$ and $\mu$.  The
results are displayed in Figure~\ref{fig:cv-hd}, where we plot the histograms for
p-values from each replication and calculate the type I error
probabilities for $\alpha=0.05$ in the caption. We can see, the
distribution of p-values of non-cross-fitted test is very skewed and
far away from uniform distribution. Moreover,  increasing the sample size to
1000 does not show a great improvement.  On the other side,  the p-values from the
cross-fitted test are quite uniformly distributed. The calculations
also shows the non-cross-fitted test failed to maintain the desired
level of type I error probability and the cross-fitted test achieved
the desired level.

In summary, the cross-fitted test has lower power, especially with low
dimensional data. But the non-cross-fitted test tends to fail to maintain the
type I error probability under high dimensional data. So in practice,
we need to decide which version to use according the dimensionality of
our real world problem.

\section{U- and V-process results}
\label{sec:v-process-results}

\noindent
In analyzing the remainder terms in Theorems
\ref{thm:null-normality} and
\ref{thm:bootstrap-consistency-null}, 
we use the theory of U- or V-processes, which we thus discuss in this section.
Let $f \colon \mc{Z}^r \to \RR$ be permutation symmetric in its arguments.  A {\it U-statistic} based on the `kernel' $f$, and our i.i.d.\ sample $W_1, \ldots, W_n \in \mc{Z}$,
is ${n \choose r }^{-1} \sum_\beta f(W_{\beta_1}, \ldots, W_{\beta_r})$ where ${n \choose r} = n! / (n-r)! r!$ is the binomial coefficient (and $i!$ means the $i$th factorial) and the sum is over all ${n \choose r }$ combinations of $r$ non-repeated/distinct elements out of $n$ data points \citep[Chapter 12]{vanderVaart:1998dr}.  The kernel and the U-statistic have degree (sometimes ``order'') $r$.  If instead we sum over all size-$r$-subsets of the $n$ data points, $n^{-r} \sum_{i_1,\ldots,i_r}^n f(W_{i_1}, \ldots, W_{i_r})$ we get a degree $r$ {\it V-statistic}.  U- and V-statistics often have the same asymptotic distributions.  In our present context, we will need to allow $f$ to range over a function class, and so we arrive at so-called {\it U-} or {\it V-processes.}  For us V-processes arise as remainder terms and we wish to show they are asymptotically negligible; we will apply  maximal inequalities to do so.

To start, consider an order 2
U-process 
\begin{equation}
  \label{eq:1115}
  U_n(f):= n^{-3/2} \sum_{1 \le i < j \le n} f(W_i,W_j)  
\end{equation}
(with slight laziness in the normalization since here we will only be interested in order of magnitude)
for $f$ varying over some class $\mc F$ of centered functions, meaning $\PP f(W_1,W_2)=0$.  We assume $f(w_1,w_2) = f(w_2,w_1)$.  (Note: some of the functions $f$ we need to consider will not be symmetric,
but they can and will be symmetrized by considering $f(w_1,w_2) + f(w_2,w_1)$.)  A degenerate process is one such that $\PP f(w, W) = 0$ for almost every $w$ (and all $f \in \mc F$).  Otherwise the process is non-degenerate.  Like U-statistics, U-processes can be decomposed via a H\'ajek or Hoeffding decomposition into an i.i.d.\ process and a degenerate U-process.
For any $f$, define
$(\projf)(w_1,w_2) := f(w_1,w_2) - \PP f(w_1, W_2) - \PP f(W_1,w_2) + \PP f(W_1,W_2)$
(which satisfies $\PP \projf(w,W) =0$).
    Then
    $U_n(f) = U_n(\projf  ) + (n-1)n^{-3/2}(\PP_n \otimes \PP)(f)$
    where $\PP_n \otimes \PP$ is the product measure of $\PP_n$ (the empirical
    measure of the $W_i$'s) and $\PP$ on $\mc Z \times \mc Z$.
    \begin{mylongform}
      \begin{longform}
        (Note: for any
        function $g$ on $\mc W$,
        $\sum_{1 \le i < j \le n} g(W_i) + g(W_j) = (n-1) \sum_{i=1}^n g(W_i)$.)
      \end{longform}
    \end{mylongform}
    We can combine maximal inequalities for i.i.d.\ empirical processes with a maximal inequality for degenerate U-statistics from
    \cite{Nolan:jw}
    to yield a maximal inequality for the entire process.
    This is done in
     Proposition~\ref{lem:100} below.
    The former term generally dominates.
    We modify the result so it fits the details of our setting.
    The following result is
    a    form of 
    Theorem 6 of \cite{Nolan:jw} for (degenerate) U-processes
    (combined with 
    Theorem 2.14.1 of \cite{vanderVaart:1996tf}).
    We do not discuss measurability difficulties here.
    For a class $\mc F$ of symmetric functions (i.e., $f(w_1,w_2) = f(w_2,w_1)$),
    we let $\PP \mc F$ denote $\{ \PP f(\cdot, W) : f \in \mc F\}$.
    Recall the definitions of $J(\cdot, \cdot, L_2)$ and $J_2(\cdot, \cdot, L_2)$
    given in
    \eqref{eq:uniform-entropy-2-integral}.


    \begin{proposition}
      \label{lem:100}
      Assume $\mc F$ is a class of (measurable) functions on a (measure) space
      $\mc W \times \mc W$, with (measurable) envelope $F$.  Assume $f\in \mc F$
      satisfies $f(w_1,w_2) = f(w_2,w_1)$ and $\PP f(W_1,W_2) = 0$.  Assume
      $W_1, \ldots, W_n$ are i.i.d.\ and $U_n$ is defined by \eqref{eq:1115}.  Let
      $F_1(w) $
      be an envelope for $\PP \mc F$.  Then for a
      universal constant $C >0$,
      \begin{equation}
        \PP \| U_n \|_{\mc F} \le C   J_2(1, \mc F, L_2) \sqrt{ \PP F(W_1,W_2)^2 } n^{-1/2}
        +
        C J(1, \PP \mc F, L_2) \sqrt{\PP F_1(W)^2}.
      \end{equation}
    \end{proposition}


    \begin{proof}[Proof of Proposition~\ref{lem:100}]
  For any $f$ defined on $\mc{W}\times\mc{W}$, define
  $(\Pi f)(w_1,w_2):=f(w_1,w_2)-\PP f(w_1,W_2)-\PP f(W_1,w2)+\PP f(W_1,W_2)$
  (which satisfies $\PP\Pi f(w,W)=0$).
  The result follows from the decomposition $U_n(f) = U_n(\projf ) + (n-1)n^{-3/2}(\PP_n \otimes \PP)(f)$ from above.  We apply Theorem 6 of \cite{Nolan:jw} to the first summand which is a degenerate U process.  That theorem states that
  \begin{equation}
    \label{eq:Un-NP-thm6-bound}
    E U_n (\Pi f) \le
    n^{-3/2}
    C E( \tau_n \int_0^{\theta_n/\tau_n} 1+ \log N(\epsilon \tau_n  , \mc F, L_2(T_n)) d \epsilon)
  \end{equation}
  where
  $\tau_n = (T_n F^2)^{1/2}$,  $\theta_n = \sup_{f \in \mc F} (T_n f^2)^{1/2}/4$,
  and where
  $T_n$ is a measure defined on page 782 of \cite{Nolan:jw} which places mass $1$ at $4 n(n-1)$ pairs of data points $(W_i,W_j)$  where we always have $i \ne j$. (Note that  the covering numbers are unchanged under a rescaling of the measure $T_n$.)
  If, in the integral on the
  right side of \eqref{eq:Un-NP-thm6-bound},
  we replace $T_n$ by a generic probability measure (since $N$ is invariant under rescaling of the measure) $Q$ and take a sup over $Q$, upper bound $\theta_n/\tau_n$ by $1$, then the integral becomes $J_2(1,\mc F)$, which can be factored out of the expectation.  The expectation is then just $E \tau_n$;  by Jensen's inequality
  $E (T_n F^2)^{1/2} \le (E T_n F^2)^{1/2} = (4n(n-1))^{1/2} (\PP F^2(W_1,W_2))^{1/2}$, so we
  see the degenerate U-process is bounded by
  $C J_2(1, \mc F) \sqrt{ \PP F(W_1,W_2)^2 } n^{-1/2}.$

  We apply Theorem 2.14.1 of
  \cite{vanderVaart:1996tf} to see that $(n-1)n^{-3/2}(\PP_n \otimes \PP)(f)$
  is upper bounded by $C J(1, \PP \mc F) \sqrt{\PP F_1(W)^2}$, which
  completes the proof.
\end{proof}

In addition to the above result, which applies nicely to order $2$
U-processes, we also rely on results from
\cite{Arcones_Gine_1993} which apply to higher order U-processes
(see also \cite{Sherman_1994}).
We provide a maximal inequality derived from the results of \cite{Arcones_Gine_1993} 
in Proposition~\ref{prop:1}.
We consider U-processes of order $3$ in Lemma~\ref{lem:4}.

Let us start by considering  the term $D_3$ from \eqref{eq:6}.
We can write $D_3 $ as
\begin{equation}
  \label{eq:D3-decomp-main}
  \begin{split}
    \MoveEqLeft \PP_n\{\xi(\bs{Z};\bar{\pi},\bar{\mu})-\widehat{\xi}(\bs{Z};\widehat{\pi},\widehat{\mu})\}\\&=\PP_n\{\xi(\bs{Z};\bar{\pi},\bar{\mu})-\xi(\bs{Z};\widehat{\pi},\widehat{\mu})\}+\PP_n\{\xi(\bs{Z};\widehat{\pi},\widehat{\mu})-\widehat{\xi}(\bs{Z};\widehat{\pi},\widehat{\mu})\}.
  \end{split}
\end{equation}
The second term on right side of \eqref{eq:D3-decomp-main} can be written as an order $2$ V-process, as it equals
\begin{equation*}
  \begin{split}
    \MoveEqLeft \frac{1}{n} \sum_{i=1}^n \ls\frac{Y_i-\widehat{\mu}(\bs{L}_i,A_i)}{\widehat{\pi}(A_i|\bs{L}_i)} \lb\frac{1}{n}\sum_{j=1}^n\widehat{\pi}(A_i|\bs{L}_j)-\int\widehat{\pi}(A_i|\bs{l})\,dP(\bs{l}) \rb\right.\\    &\qquad\left.+\lb\frac{1}{n}\sum_{j=1}^n\widehat{\mu}(\bs{L}_j,A_i)-\int\widehat{\mu}(\bs{l},A_i)\,dP(\bs{l})\rb\rs.
  \end{split}
\end{equation*}
This yields a V-process (indexed by $\pi \in \mc{F}_\pi,$  $\mu \in \mc{F}_\mu$) based on the non-symmetric kernel
\begin{equation*}
  h_1(w_1, w_2)
  \equiv   h_{1, \pi, \mu}(w_1, w_2) 
  := \frac{y_1 - \mu(l_1,a_1)}{\pi(a_1|l_1)}
  \wt \pi(a_1 | l_2) + \wt \mu(l_2,a_1),
\end{equation*}
where 
we let tilde $ \wt{ \cdot } $ operate on any $\mu, \pi$ to
yield $\wt{\pi}(a_1|l_2) := \pi(a_1|l_2) - \PP \pi(a_1|L),$ and 
$\wt\mu(l_2,a_1) := \mu(l_2,a_1) - \PP \mu(L,a_1)$.  In
Lemma~\ref{lem:3}  we show that
this V-process (i.e., the second term on right side of \eqref{eq:D3-decomp-main})
is  $O_p(n^{-1/2})$.  (This is indeed the order that would arise for a degenerate V-statistic (with fixed kernel) of order 2.)
Similarly, V-process terms arise from the other remainder terms.  The outline and explanation of where they arise is given in the proof outline in the main document \cite{drtest-main}, after the statement of Theorem~\ref{thm:null-normality}.  Terms with order $ \ge 3$ are handled by the following proposition.
It uses results from the proof of Theorem~5.2 of \cite{Arcones_Gine_1993}.  Recall that an order $m$ kernel $H(z_1, \ldots, z_m)$ for a U-statistic is maximally degenerate (for measure $P$) if integrating over any one of the $m$ arguments yields an identically zero function.
The definition of $P$-measurable is given in  Definition 2.3.3, page 110, of  \cite{vanderVaart:1996tf}.
Let $U_m^n f :=
\frac{(n-m)!}{n!}
\sum_{i_1 \ne i_2 ... \ne i_m} f(X_{i_1}, \ldots, X_{i_m})$, where $n!$ is the factorial of $n$, based on a $P$-i.i.d.\ sample of $X_i \in \mc{Z}$.
Let $\| f \|_n := (U^n_m f^2)^{1/2}$ and recall the definition of $J_m(\delta, \mc F, L_2) \equiv J_m(\delta, \mc F)$ given in
\eqref{eq:uniform-entropy-2-integral}.
\begin{proposition}
  \label{prop:1}
  Let $\mc{F} $ be a $P$-measurable class of maximally degenerate functions  on a measurable product space $\mc{Z}^m$ with envelope $F$ satisfying $P^mF^2 < \infty$.
  Then 
  \begin{equation*}
    \EE \| n^{m/2} U_m^n f \|_{\mc F} \lesssim
    \EE ( J_m(\theta_n, \mc F) \| F\|_n )
    \lesssim J_m(1, \mc F) (P^mF^{2})^{1/2}
  \end{equation*}
  where   $\theta_n := \| \|f \|_n \|_{\mc F} / \| F \|_{n}$.
\end{proposition}
\begin{proof}
  By the first
  lines of the proof of Theorem~5.2 of \cite{Arcones_Gine_1993}, we have
  \begin{equation*}
    \EE \| n^{m/2} U_m^n f \|_{\mc F_\delta} \lesssim
    \EE \int_0^\infty \lp \log N(\epsilon, \mc{F}, L_2(U_m^n)) \rp^{m/2} d\epsilon
  \end{equation*}
  where for $\delta > 0$, 
  $\mc{F}_\delta:= \{ f-g : f, g \in \mc{F}, e_{P,m}(f,g) \le \delta n^{-m/2(m+1)} \}$
  and $ e_{P,m}(f,g) := \| f- g \|_{L^2(P^m)}$.

  Now replace $\mc{F} $ by $\mc{F} \cup \{0\}$ and so
  \begin{equation*}
    E \| n^{m/2} U_m^n f \|_{\mc{F}} \lesssim \int_0^{\| \|f\|_n \|_{\mc F}} \lp \log 1+ N( \epsilon, \mc{F} , L_2(U_m^n)) \rp^{m/2} d\epsilon,
  \end{equation*}
  by Propositions 2.1 and 2.6 in
  \cite{Arcones_Gine_1993} and the fact that $0 \in \mc{F} \cup \{0\}$ to replace $\mc{F}_\delta$ by $\mc{F} \cup \{0\}$  (see e.g.\ Corollary 2.2.8 of
  \cite{vanderVaart:1996tf}).
  Then do a change of variable to see the previous expression equals
  \begin{equation*}
    \EE
    \lp
    \|F \|_n \int_0^{\theta_n} \log^{m/2} 1 + N(\epsilon \|F\|_n, \mc{F}, L_2(U^n_m))
    \, d\epsilon 
    \rp 
  \end{equation*}
  where
  $\theta_n := \| \|f \|_n \|_{\mc F} / \| F \|_{n}$.
  This gives the first inequality of the  proposition. 
  And then the previous expression is bounded above by
  \begin{equation*}
    E \| F \|_n \sup_Q \int_0^1 (\log^{m/2} 1 + N( \epsilon \|Q \|_2, \mc {F} ,
    L_2(Q))) \, d\epsilon.
  \end{equation*}
  By Jensen's inequality, $E \| F \|_n \le (P^m F^2 )^{1/2}$ which gives the second inequality of the proposition. 
\end{proof}

\newpage

\section{Proof of Theorem~\ref{thm:bootstrap-consistency-null}}

The rough idea of the proof is that the oracle bootstrap is consistent, and then the bootstrap remainder terms (which we show to be asymptotically negligible) can either be dominated by, or analyzed in somewhat analogous fashion to, various remainder terms that arose in the analysis of $T_n$ (not bootstrap); sometimes relying on the fact that symmetrized (multiplier) terms are of the same order of magnitude as the non-symmetrized terms.  Each bootstrap error term has three components, from the three summands in $\wh \xi_i^* = \delta_i ( \wh \xi_i - \PP_n \wh \xi) + \PP_n \wh \xi$. 
So the details are somewhat lengthy; we break the argument up into 6 steps again.  The lengthiest computation is (again) in Step 6. 

\begin{proof}[Proof of Theorem~\ref{thm:bootstrap-consistency-null}]
  Without loss of generality we take $w(a) \equiv 1$, which does not substantively modify the proof. 
  Recall the setup: we let
  \begin{equation*}
    T^*_n := n \sqrt{h} \int_{\mc A} \lp \wh \theta^*(a) - \PP_n^* \wh \xi^* \rp^2 da,
  \end{equation*}
  where $\PP_n^*$ is the `empirical' measure of $\lb (\delta_i, \bs Z_i) \rb_{i=1}^n$, with $\delta_i$ being i.i.d.\ Rademacher variables independent of $\{ \bs Z_i \}_i$.  We let
  \begin{align*}
    \wh \epsilon_i & := \widehat{\xi}(\bs{Z}_i;\widehat{\pi},\widehat{\mu}) -
                     \sum_{j=1}^n\widehat{\xi}(\bs{Z}_j;\widehat{\pi},\widehat{\mu})/n
                     = \wh \xi_i - \PP_n \wh \xi,
     \\
    \wt \epsilon_i & :=   \xi(\bs Z_i; \overline \pi, \overline \mu) - \sum_{j=1}^n \xi(\bs Z_j; \overline \pi, \overline \mu) / n
                     = \wt \xi_i - \PP_n \wt \xi,
  \end{align*}
  and $\wh \epsilon_i^* = \delta_i \wh \epsilon_i $, and we also let
  $\wt \epsilon_i^* := \delta_i \wt \epsilon_i$.
   We let
  \begin{equation*}
    \wh \xi_i^*  := \wh \epsilon_i^* + \PP_n \wh \xi(\bs Z),
    \qquad     \text{ and } \qquad
    \wt \xi_i^*  := \wt \epsilon_i^* + \PP_n \wt \xi(\bs Z).
  \end{equation*}
  For notational ease let $W_{ha}(A) \equiv W_h(A-a) := g_{ha}^T \wh D_{ha}^{-1} g_{ha}(A) K_{ha}(A).$
  We let
  \begin{align*}
    \wh \theta^*(a) & := \PP_n^* W_h(A-a) (\delta ( \wh \xi - \PP_n \wh \xi) + \PP_n \wh \xi)
                      = \PP_n^* W_{ha}(A) \wh \xi_i, \\
    \wt \theta^*(a) & := \PP_n^* W_h(A-a) (\delta ( \wt \xi - \PP_n \wt \xi) + \PP_n \wt \xi)
                      = \PP_n^*  W_{ha}(A) \wt \xi_i.
  \end{align*}
  We can now decompose $T_n^*$ (as we decomposed $T_n$ in the proof of Theorem~\ref{thm:null-normality}), writing
  \begin{equation*}
    \wh \theta^*(a) - \PP_n \wh \xi^*
    = D_1^*(a) + D_2^*(a) + D_3^*,
  \end{equation*}
  with
  \begin{equation*}
    D_1^*(a) := \wh \theta^*(a) - \wt \theta^*(a),
    \quad
    D_2^*(a) := \wt \theta^*(a) - \PP_n^* \wt \xi^*,
    \;    \text{ and }     \;
    D_3^* := \PP_n^* \wt \xi^* - \PP_n^* \wh \xi^*.  
  \end{equation*}
  We consider the terms $\int (D_1^*)^2(a) da$, $\int D_1^*(a) D_2^*(a) da$, $\int (D_2^*)^2(a) da$, $\int D_1^*(a) D_3^*(a) da$, $\int D_2^*(a) D_3^*(a) da$, and $(D_3^*)^2$. Except for the $(D_2^*)^2$ term, we will show the rest are $o_p( 1 / \sqrt{n \sqrt{h}})$ (unconditionally, since no conditional argument is needed for negligible remainder terms by the definition of convergence in probability to $0$ in the (Dudley) metric $d(\cdot,\cdot)$).

  \bigskip
  \noindent \textbf{Step 1 ($(D_2^*)^2$).} \;   By
  Theorem 2 of \citet{Hardle:1993ih} (in combination with the proof of Theorem 2.1 of
  \citet{alcala1999goodness}), we have $d( \mc{L}^*( n \sqrt{h}  \int_{\mc A} (D_2^*(a))^2 da), \mc{L}( N(b_h, V))) \to_p 0$ as $n \to \infty$.

  \begin{mylongform}
    \begin{longform}

      I guess technically they prove this for the centering at $\wh \theta$ not at $\PP_n \wh \xi$.  But under the null the proof will go through.  Under the alternative I guess less clear. 
      
    \end{longform}
  \end{mylongform}

  \bigskip
  \noindent \textbf{Step 2 ($(D_3^*)^2$).}  \; We have
  $D_3^* = \PP_n^* ( \wh \xi^* - \wt \xi^*)$ which equals
  \begin{equation*}
    (\PP_n \wh \xi - \PP_n \wt \xi)
    + \PP_n^* \delta ( \wh \xi - \wt \xi +  \PP_n (\wh \xi - \wt \xi)).
  \end{equation*}
  Note that $\PP_n \wh \xi - \PP_n \wt \xi = D_3$, which was shown in the proof of Theorem~\ref{thm:null-normality} to be $o_p( 1 / \sqrt{n\sqrt{h}})$.  This also shows that $\PP_n^* \delta (\PP_n \wh \xi - \PP_n \wt \xi ) = o_p( 1 / \sqrt{n\sqrt{h}}) O_p(n^{-1/2})
  = o_p( 1 / \sqrt{n\sqrt{h}})$.

  It remains to analyze $\PP_n^* \delta( \wh \xi - \wt \xi)$, the symmetrized version of $D_3$.
  As in the analysis of $D_3$, we decompose this into an empirical process and a V-process.  The symmetrized V-process that arises is seen to be $O_p(n^{-1/2})$ because, by
    \cite{Arcones_Gine_1993} (see page 1509 and the argument there),
  it is of the same order of magnitude as the non-symmetrized V-process studied (and seen to be $O_p(n^{-1/2})$) in
  Lemma~\ref{lem:3}.

  The empirical process term is then broken into two terms, where we write
  \begin{equation*}
    \begin{split}
      \MoveEqLeft 
      \PP_n^* \delta (\xi(\bs Z; \overline \eta) - \xi(\bs Z; \wh \eta)) 
      =
      \inv{n}  \sum_{i=1}^n \delta_i  \ls
      (\xi(\bs Z; \overline \eta) - \xi(\bs Z; \wh \eta)) -
      \PP (\xi(\bs Z; \overline \eta) - \xi(\bs Z; \wh \eta)) \rs \\
      & \hspace{3.2cm} +
      \inv{n}  \sum_{i=1}^n \delta_i  
      \PP (\xi(\bs Z; \overline \eta) - \xi(\bs Z; \wh \eta))       
    \end{split}
  \end{equation*}
  (recall $\eta$ represents the two nuisance parameters).
  The first term is centered and so Lemma 2.3.6 of \cite{vanderVaart:1996tf} (which gives moment bounds of  a Rademacher symmetrized mean zero process in terms of the unsymmetrized process)
  applies, which, together with the analysis of $D_3$ in the proof of
  Theorem~\ref{thm:null-normality} (see Lemma~\ref{lem:step2-1}), shows that the first sum on the right in the display above is
  $O_p(n^{-1/2} + s_n^\infty r_n^\infty)$.  The remaining term,
  $\sum_{i=1}^n n^{-1} \delta_i \PP (\xi(\bs Z; \overline \eta) - \xi(\bs
  Z; \wh \eta))$,  is 
  $o_p( r_n^\infty s_n^\infty ) O_p(n^{-1/2}) = o_p( (n h^{1/2})^{-1/2}) O_p(n^{-1/2})$, 
  again
  by (the proof of) Lemma~\ref{lem:step2-1}.

  \bigskip   \noindent \textbf{Step 3 ($(D_1^*)^2$).} \;  We can decompose $D_1^*(a) = \wh \theta^*(a) - \wt \theta^*(a)$ into the sum of the following three summands:
  \begin{align}
    \PP_n^* W_{ha}(A) \delta (\wh \xi - \wt \xi), \label{eq:9}\\
    - \PP_n^* (W_{ha}(A) \delta  \PP_n( \wh \xi - \wt \xi )) ,\label{eq:10} \\
    \PP_n^* W_{ha}(A) ( \PP_n( \wh \xi - \wt \xi )) = \PP_n( \wh \xi - \wt \xi ) . \label{eq:11}
  \end{align}

  The term in \eqref{eq:11} is negligible from Step 2 of the proof of Theorem~\ref{thm:null-normality}.  The term \eqref{eq:10} is similarly negligible.
  
  Finally we consider \eqref{eq:9}.  Like in the proof of
  Theorem~\ref{thm:null-normality}
  we break this into a
  (standard) empirical process term and a V-process term. For the former
  term, we apply Lemma 2.3.6 of \cite{vanderVaart:1996tf}, which bounds the
  expectation of the empirical process implied by \eqref{eq:9} by the
  expectation of the empirical process implied by
  $\PP_n W_{ha}(A) (\wh \xi - \wt \xi)$, which was shown to be negligible in
  Step 3 of the proof of
  Theorem~\ref{thm:null-normality}. 
  Analogously, for the V-process term, we use results of
  \cite{Arcones_Gine_1993} (see page 1509 and the argument there), together
  with the argument in Step 3 of the proof of
  Theorem~\ref{thm:null-normality} 
  for the analogous
  V-process (without the $\delta$ multiplier) to see that term is also
  negligible.  Thus, we conclude that  \eqref{eq:9} is negligible.

  \bigskip \noindent\textbf{Step 4 ($D_2^* D_3^*$).} \; From Part 2, we have that $D_3^* = o_p( 1 / \sqrt{n \sqrt{h}})$, so to analyze $D_3^* \int_{\mc A} D_2^*(a) da$ it remains to analyze $\int_{\mc A} D_2^*(a) da$.  This reduces to analyzing $\int_{\mc A} (\wt \theta^*(a) - \PP \wt \xi) da$.  By the proof of Lemma~\ref{lem:d2d3}, $\int ( \wt \theta^*(a) - \PP( \wt \xi)) da = n^{-1} \sum_{i=1}^n \delta_i \lp \varpi_0^{-1}(A_i) + O(h) \rp (\wt \xi_i - \PP( \wt \xi)).$ Then by the (classical) Central Limit Theorem, we conclude that $\int_{\mc A} D_2^*(a)da = O_p(n^{-1/2})$, and
  thus $D_3^* \int_{\mc A} D_2^*(a) da = o_p(1/ n h^{1/4})$.
  
  \bigskip \noindent
  \textbf{Step 5 ($D_1^* D_3^*$).} \;  We can see that $D_3^* \int_{\mc A} D_1^*(a) da = o_p( 1 / \sqrt{n \sqrt{h}})$ by the results of Step 2 and Step 3. 

  \bigskip
  \noindent\textbf{Step 6 ($D_1^* D_2^*$).} \;  We now analyze $\int_{\mc A} D_1^*(a) D_2^*(a) da.$  Write $\wt \theta_h^*(a) - \PP_n^* \wt \xi^* = \wt \theta_h^* - \PP \wt \xi + \PP \wt \xi - \PP_n^* \wt \xi^*$.  We have $\PP_n^* \wt \xi^* = n^{-1} \sum_{i=1}^n ( \delta_i ( \wt \xi_i^* - \wt \theta_h(A_i)) + \PP_n \wt \xi)$ where $\PP_n \wt \xi \to \PP \wt \xi$ almost surely.  Let $\EE^* $ and $\text{Var}^*$ denote mean and variance conditional on the data (so, averaging over $\{ \delta_i \}_i$).  Then
  \begin{align*}
    \EE^* ( n^{-1}  \sum_{i=1}^n \delta_i ( \wt \xi_i -  \PP_n \wt \xi )) & = 0,\\
    \text{Var}^* \lp n^{-1}  \sum_{i=1}^n \delta_i ( \wt \xi_i - \PP_n \wt \xi ) \rp & = n^{-2} \sum_{i=1}^n ( \wt \xi_i - \PP_n \wt \xi )^2.
  \end{align*}
  We can see that $n^{-2} \sum_{i=1}^n ( \wt \xi_i - \PP_n \wt \xi)^2 \to_p \Var(\wt \xi ( \bs Z))$ as $n \to \infty$ under $H_0$.  Thus, $ \PP_n^* \wt \xi^* - \PP \wt \xi$ reduces to $\PP_n \wt \xi - \PP \wt \xi = D_3$ which by the proof of Theorem~\ref{thm:null-normality} is $o_p( 1 / \sqrt{n \sqrt{h}})$.
  
  Thus as in Step 5 we consider
  \begin{equation}
    \label{eq:16}
    \int_{\mc A} ( \wh \theta_h^*(a) - \wt \theta_h^*(a)) ( \wt \theta^*_h(a) - \PP \wt \xi) da,  
  \end{equation}
  which, after ignoring the $\int_{\mc A} (\PP_n (\wh \xi - \wt \xi)) ( \wt \theta^*_h(a) - \PP \wt \xi) da= D_3 \int_{\mc A} D_2^*(a)da$ summand which has already been shown to be negligible (in Step  4), equals
  \begin{equation}
    \label{eq:15}
    \int_{\mc A} n^{-1} \sum_{i=1}^n W_{ha}(A_i) \delta_i ( \wh \xi_i - \wt \xi_i +
    \PP_n( \wh \xi - \wt \xi) ) 
    ( \wt \theta_h^*(a) - \PP \wt \xi) da. 
  \end{equation}
  The term
  \begin{equation*}
    \int_{\mc A} n^{-1} \sum_{i=1}^n W_{ha}(A_i) \delta_i ( \wh \xi_i - \wt \xi_i) ( \wt \theta^*_h(a) - \PP \wt \xi ) da
  \end{equation*}
  is again decomposed (analogously to the decompositions in \eqref{eq:D3-decomp},
  \eqref{eq:102-extra});  we get two V-processes, of second and third order, as in Step 6 of
the proof  of Theorem~\ref{thm:null-normality}. The two terms are handled in a similar fashion as  previously, using Proposition~\ref{prop:1}.

  
  Then the remaining term,
  \begin{equation*}
    \int_{\mc A} n^{-1} \sum_{i=1}^n W_{ha}(A_i) \delta_i 
    (\PP_n( \wh \xi - \wt \xi))
    ( \wt \theta^*_h(a) - \PP \wt \xi) da     ,
  \end{equation*}
  can be studied in analogous fashion to the study of $\int_{\mc A} D_2(a) da$ in Lemma~\ref{lem:d2d3}, except that a V-statistic arises instead of an average, as follows.  We write the term in the above display as $ D_3 \int_{\mc A} n^{-1} \sum_{i=1}^n W_{ha}(A_i) \delta_i ( \wt \theta^*_h(a) - \PP \wt \xi) da $, which is
  \begin{equation*}
    D_3
    n^{-2}    \sum_{i=1}^n \int_{\mc A} W_{ha}(A_i) \delta_i
    \lp \sum_{j=1}^n W_{ha}(A_j) (\delta_j \wt \epsilon_j + \PP \wt \xi) - \PP \wt \xi \rp
    \, da.
  \end{equation*}
  By the property of local polynomial equivalent kernels (see (3.12) page 63
  \cite{Fan:1996tk}) that says that they preserve identically polynomials, we have
  $ \sum_{j=1}^n W_{ha}(A_j) \PP \wt \xi = \PP \wt \xi$ and so the above display reduces to
  \begin{align*}
    D_3
\lp     n^{-2}    \sum_{i=1}^n \sum_{j=1}^n \delta_i \delta_j \wt \epsilon_j \int_{\mc A} W_{ha}(A_i) 
    W_{ha}(A_j) 
    \, da \rp.
  \end{align*}
  The integral $ \int_{\mc A} W_{ha}(A_i) 
    W_{ha}(A_j) 
    \, da$ is of order $1/h$ (use the representation of the first order equivalent kernel, page 63
    \cite{Fan:1996tk}, and do a change of variables, say $u = (A_i - a)/h$).  This is sufficient to conclude that the V-statistic (not process) term inside parentheses in the above display is order $1 / nh$ \citep[Chapter 12]{vanderVaart:1998dr}.  Multiplying by $D_3 = O_p( r_n^\infty s_n^\infty) =
    o_p( (n h^{1/2})^{-1/2})$,     we see that the entire term is negligible. 
  \begin{mylongform}
    \begin{longform}
      One can then consider $\sum_j \delta_j \wt \epsilon_j V(A_i,A_j)/h$ with $V$ (the integral) as a order $1/h$ kernel and compute variance and bias to reduce a sqrt h away, i guess.
    \end{longform}
  \end{mylongform}
  This was the last term in Step 6 and so we have completed the proof.
\end{proof}

\section{Proof of Theorem~\ref{thm:bootstrap-consistency-null-2}}

Here we provide (only) the modifications to the proof of Theorem \ref{thm:bootstrap-consistency-null} needed to prove Theorem \ref{thm:bootstrap-consistency-null-2}.  The difference is in the definition of the $\epsilon$ variables, which are now defined to be
  \begin{align*}
    \wh \epsilon_i & := \widehat{\xi}(\bs{Z}_i;\widehat{\pi},\widehat{\mu}) -
                    \wh \theta_h(A_i)
    \\
    \wt \epsilon_i & :=   \xi(\bs Z_i; \overline \pi, \overline \mu) -
                     \wt \theta_h(A_i).
  \end{align*}

  \begin{proof}[Proof of Theorem~\ref{thm:bootstrap-consistency-null-2}]
    Again the proof is divided  into 6 steps. 
    
    \bigskip
    \noindent \textbf{Step 1 ($(D_2^*)^2$).} \; As previously, by Theorem 2 of \citet{Hardle:1993ih} (in combination with the proof of Theorem 2.1 of \citet{alcala1999goodness}), we have $d( \mc{L}^*( n \sqrt{h} \int_{\mc A} (D_2^*(a))^2 da), \mc{L}( N(b_h, V))) \to_p 0$ as $n \to \infty$.

  \bigskip
  \noindent \textbf{Step 2 ($(D_3^*)^2$).}  \; We have
  $D_3^* = \PP_n^* ( \wh \xi^* - \wt \xi^*)$ which equals
  \begin{equation*}
    (\PP_n \wh \xi(\bs Z) - \PP_n \wt \xi(\bs Z))
    + \PP_n^* \delta ( \wh \xi(\bs Z) - \wt \xi(\bs Z) +  \wh \theta_h(A) - \wt \theta_h(A)).
  \end{equation*}
  The term $\PP_n^* \delta (\wh \theta_h(A) - \wt \theta_h(A) )$ is the only one that is not analyzed and shown to be negligible in the proof of
  Theorem~\ref{thm:bootstrap-consistency-null}.
  Recall that (by definition)  $D_1(A) =  \wh \theta_h(A) - \wt \theta_h(A)$ and we can use the decomposition we have used previously for $D_1$, namely,
  \begin{equation}
    \label{eq:5}
    \begin{split}
      D_1(A_i)
      = \inv{n} \sum_{j=1}^n W_h(A_i-A_j) (\wt \xi_j - \wh \xi_j )
      &=
      d_{1,1}(A_i) + d_{1,2}(A_i) \\
      &      =
      d_{1,1}(A_i) +
      R_{n,1, A_i} + R_{n,2, A_i}
    \end{split}
  \end{equation}
  (recall the definitions of $d_{1,1}$, $d_{1,2}$ in \eqref{eq:13} and the definitions of $R_{n,i, a}$ in \eqref{eq:17}).  (Again, because of the integral over $\mc{A}$, we cannot analyze the terms separately.)  In Lemmas \ref{lem:5} and \ref{lem:step3-1} we have shown $\sup_{a \in \mc{A}}$ bounds for $d_{1,1}(a) $ and $R_{n,2,a}$, and those thus show that the corresponding sums here, $\PP_n^* \delta d_{1,1}(A)$ and $\PP_n^* \delta R_{n,2,A}$, are asymptotically negligible.  For $R_{n,1,A_i}$, the treatment now is slightly different than in Lemma \ref{lem:step3-1}.  Previously $R_{n,1, a}$ was treated as an empirical (order 1 V-) process whereas now we treat $\PP_n^* \delta R_{n,1,A}$ as an order 2 (degenerate) V-process.  The envelope and entropy calculations from the proof of Lemma \ref{lem:step3-1} still apply almost verbatim (the presence of $\delta$ changes neither), and then we apply Proposition~\ref{lem:100} (or Proposition~\ref{prop:1}).  We thus conclude that $D_3^* = o_p( 1 / \sqrt{n \sqrt{h}})$.

  \bigskip   \noindent \textbf{Step 3 ($(D_1^*)^2$).} \;  We consider $\int_{\mc{A}} D_1^*(a)^2 da$; we can decompose $D_1^*(a) = \wh \theta^*(a) - \wt \theta^*(a)$ into the sum of the following three summands:
  \begin{align}
    \PP_n^* W_{ha}(A) \delta (\wh \xi - \wt \xi), \label{eq:9}\\
    - \PP_n^* (W_{ha}(A) \delta  (\wh \theta_h(A) - \wt \theta_h(A))) ,\label{eq:10} \\
    \PP_n^* W_{ha}(A) (\wh \theta(A) - \wt \theta(A)), \label{eq:11}
  \end{align}
  and need to analyze their squared integrals (which suffices by Cauchy-Schwarz). 
  The term \eqref{eq:10} is the one that changed from the previous proof. The integral of that term squared is 
  \begin{equation}
    \label{eq:23}
    \int_{\mc A}
    \ls
    n^{-2}    \sum_{i=1}^n
    \big(
    \sum_{j=1}^n W_h(A_i-A_j) (\wt \xi_j - \wh \xi_j )
    \big)
    W_{ha}(A_i) \delta_i
    \rs^2
    \, da
    .    
  \end{equation}
  The term \eqref{eq:23} is asymptotically negligible (is $o_p( 1 / n \sqrt{h})$) which follows a somewhat similar recipe as used in the previous step.  We again use a decomposition as in \eqref{eq:5} for $D_1(A) = \wh \theta_h(A) - \wt \theta_h(A)$.  Since $(b+c)^2 \le \max(4b^2, 4c^2)$, it suffices to bound each of
  \begin{align}
    & \int_{\mc A} \ls \PP_n^* (  W_{ha}(A) \delta d_{1,1}(A)) \rs^2 \, da
    \label{eq:T1} \\
    &    \int_{\mc A} \ls \PP_n^* (  W_{ha}(A) \delta R_{n,1,A}) \rs^2 \, da ,
    \label{eq:T2} \\
    &    \int_{\mc A} \ls \PP_n^* (  W_{ha}(A) \delta R_{n,2,A}) \rs^2 \, da .
      \label{eq:T3} 
  \end{align}
  The handling of these three terms is somewhat analogous to what was done in the previous step.  Again, by the proofs of Lemmas \ref{lem:5} and \ref{lem:step3-1}, we have $\sup_{a \in \mc{A}}$ bounds for $|d_{1,1}(a)|$ and $|R_{n,2,a}|$ of order $o_p( 1 / \sqrt{n \sqrt{h}})$ which allows us to see that \eqref{eq:T1} and \eqref{eq:T3} are also $ O_p(1/ nh) o_p( 1 / n \sqrt{h}) = o_p( 1 / n \sqrt{h})$.

  For the term \eqref{eq:T2}, we slightly extend the proof of Lemma \ref{lem:step3-1} (to include the $W_{ha}(A) \delta$ terms in the corresponding function class and treat the term as an order 2 (degenerate) V-process) to see that the term inside the integrand has finite second moment and thus (by interchanging the integral and expectation) that the entire term is negligible.

  \bigskip \noindent\textbf{Step 4 ($D_2^* D_3^*$).} \; No new argument is needed here (since we rely in part on the result from Part 2 of the current proof, and because $D_2^*$ is defined via oracle values).

  \bigskip \noindent
  \textbf{Step 5 ($D_1^* D_3^*$).} \;  Again, we can see that $D_3^* \int_{\mc A} D_1^*(a) da = o_p( 1 / \sqrt{n \sqrt{h}})$ by the results of Step 2 and Step 3. 

  \bigskip
  \noindent\textbf{Step 6 ($D_1^* D_2^*$).} \;  We now analyze $\int_{\mc A} D_1^*(a) D_2^*(a) da.$
  The analysis introduces a fourth order V-process and so we rely on the finiteness of the $J_4$ entropy term.
  As in Step 6 of the proof of   Theorem~\ref{thm:bootstrap-consistency-null}
  we consider
  \begin{equation}
    \label{eq:16}
    \int_{\mc A} ( \wh \theta_h^*(a) - \wt \theta_h^*(a)) ( \wt \theta^*_h(a) - \PP \wt \xi) da,  
  \end{equation}
  which is again reduced to 
  \begin{equation}
    \label{eq:15}
    \int_{\mc A} n^{-1} \sum_{i=1}^n W_{ha}(A_i) \delta_i ( \wh \xi_i - \wt \xi_i - (\wh \theta_h(A_i) - \wt \theta_h(A_i) )) ( \wt \theta_h^*(a) - \PP \wt \xi) da. 
  \end{equation}
  The only  term not already handled in
  the proof of Theorem~\ref{thm:bootstrap-consistency-null} is 
  \begin{equation*}
    \int_{\mc A} n^{-1} \sum_{i=1}^n W_{ha}(A_i) \delta_i ( \wt \theta_h(A_i) - \wh \theta_h(A_i))
    ( \wt \theta^*_h(a) - \PP \wt \xi) da     ,
  \end{equation*}
  which (somewhat in parallel to \eqref{eq:23})
  is reduced to (after decomposing $\wt \theta^*_h(a) - \PP\wt \xi = \PP_n^* \ls W_{ha}(A) ( \delta ( \wt \xi - \wt \theta_h(a)) + \wt \theta_h(a)) \rs - \PP \wt \xi$ and ignoring the smaller $\wt \theta_h(a) - \PP \wt \xi$ term) the expression
  \begin{equation}
    \label{eq:18}
    n^{-3}    \sum_{i=1}^n
    \ls
    \big(
    \sum_j W_h(A_i-A_j) (\wt \xi_j - \wh \xi_j )
    \big)
    \int_{\mc A}  W_{ha}(A_i) \delta_i
    \big(
    \sum_k W_{ha}(A_k) \delta_k \wt \epsilon_k
    \big)
    \, da \rs
    .    
  \end{equation}
  We again use the decomposition given in \eqref{eq:5}.
  Previously, $d_{1,1}$ yielded V-processes (one for each of $\pi,\mu$), $R_{n,1,a}$ corresponded to an empirical process expression and $R_{n,2,a}$ corresponded to a `second order remainder' type of term (handled in different ways previously).  From
   \eqref{eq:5} we decompose \eqref{eq:18} into a sum of three terms,
  \begin{align}
    &    n^{-2}    \sum_{i=1}^n
    \ls
    d_{1,1}(A_i)
    \int_{\mc A}  W_{ha}(A_i) \delta_i
    \big(
    \sum_k W_{ha}(A_k) \delta_k \wt \epsilon_k
    \big)
    \, da \rs, \label{eq:19} \\
    & n^{-2}    \sum_{i=1}^n
    \ls
    R_{n,1, A_i}
    \int_{\mc A}  W_{ha}(A_i) \delta_i
    \big(
    \sum_k W_{ha}(A_k) \delta_k \wt \epsilon_k
    \big)
    \, da \rs, \qquad \text{ and }  \label{eq:20} \\
    & n^{-2}    \sum_{i=1}^n
    \ls
    R_{n, 2 , A_i}
    \int_{\mc A}  W_{ha}(A_i) \delta_i
    \big(
    \sum_k W_{ha}(A_k) \delta_k \wt \epsilon_k
    \big)
    \, da \rs. \label{eq:21}
  \end{align}
  The terms \eqref{eq:20} and \eqref{eq:21} are handled  in a fashion analogous to the handling of the ``$R_{n,1,a},$'' and ``$R_{n,2,a}$'' type terms previously (see Lemmas \ref{lem:step3-1} and \ref{lem:step5-decompose-term2})
  although with one more summation so yielding V-processes of order 3 and 2 with modified/extended function classes from the ones considered previously.

  The term \eqref{eq:19} (as in previous analyses of the ``$d_{1,1}$ term''), is a sum of 
  V-processes for $\mu$ and for $\pi$ which are handled similarly.  Consider the case of $\mu$ (with the $\pi$ term being analogous).  Let $\bs W_i= (\delta_i, \bs Z_i)$ and then define the V-process `kernel'
  $H \equiv H_\mu$ by
  \begin{equation*}
    H(\bs W_1, \ldots , \bs W_4)
    := \int_{\mc A} \wt \mu(\bs L_4, A_2) \delta_1 W_{ha}(A_1) W_h(A_1-A_2) \delta_3
    W_{ha}(A_3) \wt \epsilon_3 da,
  \end{equation*}
  where recall $\wt \epsilon := \xi(\bs Z; \overline \eta) - \PP \xi(\bs Z; \overline \eta)$.
  (Recall
  we let  tilde $ \wt{ \cdot  } $  operate on  any $\mu, \pi$ to yield
  $\wt{\pi}(a_1|l_2) := \pi(a_1|l_2) - \PP \pi(a_1|L),$
  and similarly $\wt\mu(l_2,a_1) := \mu(l_2,a_1) -  \PP \mu(L,a_1)$.)

  Recall the discussion of V- or U-statistics and -processes from Section~\ref{sec:v-process-results};
  We do not repeat that discussion here, except to recall the intuition that a V-statistic of order $k$ that is maximally degenerate (i.e., degenerate of degree $k-1$ meaning that averaging over any 1 argument yields $0$) and with a kernel not changing with $n$ can be expected to be of order $n^{-k/2}$ \citep[Chapter 12]{vanderVaart:1998dr}.
  V-statistics or V-processes have a so-called Hoeffding decomposition, and
  the least degenerate term generally governs the size of a V-statistic.  Now we let $G_{1, S} \equiv G_{1,S; \mu}$, for a set $S \subset \{1,2,3,4\}$ be the function given by averaging over independent copies of the $\{1,2,3,4\} \setminus S$ variables in $H$, where the (slightly redundant) $1$ indicates $S$ has size $1$.  So, e.g., $G_{1,\{1\}}(w_1) := \PP^3 H(w_1, \bs W_2, \bs W_3 , \bs W_4)$.  Then it is quick to check that
  \begin{equation*}
    G_{1, \{i\}} \equiv 0 \text{ for } i \in \{1,2,3,4\}.
  \end{equation*}
  Similarly, because averaging over any of the 1,3,4 coordinates yields $0$ (because of say $\delta_1, \delta_3, $ $\wt \mu(\bs L_4, A_2)$ (where $\wt \mu$ is centered on its first argument), $G_{2, \{i_1, i_2\}} \equiv 0$ (for any $i_1 \ne i_2$).
  Then $G_{3, \{i_1,i_2,i_3\}}$ is only nonzero for $\{i_1,i_2, i_3 \} = \{1,3,4\}$ in which case
  \begin{equation*}
    \begin{split}
      \MoveEqLeft
      G_{3 , \{1,3,4 \}}(w_1, w_3, w_4 ) \\
      & = \delta_3 \wt \epsilon_3 \delta_1 ( \PP \wt \mu(l_4, A_2) W_{ha}(A_2))
      \int_{\mc A} W_{h}(a-a_1) W_{h}(a-a_3) da.
    \end{split}
  \end{equation*}
  Thus, the only nondegenerate terms to consider arise from the (order 3, maximally degenerate) kernel $G_{3, \{1,3,4 \}}(w_1, w_3, w_4)$
  and the (order 4, maximally degenerate) kernel $(w_1,w_2,w_3,w_4) \mapsto H(w_1,w_2,w_3,w_4) - G_{3, \{1,3,4 \}}(w_1, w_3, w_4).$
  Both are maximally degenerate (averaging over any coordinate yields the zero function).
  (Typically, if the kernel $H$ did not depend on $n$ (via $h \equiv h_n$), the orders of magnitude would thus be $O_p(n^{-3/2})$ and $O_p(n^{-4/2})$; as we will see, indeed the first term is of larger size.)

  We can see that (with a change of variables) $\PP \wt \mu( l_4, A_2) W_h(a_1-A_2) \le 2 \| \varpi_0 \|_\infty \| \mu \|_\infty$ where $\| \mu \|_\infty$ is uniformly bounded by assumption, and where the right side does not depend on $h$.  The term $ \int_{\mc A} W_{h}(a-a_1) W_{h}(a-a_3) da$, after change of variables, is seen to be of order $O(1/h)$.  Thus we have an envelope for the class of functions $ G_{3 , \{1,3,4 \}}(w_1, w_3, w_4 )$ of size $O(1/h)$.  The class of functions $\{ H \}$ has an envelope of size $O(1/h^2)$ (after just doing one change of variables). 

  We now note that $J_4( 1, \mc{H}, L_2) < \infty$ where $\mc{H} := \{ H_\mu : \mu \in \mc{F}_\mu \}$; by Lemma~\ref{lem:integral-decreases-entropy} (and Lemma~\ref{lem:andrews-add}) this will also show that $J_3(1, \{ G_{3, \{ 1,3,4\}; \mu} : \mu \in \mc{F}_\mu \}, L_2) < J_4 \lp 1, \{ G_{3, \{ 1,3,4\}; \mu} : \mu \in \mc{F}_\mu \}, L_2 \rp < \infty$.
  That $J_4( 1, \mc{H}, L_2) < \infty$ follows from the assumption that $J_4( 1, \mc{F}_\mu, L_2) < \infty$ (i.e., by    Assumption~\ref{assm:EA-item2}$_4$) and from the preservation arguments we have made previously (see e.g., the proof of Lemma~\ref{lem:4}).
Thus the larger term is $O_p( 1 / hn^{3/2})$ (and the smaller bounded by $O_p( 1/ h^2n^2)$).
\end{proof}




\bibliographystyle{plainnat}
\bibliography{full}


\end{document}